\theoremstyle{plain}
\newtheorem{theorem}{Theorem}
\newtheorem{lemma}[theorem]{Lemma}
\newtheorem{proposition}[theorem]{Proposition}
\newtheorem{fact}[theorem]{Fact}
\theoremstyle{definition}
\newtheorem{definition}[theorem]{Definition}
\newtheorem{example}[theorem]{Example}
\newtheorem{problem}[theorem]{Problem}
\newtheorem{remark}[theorem]{Remark}
\newtheorem{observation}[theorem]{Observation}
\newcommand{\ie}{i.e.}
\newcommand{\eg}{e.g.}
\newcommand{\dd}{\,\mathrm{d}}
\newcommand{\ee}{\mathop{\mathbb{E}}}
\newcommand{\etc}{etc.}
\newcommand{\Span}{\textnormal{span}}
\newcommand{\MO}{\mathsf{MO}}
\newcommand{\class}[1]{\mathsf{#1}}
\newcommand{\Tr}{\textnormal{Tr}}
\newcommand{\poly}{\mathrm{poly}}
\newcommand{\polylog}{\mathrm{polylog}}
\newcommand{\Dens}{\mathfrak{D}}
\newcommand{\U}{\mathsf{U}}
\newcommand{\cptp}{\mathsf{CPTP}}
\newcommand{\partition}[2]{\vdash_{#1}{#2}}
\newcommand{\haar}{\textnormal{Haar}}
\newcommand{\hil}{\mathcal{H}}
\newcommand{\ml}[1]{\mathcal{#1}}
\newcommand{\mb}[1]{\mathbb{#1}}
\newcommand{\bo}{\mathcal{B}}
\renewcommand{\Ket}[1]{| #1 \rangle \! \rangle}
\begin{document}

\title{Resource quantification for programming low-depth quantum circuits}

\author{Entong He}
\orcid{0009-0001-3911-0966}
\email{ethe@cs.hku.hk}
\author{Yuxiang Yang}
\orcid{0000-0002-0531-8929}
\affiliation{
QICI Quantum Information and Computation Initiative, School of Computing and Data Science,\\ The University of Hong Kong, Pokfulam Road, Hong Kong SAR, China
}
\email{yuxiang@cs.hku.hk}

\keywords{quantum gate programming, shallow-circuit, NISQ, quantum learning, quantum resource theory}

\maketitle

\begin{abstract}
  Noisy intermediate-scale quantum (NISQ) devices pave the way for implementing quantum algorithms that offer quantum advantages over their classical counterparts. Due to the intrinsic noise and decoherence in the physical system, NISQ machines are naturally modeled as large-scale, low-depth quantum circuits. 
  In practice, executing such circuits requires sending program states that encode the relevant instructions to a programmable quantum computer, typically through a cloud service. Existing programming approaches designed for generic unitary transformations are computationally inefficient in the low-depth setting, and therefore remain unsatisfactory. As such, to realize NISQ algorithms, it is crucial to find an efficient way to program low-depth circuits as the number of qubits $N$ increases. Here, we investigate the circuit complexity and the size of quantum memory, known as the program cost, required to program low-depth brickwork circuits. 
  We establish a tight worst-case program cost of $\Theta(N\polylog N)$ for universally programming low-depth brickwork circuits in the large-$N$ regime. Moreover, we analyze the trade-off between the cost of describing the layout of local gates and the cost of programming them to implement the target unitaries via the light-cone argument. Our findings suggest that faithful gate-wise programming is essentially optimal in the low-depth regime.
\end{abstract}

\section{Motivation}
\label{sec:motivation}
The world is awaiting the arrival of the Noisy Intermediate-scale Quantum (NISQ) technology era in the near future \cite{Preskill2018quantumcomputingin}. With qubit counts on the order of tens to hundreds, quantum computers can run quantum algorithms that offer advantages in time and quantum-memory complexity over their classical counterparts on the same tasks \cite{GroversAlgorithm1996, Shor1997, HHL}. Despite this optimistic vision, the computational power of quantum computers is intrinsically limited by noise inherent in gate operations and by the limited coherence times of quantum particles interacting with the external environment \cite{NISQAlgorithms2022}. To suppress their interference with the computational results, an intuitive approach is to compress the depth of the quantum circuits \cite{YunchaoLiuThesis2024}. These low-depth designs limit the number of sequential operations performed during a computation but can still be provably more powerful than their classical counterparts \cite{Bravyi2018, QNC0NotEqualAC0}. 

\par 
Given the prohibitive resource requirements and costs of local quantum hardware in the NISQ era, clients may delegate computations \cite{DelegatedQuantumComputation2015} to cloud servers by transmitting program states that encode target unitary operations. Rather than designing a dedicated circuit for each computational task, the cloud computer uses a fixed circuit architecture capable of processing \textit{any} program state; we refer to this architecture as its \textit{processor}. The processor's ability to implement any target unitary from its corresponding program state is referred to as \textit{universal programmability}. Programming unitaries differs from synthesizing unitaries \cite{OneQueryUnitarySynthesis} in that all information about the target unitary is supplied in quantum rather than classical form, as a quantum state that we conventionally refer to as its \textit{program state}.
An essential component for implementing NISQ algorithms is a quantum processor capable of universally programming low-depth quantum operations. Although the \textit{No-Programming Theorem} has ruled out the possibility of exact universal programming, as an infinite-size quantum memory and quantum circuit are necessary to store and retrieve the program, its approximate analogue is available \cite{Nielsen1997}. Numerous efforts have been made \cite{ProbabilisticUniversalProcessor2002, MBQComp2007, AsymptoticTeleporation2008, Kubicki2019, YuxiangOptimal2020, Gschwendtner2021programmabilityof, Muguruza2024portbasedstate, yoshida2025quantumadvantagestorageretrieval} 
to circumvent this no-go theorem by allowing nonzero programming error in exchange for reduced quantum-memory complexity; we refer to this complexity as the \textit{program cost}. 
Optimal trade-offs have been established for unitaries \cite{YuxiangOptimal2020}, channels \cite{Gschwendtner2021programmabilityof}, and isometries \cite{yoshida2025quantumadvantagestorageretrieval}, primarily within two frameworks: measure-and-operate (MO) \cite{BisioOptimalLearning2010, YuxiangOptimal2020, Gschwendtner2021programmabilityof}, and port-based teleportation (PbT) \cite{AsymptoticTeleporation2008, Muguruza2024portbasedstate, yoshida2025quantumadvantagestorageretrieval}. Nevertheless, all these works, including \cite{YuxiangOptimal2020}, assume constant system dimensions and derive the bounds for program cost in the arbitrarily small error regime. Meanwhile, NISQ algorithms might run on many qubits to maintain their potential quantum advantage \cite{Bravyi2018}, thereby creating a ``wide'' circuit architecture. Rather than taking the programming error to be infinitesimal, we allow the programming error to vanish as the system dimension grows. Under these assumptions, the number of qubits controls both the asymptotic error and the circuit architecture, making it natural to ask whether finer-grained lower and upper bounds can be obtained for programming low-depth quantum circuits.
In addition, these works focus on an information-theoretic perspective and do not consider the efficiency of implementing the protocols on practical quantum devices, in terms of both gate and program cost.

\par

This article aims to identify the resource requirements for programming $\class{QNC}$ circuits, \ie, quantum circuits with polylogarithmic depth \cite{Arora2009} and bounded-fan-in gates. This gives rise to a fundamental yet representative quantum circuit architecture -- the brickwork circuit \cite{Haferkamp2022, Belkinetal2024, Zhao_2024, HuangLearnShallowCircuit2024, Yang_2025}, which consists of local unitary gates with restricted connectivity. This architecture closely reflects the physical constraints of NISQ devices. Many near-term quantum algorithms are developed with $\mathsf{QNC}$ brickwork circuits. Representative examples include the Variational Quantum Algorithms \cite{Cerezo_2021}, random unitaries \cite{Brand_o_2016, schuster2025randomunitariesextremelylow}, and quantum algorithms that efficiently solve the 2D Hidden Linear Function problem \cite{Bravyi2018}. We quantify the overall circuit complexity of preparing and executing the program using the best-known MO universal unitary programming scheme \cite{YuxiangOptimal2020}. Building on the constructive techniques employed in \cite{YuxiangOptimal2020, Gschwendtner2021programmabilityof}, we provide a tight lower bound for the worst-case program cost requirement, as a refinement of the previous results \cite{Kubicki2019, YuxiangOptimal2020, Gschwendtner2021programmabilityof} in the low-depth regime. A straightforward counting argument of the covering net for all $\class{QNC}$ brickwork circuits asymptotically matches this lower bound. Simple as it is, this argument provides a complete characterization of the quantum-storage complexity of programming low-depth brickwork circuits.
\par Our main results are summarized as follows:

\begin{theorem}[Circuit complexity of optimal universal programming, informal]
\label{thm:GateComplexityInformal}
    The optimal universal programming of $\U(d)$ where $d = 2^N$ with diamond norm error $\epsilon$ can be implemented with $\widetilde{\mathcal{O}}\left( \poly(d, 1 / \sqrt{\epsilon}) \right)$ quantum gates and $\mathcal{O}\left(d^2 \log(d^2 / \epsilon)\right)$ ancillae.  
    Furthermore, programming low-depth brickwork circuits by programming each local gate optimally can be implemented with $\widetilde{\mathcal{O}}\left(\poly(N,  1 / \sqrt{\epsilon}) \right)$ quantum gates and $\widetilde{\mathcal{O}}\left(N \log (1 / \epsilon) \right)$ ancillae.
     
\end{theorem}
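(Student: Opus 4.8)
The plan is to treat the two statements separately, with the first serving as the building block for the second. For the first statement I would start from the explicit optimal programming scheme of \cite{YuxiangOptimal2020}, which realizes $\U(d)$ programming through a port-based-teleportation (PBT) structure: the program state consists of $n$ ``ports,'' each a maximally entangled pair of qudits with one half acted on by the target $U$, and the processor retrieves the action of $U$ by performing a fixed, \emph{program-independent} measurement on the input together with the port registers. The number of ports needed to reach additive diamond-norm error $\epsilon$ is $n = \poly(d, 1/\sqrt{\epsilon})$, since the entanglement infidelity of PBT scales polynomially in $d^2/n$ while the diamond error is controlled by its square root, giving $n \sim d^2/\epsilon^2$. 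The total gate count then splits into the cost of \emph{preparing} the program state and the cost of \emph{executing} (retrieving) it.

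For preparation I would count $n$ Bell-pair preparations, costing $\widetilde{O}(n\log d)$ gates, together with $n$ applications of $U$; decomposing a generic $d$-dimensional unitary costs $\widetilde{O}(d^2)$ gates, with the approximate-synthesis overhead absorbed into the polylogarithmic factors, so preparation costs $\widetilde{O}(n\,d^2)$ overall. For retrieval, the key observation is that the fixed PBT measurement is a square-root (``pretty good'') measurement whose block structure is dictated by Schur--Weyl duality; it can therefore be implemented through the efficient quantum Schur transform in $\poly(n,d)$ gates, independently of $U$. Summing the two contributions and substituting $n = \poly(d,1/\sqrt{\epsilon})$ yields the claimed $\widetilde{O}(\poly(d, 1/\sqrt{\epsilon}))$ bound.

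For the second statement I would exploit that a $\class{QNC}$ brickwork circuit on $N$ qubits has depth $D = \polylog N$, hence only $O(ND) = O(N\polylog N)$ local two-qubit gates. Programming each local gate independently means invoking the first statement at the \emph{constant} dimension $d = 4$. By subadditivity of the diamond norm under circuit composition, an end-to-end error $\epsilon$ is achieved by programming each of the $O(N\polylog N)$ local gates to error $\epsilon' = \epsilon/\Theta(N\polylog N)$. Since $d=4$ is constant, the first statement gives a per-gate cost of $\widetilde{O}(\poly(1/\sqrt{\epsilon'}))$, and $1/\sqrt{\epsilon'} = \Theta(\sqrt{N\polylog N}/\sqrt{\epsilon})$, so each gate costs $\widetilde{O}(\poly(N, 1/\sqrt{\epsilon}))$. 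Multiplying by the $O(N\polylog N)$ gate count and absorbing all polylogarithmic factors yields the total $\widetilde{O}(\poly(N, 1/\sqrt{\epsilon}))$.

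The main obstacle I anticipate is the gate-efficient, $U$-independent implementation of the retrieval measurement: one must verify that the square-root measurement of the optimal scheme genuinely inherits the representation-theoretic block-diagonal structure that the quantum Schur transform diagonalizes, and then control the synthesis error of that transform so that it does not erode the $\epsilon$ budget, which would otherwise force a larger $n$ and break the polynomial scaling. The remaining ingredients---counting Bell pairs, decomposing generic local unitaries, and accumulating diamond-norm errors additively across the brickwork layers---are routine and contribute only to the polylogarithmic factors hidden by $\widetilde{O}$.
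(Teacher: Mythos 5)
Your proposal rests on a mischaracterization of the scheme the theorem refers to: the optimal universal processor of \cite{YuxiangOptimal2020} is \emph{not} a port-based-teleportation protocol. It is a measure-and-operate (MO) scheme built on optimal unitary learning \cite{BisioOptimalLearning2010}: the program state is $(U\otimes I)^{\otimes n}\ket{\psi_P}$ for a Schur--Weyl-structured probe, and retrieval is a \emph{continuous covariant POVM} $\{\eta_{\hat U}\,\dd\mu(\hat U)\}$ over $\U(d)$ followed by applying $\hat{\mathcal{U}}$ on the heralded outcome --- not a discrete square-root measurement on ports. Consequently the step you yourself flag as the ``main obstacle'' (an efficient, program-independent implementation of a PBT pretty-good measurement via the Schur transform) is both unproven in your write-up and aimed at the wrong object: for the MO scheme the dominant retrieval cost is generating $\hat U^{\otimes n}$ for (approximately) Haar-random $\hat U$, and since the query complexity forces $n=\Omega(d^2)$, the required $(n+1)$-designs fall outside every gate-efficient design construction (cf.\ Table \ref{tab:kDesignCircuitDepth}), leaving an $O(nd^2)$ cost that the paper accepts rather than circumvents. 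The paper's actual technical content for the first claim is the robustness analysis you have no analogue of: Theorem \ref{thm:ApproximateMOScheme} (via the operator inequality of Lemma \ref{lemma:matrixInequality}) shows a relative $\delta$-approximate $(n+1)$-design perturbs the measure phase by only $\delta$ in diamond norm, and the error budget $\epsilon_{\mathsf{MO}} = \epsilon + \zeta/2 + \delta/2 + \tau$ combines the Schur-transform error $\zeta$, the design error $\delta$, and the Solovay--Kitaev synthesis error $\tau$ of the operate phase. Your copy count $n\sim d^2/\epsilon^2$ is also not the optimal tradeoff; the paper uses $n\sim d^2/\sqrt{\epsilon}$ from \cite[Theorem 2]{YuxiangOptimal2020}, which is where the $1/\sqrt{\epsilon}$ in the theorem statement comes from (your scaling happens to stay inside $\poly(d,1/\sqrt{\epsilon})$, but only because the bound is forgiving, and it confirms you are analyzing a suboptimal scheme).

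A secondary caveat: even read charitably as ``\emph{some} universal processor achieves this gate count,'' your PBT route would need the efficient-circuit result for the PBT measurement as an external ingredient, which you neither cite nor prove, so the first claim is not established as written. Your second claim, by contrast, matches the paper essentially verbatim: $\ell \le ND/k = O(N\,\polylog N)$ local gates of constant dimension, per-gate error $\varepsilon' = \epsilon/\ell$ via the error-propagation bound (Lemma \ref{lemma:errorPropagation}), per-gate cost $\widetilde{O}(\poly(1/\sqrt{\varepsilon'}))$, and multiplication by $\ell$; that half is correct and needs no repair.
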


We refer to Section~\ref{subsec:CircuitComplexityForUnitaryProgramming} for further context. This argument says that, in terms of \textit{universal} programming of unitary operations, the circuit complexity required to optimally synthesize a generic unitary from its corresponding program state is expected to be exponentially high, even approximately. The hardness originates from the fact that a general unitary matrix is characterized by up to $\Theta(d^2)$ free parameters, so is its program state. Notably, the same scaling behavior appears in the context of quantum gate learning and metrology. A related result is stated in \cite[Section 4.5.4]{Nielsen2012}; in the programming setting, however, the execution circuit is oblivious to the target unitary rather than tailored to it.

\begin{theorem}[Program cost bounds for programming low-depth circuits, informal version of \autoref{thm:DimensionLowerbound} and \autoref{thm:DimensionUpperbound}]
\label{thm:ProgramCostBoundInformal}
    Programming a low-depth quantum circuit on $N$ qubits to diamond-norm error $\epsilon \sim 1 / \polylog N < \frac{1}{32}$ requires a quantum processor with program cost $c_P = \Omega \left( N \polylog N \right)$ in the worst case. Moreover, $c_P = \mathcal{O} \left( N \polylog N \right)$ if we restrict to brickwork circuits.
\end{theorem}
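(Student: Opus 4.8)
The plan is to prove the two directions separately, both driven by the $\epsilon$-metric entropy of the set $\mathcal{B}_{N,D}$ of depth-$D$ brickwork channels with $D=\polylog N$, measured in the diamond-norm metric $\ddis$. Write $c_P=\log_2\dim\hil_{\mathrm{prog}}$ for the program cost, i.e.\ the number of memory qubits. The single structural fact underlying everything is a parameter count: a depth-$D$ brickwork circuit is specified by $\Theta(ND)$ local two-qubit gates, so $\mathcal{B}_{N,D}$ is the image of a smooth manifold of real dimension $P=\Theta(ND)=\Theta(N\polylog N)$. The lower bound will come from a packing of this manifold together with a quantitative no-programming argument, and the upper bound from a covering of it together with an explicit oblivious processor; the two meet because $\epsilon\sim1/\polylog N$ makes $\log(1/\epsilon)=O(\log\log N)$ a sub-polylog correction.

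For the lower bound (formalized as \autoref{thm:DimensionLowerbound}) I would first fix a constant separation $\delta=1/16$ and exhibit a $\delta$-separated family $\{\mathcal U_j\}_{j=1}^{K}\subset\mathcal B_{N,D}$ with $K=2^{\Omega(P)}$, that is, a diamond-norm packing of the brickwork manifold at constant scale. Because $\epsilon<1/32=\delta/2$, any processor programming each $\mathcal U_j$ to diamond error $\epsilon$ produces channels that remain pairwise separated by $\delta-2\epsilon>0$; this is exactly the role of the threshold $1/32$. Feeding these into the quantitative no-programming machinery of \cite{Kubicki2019,YuxiangOptimal2020,Gschwendtner2021programmabilityof}, the associated program states must be nearly distinguishable through the fixed processor, which forces $\dim\hil_{\mathrm{prog}}\gtrsim K$ and hence $c_P\ge\log K-O(1)=\Omega(P)=\Omega(N\polylog N)$. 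Since brickwork circuits form a subclass of low-depth circuits, this already lower-bounds the worst case over all $\class{QNC}$ circuits.

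For the upper bound (formalized as \autoref{thm:DimensionUpperbound}) I would dualize the same object: take an $\epsilon$-net $\{\mathcal V_m\}_{m=1}^{M}$ of $\mathcal B_{N,D}$ of size $M=(c/\epsilon)^{O(P)}$, so that $\log M=O\!\big(P\log(1/\epsilon)\big)=O\!\big(ND\log(1/\epsilon)\big)$, and plugging in $\epsilon\sim1/\polylog N$ gives $\log M=O(N\polylog N)$. The processor is then oblivious and elementary: the program register, of dimension $M$, holds in the computational basis the index $m$ of the net element nearest to the target, and a single fixed controlled-synthesis unitary applies $\mathcal V_m$ to the system, realizing every brickwork circuit to error at most $\epsilon$ with $c_P=\log M=O(N\polylog N)$. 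I would then record the competing gate-wise scheme: program each of the $\Theta(ND)$ local gates with the optimal $\U(4)$ scheme of \cite{YuxiangOptimal2020} at per-gate error $\epsilon/\Theta(ND)$, whose errors sub-add under composition by the triangle inequality for the diamond norm (a light-cone bookkeeping of how local deviations accumulate along the circuit), at total cost $O\!\big(ND\log(ND/\epsilon)\big)=O(N\polylog N)$. That both routes land at the same order is what licenses the conclusion that faithful gate-wise programming is optimal in the low-depth regime.

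The main obstacle is the lower-bound packing and its conversion to a dimension bound. Constructing a $2^{\Omega(P)}$-size, constant-separated family in the diamond norm is delicate because the light cones of gates deep in the circuit overlap and later gates can partially scramble earlier deviations, so a naive product over gate slots need not remain separated; the honest fix is to establish a bi-Lipschitz (immersion) lower bound from the gate parameters to $\ddis$ on a generic region of the manifold, for which the light-cone argument supplies a probe input exposing enough local coordinates simultaneously. The second subtle point is upgrading approximate distinguishability to the strong conclusion $\dim\hil_{\mathrm{prog}}\gtrsim K$ rather than the much weaker $\dim\hil_{\mathrm{prog}}\gtrsim\log K$; this is precisely where I would lean on the constructive no-programming technique of \cite{Kubicki2019,YuxiangOptimal2020} and on the constant separation guaranteed by $\epsilon<1/32$, rather than on a bare packing-number count.
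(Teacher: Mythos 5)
Your upper-bound route is essentially the paper's own: Lemma~\ref{lemma:EpsilonNetForBrickworkCircuit} builds exactly the net you describe (gate placements counted via $\binom{N}{k}$ per gate together with an $\epsilon/\ell$-net of $\U(2^k)$ for each gate, with errors accumulated as in Lemma~\ref{lemma:errorPropagation}), and \autoref{thm:DimensionUpperbound} programs via a classical index state and a fixed controlled-application channel, giving $c_P \leq k\ell\log_2(eN/k) + 2^{2k+1}\ell\log_2(12\ell/\epsilon) = O(N \poly\log N)$. The genuine gap is in your lower bound, at precisely the step you flag and do not resolve. Pairwise $\delta$-separation of the target channels forces, via monotonicity of the diamond norm under the fixed processor channel, only \emph{pairwise} trace distance $\geq \delta - 2\epsilon$ between the $K$ program states. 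But $K$ pairwise constant-separated density matrices fit in a space of dimension $d_P$ with $K = 2^{\Theta(d_P^2)}$ (a constant-scale packing of the $(d_P^2-1)$-dimensional body $\Dens(\mathcal{H}_P)$; already $2^{\Theta(d_P)}$ nearly orthogonal pure states suffice), so your argument yields only $c_P = \Omega(\log\log K) = \Omega(\log(ND))$ --- exponentially weaker than the claimed $\Omega(ND)$. The hoped-for upgrade $\dim \mathcal{H}_P \gtrsim K$ has no mechanism behind it: the exact no-programming argument (mutual orthogonality of programs) collapses at constant error, and the machinery of \cite{Kubicki2019, YuxiangOptimal2020} converts \emph{ensemble-level} information, not pairwise distinguishability, into dimension bounds.

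What the paper actually does in \autoref{thm:DimensionLowerbound} uses no packing of the brickwork manifold at all, and the $\poly\log N$ factor has a different origin than your parameter count $P = \Theta(ND)$. The hard ensemble is a diamond-norm $\kappa$-approximate unitary $n$-design with $n = \lceil (1-\kappa/2)\varpi/(4\sqrt{2\epsilon}) \rceil = \Theta(1/\sqrt{\epsilon}) = \Theta(\poly\log N)$, implementable in $\poly\log N$ depth \cite{schuster2025randomunitariesextremelylow} so that it lies inside the class being programmed. The amplification your sketch lacks is Lemma~\ref{lemma:informationCapacity}: a \emph{single} copy of the program state can be recycled so that any $\epsilon$-universal processor simulates $(\mathcal{U}\otimes\mathcal{I})^{\otimes n}$ to diamond error $4n\sqrt{2\epsilon}$. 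Data processing for Holevo information (Lemma~\ref{lemma:HolevoInformation}) together with Alicki--Fannes--Winter continuity (Lemma~\ref{lemma:AFW}) then gives $\log d_P \geq (1 - 4n\sqrt{2\epsilon} - \kappa/2)\log d_n - 3\log 2$, where $d_n = \binom{n+d^2-1}{d^2-1}$ is the dimension of the irreducible-block support of the $n$-fold twirl of a maximally entangled probe, whence $c_P \gtrsim n\log(d/n) \sim N\poly\log N$. This also corrects your reading of the threshold: $\epsilon < \tfrac{1}{32}$ is exactly the condition $4\sqrt{2\epsilon} < 1$, needed so that $\varpi \in (0, 1-4\sqrt{2\epsilon})$ exists and the coefficient $1 - 4n\sqrt{2\epsilon} - \kappa/2$ stays positive for some $n \geq 1$; it is not the $\delta/2$-separation condition you posited. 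In short, your upper bound stands as written, but the lower bound needs the recycling-plus-Holevo argument (with the copy number $n \sim 1/\sqrt{\epsilon}$ supplying the polylogarithmic factor), not a packing count.
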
 
The arguments presented in \autoref{thm:GateComplexityInformal} and \autoref{thm:ProgramCostBoundInformal} show that the construction in \cite{YuxiangOptimal2020} is far from optimal in terms of both circuit complexity and the cost-error trade-off for programming low-depth circuits.

Notably, the scaling of the cost stated in \autoref{thm:ProgramCostBoundInformal} is tight. Although the error is bounded in our setting, plainly substituting it into the lower and upper bounds for programming general unitaries  
\cite{ProbabilisticUniversalProcessor2002, MBQComp2007, AsymptoticTeleporation2008, Kubicki2019, YuxiangOptimal2020}
 does not give us the desired scaling. Our result is obtained via an information-theoretic approach and the counting argument based on the structure of brickwork circuits.  A compatible scaling for circuit complexity\footnote{
 In their definition, the circuit complexity provides a lower bound for the number of small local gates required to synthesize a global unitary.
 } of $N$-qubit unitaries is reported in \cite[Theorem 1]{Haferkamp2022}. Compared with the cardinality of the $\epsilon$-net of the topological group $\U(d)$, one can conclude that within $\U(d)$, the low-depth brickwork circuit unitaries are sparsely distributed.

 Besides the complexity of individual local unitaries, the program cost is also related to the quantity of unitary gates and the circuit architecture \cite[Figure 2]{Haferkamp2022}. These factors are inversely correlated: For a circuit with fixed geometry, larger and more complicated local gates often result in a simpler layout involving fewer gates. When the local gates have fixed dimensions, we can still transit to the larger unitary case using the widely adopted light-cone argument \cite{Haferkamp2022, HuangLearnShallowCircuit2024, Nadimpalli_2024, Yang_2025}. However, when we take into account the overall program cost, in major cases, the optimal programming scenario is to directly program the primitive small unitaries faithfully.

\begin{fact}[Summary of Section~\ref{sec:TradeOff}, informal]
    There exists an approach to trade the cost of programming local unitary gates in a brickwork quantum circuit off for lower circuit architecture complexity. 
    In terms of overall program cost, however, it provides no reduction in major cases.
\end{fact}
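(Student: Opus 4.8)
The plan is to express the overall program cost as a sum of two competing contributions and then show that the light-cone coarsening only shuttles cost between them without ever lowering the total. Concretely, I would write $c_P = c_{\mathrm{arch}} + c_{\mathrm{gate}}$, where $c_{\mathrm{arch}}$ is the cost of describing the layout (which effective block acts on which qubits, at which layer) and $c_{\mathrm{gate}}$ is the aggregate size of the program states needed to program every block to its target unitary. A single coarsening parameter $t$ interpolates between strategies: starting from the native two-qubit bricks, one fuses all gates inside a depth-$t$ temporal window into one effective unitary, whose support is dictated by the light-cone argument. The existence half of the statement is then immediate --- this fusion map is exactly the promised approach, and increasing $t$ manifestly lowers $c_{\mathrm{arch}}$ by reducing the number of objects to address.

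Next I would quantify the $t$-dependence of each term. Since a depth-$t$ window of a one-dimensional brickwork circuit has a light cone of width $m = \Theta(t)$, each fused block is an element of $\U(2^{m})$, and by the established per-unitary program-cost lower bound its individual cost is governed by the $\Theta(4^{m})$ real parameters of $\U(2^{m})$, hence $\Theta\bigl(4^{\Theta(t)}\bigr)$ --- exponential in the window depth. Tiling a depth-$D$ circuit on $N$ qubits by disjoint-support slabs leaves $\Theta(ND/t^{2})$ blocks, so $c_{\mathrm{gate}}(t) = \Theta\bigl(ND\, 4^{\Theta(t)}/t^{2}\bigr)$. In contrast, $c_{\mathrm{arch}}(t)$ can fall only polynomially, at best like the $\Theta(ND/t^{2})$ surviving block addresses of $O(\log N)$ bits each.

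The decisive comparison is that $c_{\mathrm{gate}}(t)$ grows exponentially in $t$ whereas $c_{\mathrm{arch}}(t)$ decays only polynomially, so $c_P(t)$ is increasing for every $t \ge 1$ and is minimized at the finest granularity $t = O(1)$. This minimizer is precisely \emph{faithful gate-wise programming}, which shows that coarsening yields no overall reduction and recovers the $O(N \polylog N)$ figure of \autoref{thm:ProgramCostBoundInformal} when $D = \polylog N$. The trade-off is therefore real but always unfavorable: the polynomial savings in layout are swamped by the exponential blow-up in per-block programming cost.

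The main obstacle I anticipate is pinning down the ``major cases'' qualifier rather than any single estimate. The argument collapses exactly when fusion does not inflate the effective dimension --- for instance when the native local gates already act on $\Omega(\log N)$ qubits and thus already saturate their light cones, or when the target unitaries are degenerate enough that their program states compress far below the $\Theta(4^{m})$ worst case. I would therefore need to certify that the per-block cost is genuinely controlled by the free-parameter lower bound (excluding such degeneracies) and that the layout cost never exceeds $O(ND\log N)$, so that the exponential term dominates across the whole generic regime. Delimiting this boundary between the generic ``major'' situation and these exceptional ones is where the substantive work lies.
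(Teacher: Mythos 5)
Your overall route is essentially the paper's: you split the cost into a layout term and a gate-programming term (the paper encodes this as the product state $\bigotimes_j \ket{\mathsf{L}_j}\bra{\mathsf{L}_j} \otimes \psi_{P, G_j}$), fuse depth-$W$ windows into light-cone blocks of width $\Theta(W)$, count $\Theta(ND/W^2)$ fused blocks against $\Theta(ND)$ primitive gates, pit the exponential-in-$W$ per-block parameter count against at most polynomial layout savings, conclude that gate-wise programming is optimal, and correctly anticipate that parameter-degenerate gate families are the exceptions behind the ``major cases'' qualifier --- the paper realizes exactly this with its commuting Pauli-rotation example, where only $T_{\mathscr{L}} = o(m_{\mathscr{L}})$ distinct Pauli strings survive fusion and the angles merge additively.

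One step of your argument overreaches, however. You assert $c_{\mathrm{gate}}(t) = \Theta\bigl(ND\, 4^{\Theta(t)}/t^2\bigr)$ with a \emph{matching lower bound} ``by the established per-unitary program-cost lower bound,'' and from this deduce that $c_P(t)$ is monotonically increasing in $t$. No exponential per-block lower bound at constant (or $1/\polylog N$) error is available: Theorem \ref{thm:DimensionLowerbound} applied to an $m$-qubit block with fixed error yields only $c_P = \Omega(\poly(1/\sqrt{\epsilon}) \cdot m)$, linear --- not exponential --- in the block width, so monotonicity of the \emph{true} cost in $t$ does not follow from anything in the paper. The paper instead compares the two \emph{upper} bounds obtained from Theorem \ref{thm:DimensionUpperbound}: demanding $c_P^r = o(c_P)$ forces $W = \omega\bigl((ND)^{1/(2+\varsigma)}\bigr)$, which contradicts the structural constraint $W \leq D$ whenever $D = O(\polylog N)$ (indeed already for $D = O(1)$). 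That is a weaker but honest statement --- light-cone fusion cannot improve the achievable covering-net bound in the low-depth regime --- and it is all the informal Fact claims. To sustain your stronger increasing-in-$t$ claim you would first need an $\Omega(4^{\Theta(W)})$ program-cost lower bound for generic $W$-qubit blocks at non-vanishing error, which is precisely the degeneracy-exclusion work you flagged at the end and is not carried out in the paper either.
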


\section{Preliminaries}

We use the following notion in the upcoming context: Let $d = 2^N$ where $N \in \mathbb{N}$, $\mathcal{H} \cong \mathbb{C}^d$ stands for a $d$-dimensional Hilbert space, $\bo(\mathcal{H})$ stands for the set of bounded linear operators on $\mathcal{H}$, and $\U(d)$ stands for the set of $d \times d$ unitary matrices. The set of density matrices on the space $\mathcal{H}$ is denoted as $\Dens(\mathcal{H})$. A quantum operation connecting two Hilbert spaces $\mathcal{H}_A$ and $\mathcal{H}_B$ can be formalized as a quantum channel, which is a completely positive and trace-preserving (CPTP) map. We denote it by $\cptp(\mathcal{H}_A, \mathcal{H}_B)$, or $\cptp(\mathcal{H})$ for short if $\hil_A = \hil_B = \hil$. For a pure state $\ket{\psi}$, we use $\psi$ as a shorthand for its density matrix $\ket{\psi} \bra{\psi}$. For any matrix $A$, we use the double-ket notation $\Ket{A} = \sum_{m, n} \bra{n} A \ket{m} \ket{n} \ket{m}$, and use $\|A\|$ to indicate its operator norm. For any matrix $A, B$, $A \preceq B$ indicates $B - A$ is positive-semidefinite. For any space $\mathcal{H}$, we use $\ket{\Phi_{\mathcal{H}}^+} = \frac{1}{\sqrt{\dim \mathcal{H}}} \Ket{I_{\mathcal{H}}}$ to denote the maximally entangled state on $\mathcal{H}^{\otimes 2}$. For a unitary operator $U$, we use the calligraphic font $\mathcal{U}(\cdot)$ to represent the channel $U (\cdot) U^{\dagger}$. 

We will use the conventional notations (big-$O$, big-$\Omega$ and big-$\Theta$) for the asymptotic behavior of functions \cite{Arora2009}. Besides, we will write $f = o(g)$ and $f = \omega(g)$ if $f(x)/g(x)$ and $g(x)/f(x)$ is vanishing with large $x$. For conciseness, we write $f \sim g$ for $f = \Theta(g)$ and $f \lesssim g$ for $f = \mathcal{O}(g)$. The notation $\widetilde{\mathcal{O}}$, as a variant of the big-$O$, ignores the logarithmic factors. We say a general quantum operation is implemented $\epsilon$-approximately if the circuit produced is $\epsilon$-close to it in diamond norm (see Definition~\ref{def:distanceNorm}). 

\begin{definition}
For any linear operator $E \in \ml{B}(\ml{H})$, its \textbf{Schatten $p$-norm} is defined by $\|E\|_p = \bigl(\Tr[(E^\dagger E)^{p/2}]\bigr)^{1/p}$. 
    For states $\rho, \sigma \in \Dens(\ml{H})$, their trace distance is defined as 
    $$
    d_{\Tr}(\rho, \sigma) = \frac{1}{2} \|\rho - \sigma\|_{1}.
    $$
\end{definition}

\begin{definition}
\label{def:distanceNorm}
For a quantum channel $\ml{E} \in \cptp(\mathcal{H}_A, \mathcal{H}_B)$, its diamond norm is given by
$$
\|\mathcal{E}\|_{\diamond} = \sup_{ \left\|M\right\|_1 \leq 1  } \left\|   
    (\ml{E} \otimes \ml{I}_{\ml{H}_R}) M
    \right\|_{1}.
$$
    The diamond norm distance between two quantum channels $\mathcal{E}, \mathcal{F} \in \cptp(\mathcal{H}_A, \mathcal{H}_B)$ is given by $\|\mathcal{E} - \mathcal{F}\|_{\diamond}$, 
    where the supremum is attainable with a reference space $\ml{H}_R \cong \ml{H}_{A}$. The diamond norm is sub-multiplicative, \ie, $\|\ml{E} \circ \ml{F}\|_{\diamond} \leq \|\ml{E}\|_{\diamond} \|\ml{F}\|_{\diamond}$; it is non-increasing under quantum operations, \ie, for any channel $\ml{E}, \ml{F} \in \cptp(\ml{H}_A, \ml{H}_B)$, $\ml{T} \in \cptp(\mathcal{H}_B, \mathcal{H}_C)$, $\|\ml{T} \circ \ml{E} - \ml{T} \circ \ml{F} \|_{\diamond} \leq \|\ml{E} - \ml{F} \|_{\diamond}$.
\end{definition}

\begin{lemma}[Schur-Weyl duality \cite{Goodman2009, HarrowPhDThesis}]
\label{thm:SchurWeyl}
    Consider the $n$-tensor replication of a Hilbert space $\ml{H}$ with $\dim \ml{H} = d$ and $n \in \mb{N}$. A \textit{partition} $\lambda \vdash n$ of any integer $n \geq 0$ is a tuple $\lambda = (\lambda_1, \dots, \lambda_d)$ such that $\lambda_{1} \geq \lambda_2 \geq \cdots \lambda_{d} \geq 0$ and $\sum_{j=1}^{d} \lambda_j = n$. We use the notion $\lambda \partition{d}{n}$ to indicate $\lambda \vdash n$ into at most $d$ rows. Each $\lambda$ characterizes the shape of a Young diagram. Denote the modules $W_{\lambda}^d$ and $V_{\lambda}$ as the irreducible subspaces of $\U(d)$ and the permutation group $\mathfrak{S}_n$, respectively. The Schur-Weyl duality states that
    $$
    \ml{H}^{\otimes n} \cong \bigoplus_{\lambda \partition{d}{n} } W_{\lambda}^d \otimes V_{\lambda}.
    $$
    Moreover, the $n$-wise tensor product of unitary operator $U \in \U(d)$ admits decomposition
    $$
    U^{\otimes n} \cong \bigoplus_{\lambda \partition{d}{n} } U_{\lambda} \otimes I_{V_{\lambda}},
    $$
    where $(U_{\lambda}, W_{\lambda}^d)$ are irreducible representations of the group $\U(d)$.
\end{lemma}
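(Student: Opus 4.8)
The plan is to prove the decomposition through the double commutant theorem for finite-dimensional $*$-algebras, by exhibiting $\U(d)$ and $\mathfrak{S}_n$ as a pair of mutually centralizing actions on $\ml{H}^{\otimes n}$. First I would fix the two commuting representations: the diagonal action $U \mapsto U^{\otimes n}$ of $\U(d)$, and the permutation action $\pi \mapsto P_\pi$ of $\mathfrak{S}_n$ that reorders the tensor factors. A direct computation gives $P_\pi \, U^{\otimes n} = U^{\otimes n} \, P_\pi$ for all $U$ and $\pi$. Writing $\mathcal{A} = \Span\{U^{\otimes n} : U \in \U(d)\}$ and $\mathcal{B} = \Span\{P_\pi : \pi \in \mathfrak{S}_n\}$ for the associated operator algebras, both are unital $*$-closed subalgebras of $\End(\ml{H}^{\otimes n})$, and the commutation relation shows $\mathcal{B} \subseteq \mathcal{A}'$ and $\mathcal{A} \subseteq \mathcal{B}'$, where $(\cdot)'$ denotes the commutant inside $\End(\ml{H}^{\otimes n})$.

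The crux is to upgrade one of these inclusions to an equality, and I would target $\mathcal{B}' = \mathcal{A}$. Under the canonical isomorphism $\End(\ml{H}^{\otimes n}) \cong \End(\ml{H})^{\otimes n}$, conjugation by $P_\pi$ corresponds to permuting the $n$ copies of $\End(\ml{H})$, so $\mathcal{B}'$ is exactly the $\mathfrak{S}_n$-invariant subspace, i.e. the symmetric subspace $\textnormal{Sym}^n(\End(\ml{H}))$. By polarization this symmetric subspace is spanned by the diagonal tensors $\{X^{\otimes n} : X \in \End(\ml{H})\}$. To conclude $\textnormal{Sym}^n(\End(\ml{H})) = \mathcal{A}$ I would argue by duality: any linear functional vanishing on every $U^{\otimes n}$ with $U$ unitary defines a holomorphic polynomial of degree $n$ in the entries of $U$; since $\U(d)$ is Zariski-dense in $\mathsf{GL}(d,\mathbb{C})$, such a polynomial vanishes on all of $\End(\ml{H})$, hence on every $X^{\otimes n}$. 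Therefore $\Span\{U^{\otimes n}\} = \textnormal{Sym}^n(\End(\ml{H})) = \mathcal{B}'$. Taking commutants and invoking the finite-dimensional double commutant theorem $\mathcal{B}'' = \mathcal{B}$ then yields $\mathcal{A}' = \mathcal{B}$, so $\mathcal{A}$ and $\mathcal{B}$ are mutual commutants.

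With the mutual-commutant property established, I would apply the structure theorem for a pair of mutually centralizing semisimple $*$-algebras on a finite-dimensional space: it produces a canonical isotypic decomposition $\ml{H}^{\otimes n} \cong \bigoplus_i W_i \otimes V_i$ over a common index set, on which $\mathcal{A}$ acts as $\bigoplus_i \End(W_i) \otimes I_{V_i}$ (irreducibly on each $W_i$) and $\mathcal{B}$ acts as $\bigoplus_i I_{W_i} \otimes \End(V_i)$ (irreducibly on each $V_i$). It then remains to label the index set by partitions $\lambda \partition{d}{n}$. For this I would use highest-weight theory for $\U(d)$: the polynomial irreducibles appearing in $U^{\otimes n}$ are precisely those whose highest weights are partitions of $n$ with at most $d$ rows, giving the modules $W_\lambda^d$; the matching $\mathfrak{S}_n$-irreducibles are the Specht modules $V_\lambda$, and any partition with more than $d$ rows forces $W_\lambda^d = 0$ and drops out of the sum. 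Finally, since $U^{\otimes n} \in \mathcal{A}$, reading it off in this decomposition gives $U^{\otimes n} \cong \bigoplus_\lambda U_\lambda \otimes I_{V_\lambda}$ with $(U_\lambda, W_\lambda^d)$ irreducible, as claimed.

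The main obstacle is the polarization-and-density step identifying $\mathcal{B}'$ with $\mathcal{A}$; once that equality is in hand the remainder is formal, driven by the double commutant theorem and the semisimple structure theory. The combinatorial labeling of the surviving summands by partitions into at most $d$ parts is a bookkeeping consequence of the representation theory of $\U(d)$, which I would either cite from Goodman–Wallach or recover from the Weyl character formula.
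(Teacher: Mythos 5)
The paper offers no proof of this lemma: it is stated as a standard preliminary and attributed to \cite{Goodman2009, HarrowPhDThesis}, whose treatment is exactly the double-commutant argument you outline. Your proposal is correct and complete in its essentials --- the commuting actions, the identification of $\mathcal{B}'$ with the symmetric subspace of $\End(\ml{H})^{\otimes n}$ via polarization, the Zariski-density step upgrading $\Span\{U^{\otimes n} : U \in \U(d)\}$ to $\Span\{X^{\otimes n} : X \in \End(\ml{H})\}$ (the genuinely delicate point, which you handle correctly), and the labeling of isotypic components by partitions $\lambda \partition{d}{n}$ via highest weights --- so it matches the standard proof in the cited sources rather than diverging from anything in the paper.
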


\begin{definition}[Low-depth brickwork circuit \cite{Zhao_2024, Yang_2025}]
\label{def:brickwork_circuit}
 A brickwork quantum circuit on $N$ qubits consists of sequential operations with $\ell$ unitary gates $\mathscr{G} = \{G_1, \dots, G_{\ell}\}$  arranged across at most $D$ layers. The connectivity among qubits, or the geometry, is given by a connectivity graph $C = ([N], E)$, where $(j, j') \in E$ if local operations are allowed between qubits $j$ and $j'$. Each gate is assumed to be $k$-local with $k = \mathcal{O}(1)$ (bounded constant fan-in), and we denote the set of qubits that $G_j$ acts on as $q_j \subseteq [N]$ subject to $|q_j| = k$ and the subgraph of $C$ induced by qubits in $q_j$, denoted by $C[q_j]$, is connected. Moreover, we denote $\mathcal{L}_r \subseteq [\ell]$ as the indices of gates applied in the $r$-th operation. An illustrative example is given in \autoref{fig:BWCircuit}.

 \begin{figure}[t!]
    \centering
    \includegraphics[width=0.6\linewidth]{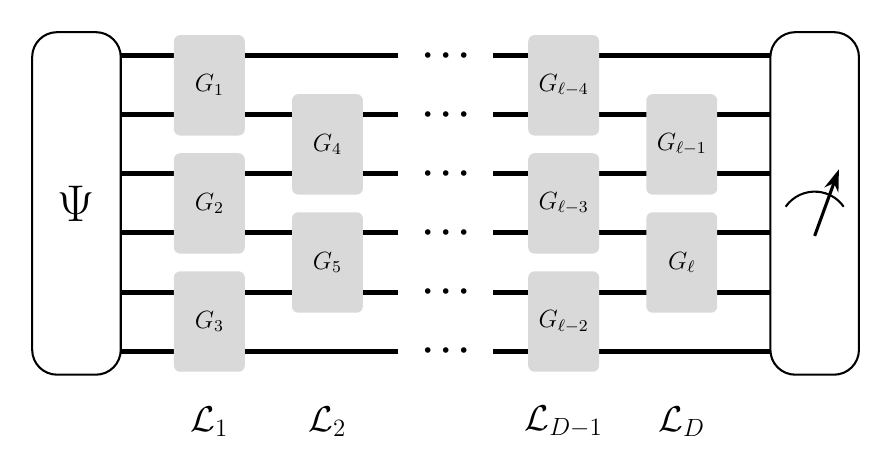}
    \caption{Illustration of a 1D 2-local brickwork quantum circuit consisting of $\ell$ gates arranged across $D$ layers.}
    \label{fig:BWCircuit}
\end{figure}
\end{definition}
    Following the notion of $\mathsf{QNC}$ circuits \cite{Arora2009}, we say a brickwork circuit has low depth if $D = \mathcal{O}(\polylog N)$.

\begin{definition}[Unitary design \cite{MetgerSimpleConstruction2024}]
    Let $\nu$ be any ensemble of unitary operators on $\U(d)$. Its $t$-moment operator is defined as
    $$
    \ml{M}_{\nu}^{(t)}(\rho) := \ee_{U \sim \nu } \left[ U^{\otimes t} \rho {U^{\dagger}}^{\otimes t}  \right].
    $$
    Setting $\nu = \mu$, the Haar measure over $\U(d)$, this is the Haar random $t$-moment operator:
    $
    \ml{M}_{\haar}^{(t)}(\rho) := \int_{\U(d)} U^{\otimes t} \rho {U^{\dagger}}^{\otimes t} \dd \mu(U)
    $. 
    For any $\delta \in (0, 1]$, $\nu$ is a diamond norm $\delta$-approximate unitary $t$-design if $\|\ml{M}_{\nu}^{(t)} - \ml{M}_{\haar}^{(t)}\|_{\diamond} \leq \delta$, and is a relative $\delta$-approximate unitary $t$-design if $(1 -\delta) \mathcal{M}_{\haar}^{(t)} \preceq \mathcal{M}_{\nu}^{(t)} \preceq (1 + \delta) \mathcal{M}_{\haar}^{(t)}$. A $t$-design is also an $s$-design for any $s < t$.
\end{definition}

\begin{remark} 
    On a quantum device, generating Haar random unitaries is inefficient, where the number of gates grows exponentially with the number of qubits \cite{ChristophExact2009}. Therefore, in practice, we often employ the aforementioned unitary $t$-designs to match the first $t$ moments of Haar measures for special computational tasks.
\end{remark}

\section{Approximate programmability and circuit complexity of MO universal unitary programming}
\label{sec:gateComplexityProgramming}
In this section, we comment on the circuit efficiency of the state-of-the-art universal programming scheme, \ie, the learning-based measure-and-operate (MO) programming scheme \cite{BisioOptimalLearning2010, YuxiangOptimal2020}, in terms of its circuit complexity. 

\subsection{Basic setup of universal programming, and the learning-based MO scheme}

\begin{definition}[$\epsilon$-universal quantum processor \cite{YuxiangOptimal2020, Gschwendtner2021programmabilityof}]
\label{def:UniversalProcessor}
    A quantum processor for unitaries in $ \U(d)$ is a tuple $(\mathcal{C}, \{\psi_{U}\}_{U \in \U(d)})$ where $\mathcal{C} \in \cptp(\mathcal{H} \otimes \mathcal{H}_P)$ and $\psi_{U} \in \Dens( \mathcal{H}_P)$ (the `quantum chip'). The space $\mathcal{H}_P$ supports the programmability of the processor, and the processor channel $\mathcal{C}$ is independent of the choice of $U$. The output of the processor when programming $\mathcal{U}$ is given by $\mathcal{E}_U(\cdot) := \Tr_{\mathcal{H}_P}\left[ \mathcal{C}(\cdot \otimes \psi_{U}) \right]$. For any $\epsilon \in (0, 1]$, a processor $\mathcal{E}_U$ is $\epsilon$-universal if
    $$
    \forall \, U \in \U(d), \quad \frac{1}{2} \left\| \mathcal{U} - \mathcal{E}_{U} \right\|_{\diamond} \leq \epsilon.
    $$
\end{definition}

\begin{figure}[t!]
    \centering
    \includegraphics[width=0.7\linewidth]{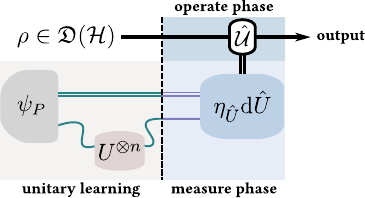}
    \caption{The learning-based MO universal programming scheme.}
    \label{fig:MOScheme}
\end{figure}

The learning-based MO scheme provides a unified framework for programming arbitrary unitary gates in a semi-classical fashion, consisting of \textbf{(1)} coherently learning parallel copies of the unitary gate $U$ with a properly entangled probe state $\ket{\psi_P}$ and storing the information in the quantum register in the form of a program state $\ket{\psi_{P, U}}$; and \textbf{(2)} retrieving the channel from the register via post-selection with a continuous measurement on $\ket{\psi_{P, U}}$, namely the measure-and-operate operation. A diagrammatic illustration is shown in \autoref{fig:MOScheme}. The probe state and the POVM utilize a representation-theoretic design to achieve sample-optimal $\epsilon$-approximate programming at a Heisenberg limit \cite{YuxiangOptimal2020}.
\par  
It is proved in \cite[Lemma 2]{BisioOptimalLearning2010} that due to symmetry, the optimal probe state for learning $n$ parallel copies of an agnostic unitary is of the form $\ket{\psi_P} = \bigoplus_{\lambda \in \mathsf{S}} \sqrt{q_{\lambda}} \ket{\Phi_{W_{\lambda}^d}^+} \otimes \ket{\eta_{\lambda}}$, where $\mathsf{S} \subseteq \mathsf{S}_n^d := \left\{ \lambda~|~\lambda \partition{d}{n} \right\}$ is a subset of all legitimate partitions, $\{q_{\lambda}\}_{\lambda \in \mathsf{S}}$ is a probability distribution, and $\ket{\eta_{\lambda}}$ is an arbitrary bipartite state on $V_{\lambda}^{\otimes 2}$. The learned program state can be expressed by $\ket{\psi_{P, U}} = (U \otimes I)^{\otimes n} \ket{\psi_P}$. To retrieve the unitary, the optimal measurement \cite[Theorem 1]{BisioOptimalLearning2010} is given by $\{ \ket{\eta_{\hat U}}\bra{\eta_{\hat U}} \dd \mu(\hat U) \}$, where $\ket{\eta_{\hat U}} = (\hat U \otimes I)^{\otimes n} \ket{\psi_0}$ and the frame vector $\ket{\psi_0} = \bigoplus_{\lambda \in \mathsf{S}}  \dim W^d_{\lambda} \ket{\Phi_{W_{\lambda}^d}^+} \otimes \ket{\eta_{\lambda}}$. If we use the notation $\chi_{U} := \Tr[U]$ to denote the character of $\U(d)$, the MO processor can be expressed by the following quantum channel:

\begin{equation}
    \label{eqn:MOScheme}
    \begin{aligned}
    \mathcal{E}_{\MO, U}(\rho) &= \int_{\U(d)} \Tr\left( \eta_{\hat U} \psi_{P, U} \right) \cdot \hat{\mathcal{U}}(\rho) \dd \mu(\hat U) \\
    &=  \int_{\U(d)} \left| \sum_{\lambda \in \mathsf{S} } \sqrt{q_{\lambda}} \chi_{\hat{U}_{\lambda} U_{\lambda}^{-1} } \right|^2 \cdot \hat{\mathcal{U}}(\rho) \dd \mu(\hat U)  \\
    &= \mathfrak{p} \cdot \mathcal{U}(\rho) + (1 - \mathfrak{p}) \cdot \frac{I_{\mathcal{H}}}{d}, \\
    \mathfrak{p} &= \frac{1}{d^2 - 1} \left(
    \int_{\U(d)} \left|\sum_{\lambda \in \mathsf{S}} \sqrt{q_{\lambda}} \sum_{\gamma \in \mathsf{O}_1(\lambda) } \chi_{\hat U_{\gamma}} \right|^2 \dd \mu (\hat U) - 1
    \right),
    \end{aligned}
\end{equation}
where the set of partitions $\mathsf{O}_{1}(\lambda) \subseteq \mathsf{S}_{n+1}^d$ is induced by the tensor product of shapes $\lambda \otimes (\Box)$, and $(\Box)$ corresponds to the trivial representation $\hat U_{\Box} = \hat U$. 
Note that the last equality follows from Schur's lemma \cite{Emerson_2005}, since the composite channel $\mathcal{E}_{\MO, U} \circ \mathcal{U}^{\dagger}$ is covariant. The expression of the ``depolarizing'' coefficient $\mathfrak{p}$ is equivalent to the entanglement fidelity \cite[Equation B13]{YuxiangOptimal2020} up to a scalar, which the authors use as an intermediate identity to derive the diamond norm distance between $\mathcal{E}_{\MO, U}$ and $\mathcal{U}$. Alternatively, our formulation in Equation~\ref{eqn:MOScheme} allows direct evaluation.

\subsection{Circuit complexity of unitary programming with the optimal MO scheme}
\label{subsec:CircuitComplexityForUnitaryProgramming}
Although \cite{BisioOptimalLearning2010, YuxiangOptimal2020} have demonstrated the information-theoretic optimality of the learning-based MO programming scheme, its operational feasibility, \ie, circuit complexity, remains unexplored. In the following context, we discuss the circuit complexity of implementing the measure-and-operate phase upon obtaining the program state in the optimal MO programming scheme. Firstly, preparing the observable $\psi_0$ requires applying the Schur transformation to the computational basis of $\mathcal{H}^{\otimes n}$, generating entanglement in the Schur-Weyl basis, and adjusting the amplitudes by a Grover-type algorithm \cite{grover2002creatingsuperpositionscorrespondefficiently}. The Schur transformation can be implemented with $\mathcal{O}\left(\poly(n, \log d, \log(1 / \zeta))\right)$ elementary operations with error $\zeta$ in operator norm \cite{Krovi2019efficienthigh, burchardt2025highdimensionalquantumschurtransforms}, while the subsequent entangling and amplitude-tuning operations require $\Theta(n \log d)$ gates \cite{OptimalStatePreparaition2022}. Suppose that the Schur transformation is $\zeta$-approximate and produces state $\ket{\widetilde{\psi}_0}$. Then $\| \widetilde{\psi}_0 - \psi_0\|_1 \leq \zeta$, and $\left\| \widetilde{\eta}_{\hat U} - \eta_{\hat U} \right\|_1 \leq \zeta$. The corresponding channel $\widetilde{\mathcal{E}}_{\MO, U}$ satisfies
 $$
 \begin{aligned}
\frac{1}{2} \left\| \widetilde{\mathcal{E}}_{\MO, U} - \mathcal{E}_{\MO, U} \right\|_{\diamond} &\leq \frac{1}{2}\int_{\U(d)} \left| \Tr((\widetilde{\eta}_{\hat U} - \eta_{\hat U}) \psi_{P, U}) \right| \dd \mu(\hat U) \\
 &\leq \int_{\U(d)}  d_{\Tr}\left(\widetilde{\eta}_{\hat U}, \eta_{\hat U} \right)  \dd \mu(\hat U) \\
 &\leq \frac{1}{2}\zeta.
 \end{aligned}
 $$
 One can readily verify that this inequality holds for any ensemble other than the Haar measure. For simplicity, we restrict $\zeta = \mathcal{O}(\epsilon)$ and ignore this error term in the latter context. 
 \par A major computational burden lies in applying the optimal POVM to the program state. Mathematically, this is equivalent to first using Haar random resources to generate the matrix $\hat U^{\otimes n}$, use it to synthesize the measurement operator $\eta_{\hat U}$, and perform the two-outcome PVM $\{\eta_{\hat U}, I - \eta_{\hat U}  \}$. The quantum processor applies $\hat{\mathcal{U}}$ to the input state if the outcome ``$\eta_{\hat U}$'' occurs, and acts trivially (aborts) otherwise. Therefore, the circuit complexity of implementing the POVM is closely tied to the generation\footnote{For the generation of the $n$-tensor $\hat U^{\otimes n}$, the circuit depth is unaffected, as we can generate identical gates with a fixed configuration on $n$ sites in parallel.} of $\hat U$. In practice, we might use unitary designs to mitigate the circuit depth requirement. We evaluate the robustness of the MO scheme's performance in terms of unitary design accuracy in the following context, before which we show a matrix inequality lemma.

 \begin{lemma}
 \label{lemma:matrixInequality}
     For two CPTP maps $\mathcal{A}, \mathcal{B} \in \cptp(\mathcal{H})$ that satisfy $(1 - \upsilon) \mathcal{A} \preceq \mathcal{B} \preceq (1 + \upsilon) \mathcal{A}$ for some $\upsilon \in [0, 1)$, the tensor product Hilbert space $\mathcal{H} = \mathcal{H}_S \otimes \mathcal{H}_0$,  any Hermitian and positive semidefinite operators $X$, $Y = Y_{S} \otimes I_0 \in \mathcal{B}(\mathcal{H})$, it holds that
     $$
     \left\| \Tr_{\mathcal{H}_S}\left[ \left(\mathcal{A} - \mathcal{B}\right)(X)  Y  \right]  \right\|_{1} \leq \upsilon  \left\| \Tr_{\mathcal{H}_S} \left[ \mathcal{A}(X) Y \right]  \right\|_1.
     $$
 \end{lemma}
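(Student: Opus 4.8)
The plan is to reduce the statement to an elementary operator inequality on $\mathcal{H}_0$ and then bound a trace norm via the Jordan decomposition. First I would unpack the relative order hypothesis: the superoperator inequalities $(1-\upsilon)\mathcal{A}\preceq\mathcal{B}\preceq(1+\upsilon)\mathcal{A}$ say precisely that $\mathcal{B}-(1-\upsilon)\mathcal{A}$ and $(1+\upsilon)\mathcal{A}-\mathcal{B}$ are (completely) positive maps, so evaluating them on the positive semidefinite input $X$ yields the operator sandwich $-\upsilon\,\mathcal{A}(X)\preceq(\mathcal{A}-\mathcal{B})(X)\preceq\upsilon\,\mathcal{A}(X)$ on $\mathcal{H}$. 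Writing $M:=\mathcal{A}(X)\succeq0$ and $\Delta:=(\mathcal{A}-\mathcal{B})(X)$ (Hermitian), the task becomes to push this sandwich through the operation $Z\mapsto\Tr_{\mathcal{H}_S}[ZY]$ without loss.

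The key technical device is that, because $Y=Y_S\otimes I_0$ acts as the identity on the factor $\mathcal{H}_0$ that survives the partial trace, it enjoys a restricted cyclicity: for every operator $Z$ one has $\Tr_{\mathcal{H}_S}[ZY]=\Tr_{\mathcal{H}_S}[YZ]$, and consequently, writing $Y^{1/2}=Y_S^{1/2}\otimes I_0$, $\Tr_{\mathcal{H}_S}[ZY]=\Tr_{\mathcal{H}_S}[Y^{1/2}ZY^{1/2}]$. This is exactly what rescues the argument from the fact that $MY$ need not be Hermitian: the symmetrized form $\Tr_{\mathcal{H}_S}[Y^{1/2}ZY^{1/2}]$ is manifestly Hermitian for Hermitian $Z$ and positive semidefinite for $Z\succeq0$. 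I would verify this identity by a direct index computation on product operators $Z=A_S\otimes C_0$ and extend by linearity.

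With the identity in hand, I would apply the congruence $Z\mapsto Y^{1/2}ZY^{1/2}$ (which preserves $\preceq$) followed by the partial trace $\Tr_{\mathcal{H}_S}$ (a positive map, hence order-preserving) to the operator sandwich, obtaining $-\upsilon\widetilde M\preceq\widetilde\Delta\preceq\upsilon\widetilde M$ on $\mathcal{H}_0$, where $\widetilde M:=\Tr_{\mathcal{H}_S}[MY]\succeq0$ and $\widetilde\Delta:=\Tr_{\mathcal{H}_S}[\Delta Y]$ is Hermitian. Finally I would bound $\|\widetilde\Delta\|_1$ through the Jordan decomposition $\widetilde\Delta=\widetilde\Delta_+-\widetilde\Delta_-$ with support projectors $P_\pm$: sandwiching the two one-sided inequalities $\pm\widetilde\Delta\preceq\upsilon\widetilde M$ by $P_\pm$ and taking traces gives $\Tr\widetilde\Delta_\pm\le\upsilon\Tr[P_\pm\widetilde M]$, whence $\|\widetilde\Delta\|_1=\Tr\widetilde\Delta_++\Tr\widetilde\Delta_-\le\upsilon\,\Tr[(P_++P_-)\widetilde M]\le\upsilon\,\Tr\widetilde M=\upsilon\|\widetilde M\|_1$, using $P_++P_-\preceq I$ and $\widetilde M\succeq0$.

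The main obstacle I anticipate is the non-Hermiticity of the products $\mathcal{A}(X)Y$ appearing inside the partial trace, which makes it unclear a priori that the quantities being normed are even Hermitian, let alone comparable in the Löwner order. The restricted-cyclicity and symmetrization step above is precisely what removes this difficulty; the remaining ingredients (order preservation under congruence and under partial trace, and the Jordan-decomposition trace bound) are then routine.
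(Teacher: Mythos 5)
Your proof is correct, and its skeleton coincides with the paper's: both begin by evaluating the relative-order hypothesis on the positive semidefinite input $X$ to obtain the sandwich $-\upsilon\,\mathcal{A}(X)\preceq(\mathcal{A}-\mathcal{B})(X)\preceq\upsilon\,\mathcal{A}(X)$, and both symmetrize with $\sqrt{Y_S}\otimes I_0$ before tracing out $\mathcal{H}_S$. You deviate in two executions, one cosmetic and one substantive. Cosmetically, where the paper shuttles between the trace norm and traces via the duality $\|F\|_1=\sup_{\|Q\|\leq 1}\Tr[FQ]$ with test operators $I_S\otimes Q$, you instead establish the identity $\Tr_{\mathcal{H}_S}[ZY]=\Tr_{\mathcal{H}_S}\bigl[(\sqrt{Y_S}\otimes I_0)\,Z\,(\sqrt{Y_S}\otimes I_0)\bigr]$ directly from the restricted cyclicity of the partial trace against operators of the form $B_S\otimes I_0$; the two devices are interchangeable, yours being marginally more elementary. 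Substantively, your endgame is more rigorous than the paper's. The paper passes through $\bigl|\Tr_{\mathcal{H}_S}(\widetilde{Z})\bigr|\preceq\Tr_{\mathcal{H}_S}(|\widetilde{Z}|)\preceq\upsilon\,\Tr_{\mathcal{H}_S}(\widetilde{W})$, which tacitly invokes the implication $-B\preceq A\preceq B\Rightarrow|A|\preceq B$; this implication is false in general, e.g.\ $A=\mathrm{diag}(1,-1)$ and $B=\bigl(\begin{smallmatrix}2 & 3/2\\ 3/2 & 2\end{smallmatrix}\bigr)$ satisfy $\pm A\preceq B$ while $|A|=I\not\preceq B$. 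Your route avoids the operator absolute value entirely: push the sandwich through the congruence and the (positive) partial trace to get $-\upsilon\widetilde{M}\preceq\widetilde{\Delta}\preceq\upsilon\widetilde{M}$ on $\mathcal{H}_0$, then bound $\|\widetilde{\Delta}\|_1\leq\upsilon\,\Tr[(P_++P_-)\widetilde{M}]\leq\upsilon\,\Tr\,\widetilde{M}=\upsilon\,\|\widetilde{M}\|_1$ using the Jordan support projectors and $\widetilde{M}\succeq 0$. This reaches the same conclusion while effectively repairing the one loosely justified step in the paper's argument, so the proposal is not merely correct but a slight strengthening of the published write-up.
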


 \begin{proof}
    Note that $(1 - \upsilon) \mathcal{A} \preceq \mathcal{B} \preceq (1 + \upsilon) \mathcal{A}$ implies
    \begin{equation}
    \label{eqn:LowenerOrder}
   -\upsilon \mathcal{A}(X) \preceq \left(\mathcal{A} - \mathcal{B}\right)(X) \preceq \upsilon \ml{A}(X).
    \end{equation}
    Denote $Z = (\mathcal{A} - \mathcal{B})(X)$, $W = \mathcal{A}(X)$. Using \cite[Equation 9.22]{Nielsen2012} that $\|F\|_1 = \sup_{Q: \|Q\| \leq 1} \Tr\left[ FQ \right]$, we can rewrite the expression of the trace norm on both sides:
    $$
    \begin{aligned}
        \left\| \Tr_{\mathcal{H}_S}\left[ Z  Y  \right]  \right\|_{1}  &= \sup_{Q:\|Q\| \leq 1 } \Tr\left[  \Tr_{\ml{H}_S}\left[Z  Y \right]  Q  \right] \\
        &= \sup_{Q: \|Q\| \leq 1} \Tr\left[ (Z  Y) (I_S \otimes Q) \right] \\
        &= \sup_{Q: \|Q\| \leq 1} \Tr\left[ Z  \left( Y_S \otimes Q  \right) \right].
    \end{aligned}
    $$
    Since $Y_S \succeq 0$, we take its square root $\sqrt{Y_S}$. Denote $\widetilde{Z} = \left(\sqrt{Y_S} \otimes I_0\right) Z \left(\sqrt{Y_S} \otimes I_0 \right)$, it holds that $\Tr\left[ Z  \left( Y_S \otimes Q  \right) \right] = \Tr\left[ \widetilde{Z}  (I_S \otimes Q)  \right]$. Analogously, we have $\Tr\left[ W  (Y_S \otimes Q) \right] = \Tr\left[ \widetilde{W}  (I_S \otimes Q)  \right]$ if we denote $\widetilde{W} = \left(\sqrt{Y_S} \otimes I_0\right) W \left(\sqrt{Y_S} \otimes I_0 \right)$. The condition stated in Equation~\ref{eqn:LowenerOrder} implies $-\upsilon \widetilde{W} \preceq \widetilde{Z} \preceq \upsilon \widetilde{W}$, and therefore $\left|\Tr_{\mathcal{H}_S}(\widetilde{Z}) \right| \preceq \Tr_{\mathcal{H}_S}(|\widetilde{Z}|) \preceq \upsilon \Tr_{\mathcal{H}_S}(\widetilde{W})$\footnote{
    By the fact that the partial trace operation is a completely positive map \cite{Watrous_2018}.
    }. If we substitute in $\widetilde{Z}$ in the supremum, we have
    $$
    \sup_{Q: \|Q\| \leq 1} \Tr\left[ \widetilde{Z} (I_S \otimes Q)  \right] = \sup_{Q: \|Q\| \leq 1} \Tr\left[ \Tr_{\mathcal{H}_S}(\widetilde{Z})  Q \right] = \left\| \Tr_{\mathcal{H}_S}(\widetilde{Z}) \right\|_1,
    $$
    while the same holds for $\widetilde{W}$.
    Combining the previous statements and that $\|M\|_1 = \||M|\|_1$ for any operator $M$, we have
    $$
    \begin{aligned}
    \left\| \Tr_{\mathcal{H}_S}\left[ \left(\mathcal{A} - \mathcal{B}\right)(X)  Y  \right]  \right\|_{1} &= \sup_{Q: \|Q\| \leq 1} \Tr\left[\widetilde{Z}  (I_S \otimes Q) \right] = \left\| \Tr_{\mathcal{H}_S}(\widetilde{Z}) \right\|_1 = \left\| \left| \Tr_{\mathcal{H}_S}(\widetilde{Z}) \right| \right\|_1 \\
    &\leq \upsilon \left\| \Tr_{\mathcal{H}_S} (\widetilde{W}) \right\|_1 = \upsilon \sup_{Q: \|Q\| \leq 1} \Tr\left[ \widetilde{W} (I_S \otimes Q)  \right] \\
    &= \upsilon  \left\| \Tr_{\mathcal{H}_S} \left[ \mathcal{A}(X)  Y \right]  \right\|_1.
    \end{aligned}
    $$
    This completes the proof.
 \end{proof}

\begin{theorem}
\label{thm:ApproximateMOScheme}
    A relative $\delta$-approximate unitary $(n+1)$-design implements the measure phase $\delta$-approximately using the MO scheme that learns $n$ copies of $U$, assuming the observable $\psi_0$ is prepared perfectly. 
\end{theorem}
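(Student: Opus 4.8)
The plan is to recognize that the integrand of $\mathcal{E}_{\MO, U}$ is, after tracing out the program register, an $(n+1)$-moment operator evaluated on the Haar measure, so that replacing $\mu$ by a design $\nu$ amounts to replacing $\mathcal{M}_{\haar}^{(n+1)}$ by $\mathcal{M}_{\nu}^{(n+1)}$; the resulting discrepancy is then controlled directly by Lemma \ref{lemma:matrixInequality}. First I would fix the register bookkeeping. The program state $\ket{\psi_{P, U}} = (U \otimes I)^{\otimes n}\ket{\psi_P}$ and the observable $\ket{\eta_{\hat U}} = (\hat U \otimes I)^{\otimes n}\ket{\psi_0}$ live on a bipartite register $PA$, where $\hat U^{\otimes n}$ acts on the ``$P$'' half ($n$ copies of $\mathcal{H}$) and trivially on the ancillary ``$A$'' half; crucially $\psi_{P, U}$ carries the fixed true $U$ and is inert under the $\hat U$-average. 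Adjoining a reference $R$ for the diamond norm, the $\hat U$-dependent part of $(\mathcal{E}_{\MO, U} \otimes \mathcal{I}_R)(\rho_{SR})$ can be written as
$$
\Tr\!\left(\eta_{\hat U}\psi_{P, U}\right)\cdot(\hat{\mathcal{U}} \otimes \mathcal{I}_R)(\rho_{SR}) = \Tr_{PA}\!\left[\,\hat U^{\otimes(n+1)}(\psi_0 \otimes \rho_{SR})(\hat U^{\dagger})^{\otimes(n+1)}\cdot(\psi_{P, U} \otimes I_{SR})\,\right],
$$
where the $n+1$ copies of $\hat U$ act jointly on the $n$ ``$P$''-copies and on the system $S$, leaving $A$ and $R$ passive. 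This is exactly where the count $n+1$ originates: $n$ copies are consumed by the learned observable $\eta_{\hat U}$ and one further copy by the retrieved channel $\hat{\mathcal{U}}$.

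Next I would average. Taking $\hat U$ Haar identifies the averaged conjugation with $\mathcal{M}_{\haar}^{(n+1)} \otimes \mathcal{I}_{AR}$ acting on $\psi_0 \otimes \rho_{SR}$, recovering $(\mathcal{E}_{\MO, U} \otimes \mathcal{I}_R)(\rho_{SR})$; taking $\hat U \sim \nu$ replaces it by $\mathcal{M}_{\nu}^{(n+1)} \otimes \mathcal{I}_{AR}$ and defines the design-based channel $\mathcal{E}_{\MO, U}^{\nu}$. Subtracting,
$$
\big((\mathcal{E}_{\MO, U}^{\nu} - \mathcal{E}_{\MO, U}) \otimes \mathcal{I}_R\big)(\rho_{SR}) = \Tr_{PA}\!\left[\big((\mathcal{M}_{\nu}^{(n+1)} - \mathcal{M}_{\haar}^{(n+1)}) \otimes \mathcal{I}_{AR}\big)(\psi_0 \otimes \rho_{SR})\cdot(\psi_{P, U} \otimes I_{SR})\right].
$$

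Finally I would invoke Lemma \ref{lemma:matrixInequality} with $\mathcal{A} = \mathcal{M}_{\haar}^{(n+1)} \otimes \mathcal{I}_{AR}$, $\mathcal{B} = \mathcal{M}_{\nu}^{(n+1)} \otimes \mathcal{I}_{AR}$, $\upsilon = \delta$, the positive semidefinite input $X = \psi_0 \otimes \rho_{SR}$, and $Y = \psi_{P, U} \otimes I_{SR}$, tracing out the register $PA$ and keeping $SR$. The operator $Y$ has exactly the required product form, being the identity on the kept space $SR$, and $X, Y \succeq 0$. The relative $\delta$-design inequality $(1 - \delta)\mathcal{M}_{\haar}^{(n+1)} \preceq \mathcal{M}_{\nu}^{(n+1)} \preceq (1 + \delta)\mathcal{M}_{\haar}^{(n+1)}$ supplies the hypothesis and is inherited by the $\otimes\,\mathcal{I}_{AR}$-extended maps, since the difference maps remain completely positive under tensoring with an identity channel. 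The lemma then yields, for every $\rho_{SR}$,
$$
\big\|((\mathcal{E}_{\MO, U}^{\nu} - \mathcal{E}_{\MO, U}) \otimes \mathcal{I}_R)(\rho_{SR})\big\|_1 \leq \delta\,\big\|(\mathcal{E}_{\MO, U} \otimes \mathcal{I}_R)(\rho_{SR})\big\|_1 = \delta,
$$
using that the Haar MO channel is trace-preserving, so its output is a normalized state of unit trace norm. Taking the supremum over $\rho_{SR}$ gives $\|\mathcal{E}_{\MO, U}^{\nu} - \mathcal{E}_{\MO, U}\|_{\diamond} \leq \delta$, i.e.\ the measure phase is implemented $\delta$-approximately. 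The assumption that $\psi_0$ is prepared perfectly is what lets the conjugation act on the exact $\psi_0$, isolating the design error from the state-preparation error $\zeta$ treated earlier.

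I expect the main obstacle to be the first step: cleanly exhibiting the product of the learned-observable term and the retrieved channel as a single $(n+1)$-fold conjugation $\hat U^{\otimes(n+1)}(\cdot)(\hat U^{\dagger})^{\otimes(n+1)}$ with $A$ and $R$ as spectators, and verifying that the relative-design operator inequality descends to the $\otimes\,\mathcal{I}_{AR}$ extension so that Lemma \ref{lemma:matrixInequality} applies verbatim. Once this moment-operator form is in place, the bound follows mechanically.
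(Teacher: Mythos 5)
Your proposal is correct and takes essentially the same route as the paper's proof: the paper likewise rewrites the integrand as a partial trace over the program ancillas of the $(n+1)$-fold moment channel $\mathcal{Q}^{(n+1)}_{\nu} \cong \mathcal{M}^{(n+1)}_{\nu} \otimes \mathcal{I}^{\otimes (n+1)}$ applied to $\psi_0 \otimes \rho_{\mathsf{OR}}$, multiplied against $Y = \psi_{P,U} \otimes I_{\mathsf{OR}}$, and then invokes Lemma \ref{lemma:matrixInequality} together with trace preservation of the Haar MO channel to obtain the bound $\delta$. Your register arrangement ($\mathcal{M}^{(n+1)}_{\nu} \otimes \mathcal{I}_{AR}$ with $A$ and $R$ as spectators) differs from the paper's interleaved form $(\hat{\mathcal{U}} \otimes \mathcal{I})^{\otimes n} \otimes \hat{\mathcal{U}}$ only by a reordering isomorphism that the paper itself records via $\mathcal{Q}^{(t)}_{\nu} \cong \mathcal{M}^{(t)}_{\nu} \otimes \mathcal{I}^{\otimes t}$.
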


\begin{proof}
    Note that the channel $\mathcal{E}_{\MO, U}$ can be rewritten as
    $$
    \begin{aligned}
        \mathcal{E}_{\MO, U} &= \int_{\U(d)} \Tr\left( \eta_{\hat U} \psi_{P, U} \right) \cdot \hat{\mathcal{U}} \dd \mu(\hat U) \\
        &= \int_{\U(d)} \Tr\left( (\hat U \otimes I)^{\otimes n} \psi_0 { (\hat U \otimes I)^{\otimes n} }^{\dagger} \psi_{P, U} \right) \cdot \hat{\mathcal{U}} \dd \mu(\hat U).
    \end{aligned}
    $$
    When replacing the Haar random ensemble with a $\delta$-approximate ensemble, we first sample $S \sim \nu_{\delta}$ and apply the $n$-wise tensor to obtain the POVM operator. Denote the $n$ ancilla registers into which the program state is inserted by $\mathsf{A}_1, \dots, \mathsf{A}_n$, for any state $\rho \in \Dens(\mathcal{H})$,
    \begin{equation}
    \label{eqn:IntegrandReformulation}
    \begin{aligned} &\Tr\left( (\hat U^{\otimes n} \otimes I) \psi_0 ( {\hat{U}^{\otimes n}} \otimes I )^{\dagger} \psi_{P, U} \right) \cdot \hat{\mathcal{U}}(\rho)  
    \\ = \,& \Tr_{\mathsf{A}_1, \dots, \mathsf{A}_n} \left[ (\hat U \otimes I)^{\otimes n} \psi_0 { (\hat U \otimes I)^{\otimes n} }^{\dagger} \psi_{P, U} \otimes \hat{\mathcal{U}}(\rho) \right] \\
    = \, & \Tr_{\mathsf{A}_1, \dots, \mathsf{A}_n} \left[ \left(\left(\left(\hat{\mathcal{U}} \otimes \mathcal{I}\right)^{\otimes n} \otimes \hat{\mathcal{U}}\right)(\psi_0 \otimes \rho) \right) \left(\psi_{P, U} \otimes I \right)  \right].
    \end{aligned}
    \end{equation}    
    For clarity, we append the channel with a superscript to indicate the underlying unitary ensemble, \ie, $\mathcal{E}_{\MO, U}^{\nu}$ when $\nu$ is utilized. It will be convenient to define the channel
    $$
    \mathcal{Q}_{\nu}^{(t)}(\rho) = \ee_{U \sim \nu } \left[
    \left( \mathcal{U} \otimes \mathcal{I}  \right)^{\otimes t}(\rho)
    \right] 
    $$
    for any unitary ensemble $\nu$ and $t \in \mathbb{N}$. Observe that $\mathcal{Q}_{\nu}^{(t)} \cong \mathcal{M}_{\nu}^{(t)} \otimes \mathcal{I}^{\otimes t}$, it follows that $(1 - \delta) \mathcal{Q}_{\haar}^{(t)} \preceq \mathcal{Q}_{\nu}^{(t)} \preceq (1 + \delta) \mathcal{Q}_{\haar}^{(t)}$ when $\nu$ is a relative $\delta$-approximate unitary $t$-design. Denote the auxiliary ancillae in the definition of diamond norm by $\mathsf{R}$, and the working register by $\mathsf{O}$, we obtain
    $$
    \begin{aligned}
        \left\| \mathcal{E}_{\MO, U}^{\nu_{\delta}} - \mathcal{E}_{\MO, U}^{\haar}  \right\|_{\diamond} &= \max_{\ket{\rho_{\mathsf{OR}}}} \left\| \left((\mathcal{E}_{\MO, U}^{\nu_{\delta}} - \mathcal{E}_{\MO, U}^{\haar}) \otimes \mathcal{I}_{\mathsf{R}} \right)(\rho_{\mathsf{OR}})  \right\|_{1}.
    \end{aligned}
    $$
    Note that with the equality in Equation~\ref{eqn:IntegrandReformulation} and that $\mathcal{H}_{\mathsf{R}} \cong \mathcal{H}$, 
    $$
    \begin{aligned}
    &\left\| \left((\mathcal{E}_{\MO, U}^{\nu_{\delta}} - \mathcal{E}_{\MO, U}^{\haar}) \otimes \mathcal{I}_{\mathsf{R}} \right)(\rho_{\mathsf{OR}})  \right\|_{1} \\
    =~&\left\|  \ee_{S \sim \nu_{\delta}}\left[\Tr\left( \eta_{S} \psi_{P, U} \right) \cdot (\mathcal{S} \otimes \mathcal{I}_{\mathsf{R}})(\rho_{\mathsf{OR}}) \right] - \int_{\U(d)} \Tr\left( \eta_{\hat U} \psi_{P, U} \right) \cdot (\hat{\mathcal{U}} \otimes \mathcal{I}_{\mathsf{R}})(\rho_{\mathsf{OR}}) \dd \mu(\hat U)    \right\|_1 \\
    =~& \left\|  \Tr_{\mathsf{A}_1, \dots, \mathsf{A}_n} \left[ \left(\left(\mathcal{Q}_{\nu_{\delta}}^{(n+1)} - \mathcal{Q}_{\haar}^{(n+1)}\right)(\psi_0 \otimes \rho_{\mathsf{OR}})  \right) \left(\psi_{P, U} \otimes I_{\mathsf{OR}} \right)  \right] \right\|_{1} \\
    \leq ~&~\delta \cdot \left\|  \Tr_{\mathsf{A}_1, \dots, \mathsf{A}_n} \left[  \left(  \mathcal{Q}_{\haar}^{(n+1)}(\psi_0 \otimes \rho_{\mathsf{OR}}) \right) \left(\psi_{P, U} \otimes I_{\mathsf{OR}} \right) \right] \right\|_{1} \\
    =~&~\delta \cdot \left\| \left(\mathcal{E}_{\MO, U}^{\haar} \otimes \mathcal{I}_{\mathsf{R}}\right)(\rho_{\mathsf{OR}})  \right\|_1 \\
    =~&~ \delta,
    \end{aligned}
    $$
    where we have used Lemma~\ref{lemma:matrixInequality} in the last inequality. This completes the proof.
    \end{proof}

\begin{table*}[t!]
    \centering
    \begin{tabular}{>{\centering\arraybackslash}p{0.8cm}  >{\centering\arraybackslash}p{5.3cm} >{\centering\arraybackslash}p{3.9cm} > {\centering\arraybackslash}p{3.3cm}}
    \toprule
        \textbf{Ref.} & \textbf{Depth} &\textbf{Condition} &\textbf{Type}\\
        \midrule
        \cite{Harrow2023} & $\mathcal{O}\left(\poly\left(t, \log (1 / \varrho) \right) \cdot N^{1/\mathfrak{K}} \right)$ & $C$ is a $\mathfrak{K}$-lattice &diamond \\
        \cite{Jeongwan2024} & $\mathcal{O}\left(\left( Nt^2 + t \log (1/\varrho)  \right) \log N  \right)$ & $-$ &diamond \& relative \\ \cite{MetgerSimpleConstruction2024} &$\mathcal{O}\left( t \, \poly N + t \log (1 / \varrho) \right)$ &$t \leq 2^{N/4}$ &diamond\\
        \cite{MetgerSimpleConstruction2024} &$\mathcal{O}\left( t^2 \, \poly N + t^2 \log (1 / \varrho) \right)$ &$t \leq 2^{N/4}$ &relative
        \\
        \cite{chen2024incompressibilityspectralgapsrandom} &$\mathcal{O}\left( \left( Nt + 
        \log(1 / \varrho)  \right) \log^7 t \right)$ &$t = \mathcal{O}\left(2^{2N/5} \right)$ &diamond \& relative\\
        \cite{schuster2025randomunitariesextremelylow} &$ \mathcal{O} \left( \left(  \xi t + \log\left(N / \varrho \right) \right) \log^7 t  \right) $ &$t = \mathcal{O}\left(2^{2\xi / 5}\right)$, $\exists \, \xi \geq 1$ &diamond \& relative\\
    \bottomrule
    \end{tabular}
    \caption{Circuit depth upper bound for $\varrho$-approximate unitary $t$-designs on $N$ qubits.}
    \label{tab:kDesignCircuitDepth}
\end{table*}

For an $\epsilon$-universal quantum processor constructed from the MO scheme, by a simple additive argument, $ \frac{1}{2} \| \mathcal{E}_{\MO, U}^{\nu_{ \delta}} - \mathcal{U} \|_{\diamond} \leq \epsilon +  \frac{1}{2}  \delta$. 
To ensure the performance of the retrieval of $U$ from the program state, we require $\delta = \mathcal{O}(\epsilon )$. Note that a sufficiently accurate $(n+1)$-design is necessary, as a large deviation from the optimal covariant POVM \cite{ChiribellaOptimalEst2005, BisioOptimalLearning2010, YuxiangOptimal2020} affects the density of the measurement outcomes, thus reducing the retrieval precision. Moreover, the query complexity of either learning or programming a $d$-dimensional unitary scales as $n = \Omega(d^2)$ \cite{YuxiangOptimal2020, HaahQueryOptimal2023, Zhao_2024}, forcing the design to match the Haar measure to a high order\footnote{
Most constructions of unitary $t$-designs are gate-efficient only when $t = \mathcal{O}\left(2^{N/2}\right)$.
}. As presented in \autoref{tab:kDesignCircuitDepth}, the circuit depth bound becomes trivial as $t \sim 2^{2N}$. Therefore, the circuit depth requirement for implementing the unitary design in the measure phase is almost identical to that of implementing a genuine Haar measure, which requires an elementary gate sequence with depth  $d^2$ \cite{QuantumLogicCircuitSynthesis2005}, resulting in an $\mathcal{O}(nd^2 \log d + \poly(n, \log d, \log(1 / \zeta)))$ gate complexity in implementing the Schur-transformed $n$-wise unitary $\hat{U}^{\otimes n}$. 
Upon obtaining the measurement outcome, it suffices to construct the unitary transformation $\hat{\mathcal{U}}$ from its classical description. We need extra $\mathcal{O}(d^2 \log^3(  d^2 / \tau))$ elementary operations to synthesize it up to $\tau = \mathcal{O}(\epsilon)$ in diamond norm error\footnote{
The diamond norm and the operator norm are equivalent for unitary channels, since $\|U - V\| \leq \|\mathcal{U} - \mathcal{V}\|_{\diamond} \leq 2\|U - V\|$ \cite{Zhao_2024}.
} by the Solovay-Kitaev theorem \cite{kitaev2002classical}. The POVM using a $(n+1)$-design ensemble $\mathsf{X}_{n+1, d}$ requires $\log_2 |\mathsf{X}_{n+1, d}| = \mathcal{O}(d^2 \log n)$ ancillae \cite{Roy_2009}. Therefore, the overall gate complexity scales as $\poly(n, d, \log(1/\zeta), \log(1/\tau), \log(1/\delta))$, with $\mathcal{O}(d^2 \log n)$ ancillae.

\par Finally, physically implementing the MO scheme yields an $\epsilon_{\mathsf{MO}}$-universal processor, where $\epsilon_{\MO}$ is composed of the following terms:
$$
\epsilon_{\mathsf{MO}} = \underbrace{\epsilon}_{\text{optimal retrieval error}} + \underbrace{\frac{1}{2}\zeta}_{ \text{POVM application error} } + \underbrace{\frac{1}{2}  \delta }_{ \text{measure phase error} } + \underbrace{\tau}_{\text{operate phase error}},
$$
and uses $\widetilde{\mathcal{O}}\left(\poly(d, 1 / \sqrt{\epsilon}) \right)$ quantum gates and $\mathcal{O}\left(d^2 \log(d^2 / \epsilon)\right)$ ancillae beyond the queries to the unknown unitary if we take $\zeta, \tau, \delta \sim \epsilon$ and set $n \sim d^2 / \sqrt{\epsilon}$ \cite[Theorem 2]{YuxiangOptimal2020}. A lower circuit complexity is reported in \cite[Theorem 1.1.3]{HaahQueryOptimal2023}; however, the $\poly(d)$ factor remains unavoidable, which is prohibitive for the purpose of designing an operationally efficient universal programming scheme for $\U(d)$. To the best of our knowledge, no gate-efficient algorithm has been proposed. Although there are unitaries that require deeper circuits to generate \cite{Nielsen2012}, recovering a unitary $U$ from the compact program state $\psi_{P, U}$ is generally more challenging than generating it \cite{Haferkamp2022}.

\subsection{Overview of circuit complexity of unitary programming with PbT scheme}
\label{subsec:pbt_based_circuit_complexity}
To compare the circuit complexity analysis for the MO scheme in Section~\ref{subsec:CircuitComplexityForUnitaryProgramming}, we briefly review the circuit complexity of implementing unitary programming with the PbT scheme. As is remarked in \cite{yoshida2025quantumadvantagestorageretrieval}, the PbT scheme is a coherent programming strategy, while the MO scheme breaks the coherence via the covariant measurement. Compared to the MO scheme, the PbT scheme requires an even larger amount of multi-copy entanglement between spatial registers \cite{Mozrzymas2021optimalmultiport}. The computational burden of the PbT scheme also stems from measurement: A Pretty-Good Measurement (PGM) is applied according to the average density matrix over all ports. The primitive Ishizaka-Hiroshima protocol \cite{AsymptoticTeleporation2008} na\"ively applies the raw PGM and yields an $\mathcal{O}(d^{2n})$ circuit complexity, where $n$ is the number of queries to $U$. Subsequent works \cite{fei2023efficientquantumalgorithmportbased, grinko2024efficientquantumcircuitsportbased, EfficientAlgoForAllPbT2024} have introduced various methods to reduce circuit complexity by exploiting the symmetry across different ports. The asymptotic scalings of the complexity of these optimized approaches are consistent with those established in Section~\ref{subsec:CircuitComplexityForUnitaryProgramming}, all requiring a gate complexity of $\mathcal{O}(\poly(n, d))$ and an ancillae complexity of $\mathcal{O}(d^2 \log n)$.

\subsection{Efficient programming of low-depth brickwork circuits with MO scheme}

Although programming an arbitrary unitary by preparing and retrieving a large programming state is resource-intensive, programming a low-depth brickwork circuit can be efficient. Recall from Definition~\ref{def:brickwork_circuit} that the circuit can be decomposed into local unitary operations that apply either sequentially or in parallel. If we approximately program each unitary up to a small error, the whole circuit can be programmed approximately with a satisfying performance, given that the number of gates is bounded. The following statement characterizes the error propagation across a brickwork circuit.

\begin{fact}[\cite{Nielsen2012}]
\label{fact:errorPropagation}
    For a brickwork circuit on $\ell$ gates, if each gate is programmed up to error $\epsilon$ in diamond norm, then the unitary generated by the circuit is programmed up to error $\ell \epsilon$ in diamond norm.
\end{fact}

Since the local gates on the brickwork circuit have a constant dimension $2^k = \mathcal{O}(1)$, using the results from Section~\ref{subsec:CircuitComplexityForUnitaryProgramming}, it requires only $\widetilde{\mathcal{O}}(\poly (1/\sqrt{\epsilon'}))$ quantum gates to program each gate $\epsilon'$-approximately from the collection of their program state by invoking the MO scheme as a subroutine. As per Fact~\ref{fact:errorPropagation}, to program a low-depth circuit $\epsilon$-approximate in diamond norm, setting $\epsilon' = \epsilon / \ell$ would suffice. Therefore, programming a brickwork circuit with a fixed architecture $\epsilon$-close in diamond norm requires 
$$
\ell \cdot \widetilde{\mathcal{O}}\left( \poly( 1 / \sqrt{\epsilon'} ) \right) = \widetilde{\mathcal{O}}\left( \ell \cdot \poly( \sqrt{\ell / \epsilon} ) \right) = \widetilde{\mathcal{O}}\left( \poly \left(N, 1 / \sqrt{\epsilon} \right)  \right),
$$
quantum gates and $\mathcal{O}(\ell \log (1 / \sqrt{\epsilon'}) ) = \widetilde{\mathcal{O}}(N \log (1 / \epsilon) )$ space, by noting that $\ell \leq \frac{ND}{k} = \mathcal{O}(N \polylog N )$.

\begin{remark}
    Low-depth unitary programming via PbT schemes yields an analogous circuit complexity, as detailed in Section~\ref{subsec:pbt_based_circuit_complexity}. 
\end{remark}

\section{Bounds for quantum memory complexity of programming low-depth circuits}

Having discussed the hardness of recovering information about general unitaries from the quantum register in the programming scheme, we are also concerned with how large the register should be to achieve approximate programmability. Precisely, we utilize the program states $\ket{\psi_{P, U}}$ [cf. Section~\ref{sec:gateComplexityProgramming}] to encode the parallel queries to $\mathcal{U}$, henceforth store them in the quantum memory before we ever call the processor to program the unitary. Mathematically, the \textit{program dimension} $d_P$ is defined as the dimension of the subspace where the program states are supported:
$$
d_P := \dim \left( \overline{\Span}\left\{ \psi_{P, U}~|~U \in \U(d) \right\}  \right),
$$
where $\overline{\Span}$ is the closure of the spanned subspace. 
This identity quantifies the (quantum) memory capacity required to store these quantum states. If we restrict to approximate programmability, there is a finite set of program states, and thus we can remove the closure operation. Equivalently, the base-2 logarithm of the dimension $c_P = \log_2 d_P$, or the \textit{program cost}, quantifies how many qubits are needed in the memory, which is the main figure of merit of \cite{Kubicki2019, YuxiangOptimal2020, Gschwendtner2021programmabilityof, yoshida2025quantumadvantagestorageretrieval}. Although prior works have reported either upper and lower bounds for the program cost \cite{ProbabilisticUniversalProcessor2002, MBQComp2007, AsymptoticTeleporation2008, Kubicki2019, YuxiangOptimal2020, Gschwendtner2021programmabilityof, Muguruza2024portbasedstate, yoshida2025quantumadvantagestorageretrieval}, they are non-trivial only when the quantum circuit resides on a constant number of qubits, \ie, $N = \mathcal{O}(1)$. However, low-depth quantum circuits often extend their registers to preserve their computational power \cite{Bravyi2018}. Therefore, we will instead derive the bounds in the large $N$ regime, while the feasible programming error $\epsilon$ yields a lower bound related to the circuit architecture.

\subsection{Lower bound for the program cost}
\label{subsec:CostLowerbound}
We start by presenting several useful lemmas.

\begin{lemma}[{\cite[Appendix A]{YuxiangOptimal2020}}]
\label{lemma:informationCapacity}
    For any $\epsilon$-universal processor $(\ml{C}, \{\psi_{P, U}\}_{U \in \U(d)})$, where $\ml{C}$ is some operations that constructs $\ml{U}$ from each $\psi_{P, U}$, for each $U \in \U(d)$, there exists a channel $\ml{P}_U(\cdot) = \ml{K}_{\ml{C}}  ((\cdot) \otimes \psi_{P, U}) \in \cptp(\ml{H}^{\otimes 2n})$ that acts trivially on the multiplicity subspace, where $\ml{K}_{\ml{C}}$ is a $\ml{C}$-dependent quantum operation, such that
    $$
    \left\| \ml{P}_U - (\ml{U} \otimes \mathcal{I})^{\otimes n} \right\|_{\diamond} \leq 4n\sqrt{2\epsilon}.
    $$
    
\end{lemma}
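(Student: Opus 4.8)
The plan is to realize $\mathcal{P}_U$ as a gate-teleportation supermap that uses the program state $\psi_{P,U}$ as a resource to transfer the programmed action onto an arbitrary input, and then to charge the entire approximation error to the single-copy guarantee of Definition \ref{def:UniversalProcessor}. Concretely, after a preliminary symmetrization (twirling) step that puts the processor in the canonical learning form -- so that $\psi_{P,U}$ is obtained from $n$ parallel queries to $U$ on a fixed probe, which is where the parameter $n$ in $\mathcal{H}^{\otimes 2n}$ enters -- I would build $\mathcal{K}_{\mathcal{C}}$ out of $\mathcal{C}$ together with a Schur-basis Bell measurement between the input register and the probe half of $\psi_{P,U}$, arranged so that a successful teleportation leaves $(\mathcal{U}\otimes\mathcal{I})^{\otimes n}$ acting on the output.

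First I would unfold the hypothesis $\frac{1}{2}\|\mathcal{U}-\mathcal{E}_U\|_{\diamond}\le\epsilon$, with $\mathcal{E}_U(\cdot)=\Tr_{\mathcal{H}_P}[\mathcal{C}(\cdot\otimes\psi_{P,U})]$, on the maximally entangled input $\Phi^{+}$: this shows that the processor approximately outputs the Choi state $(\mathcal{U}\otimes\mathcal{I})(\Phi^{+})$ with trace-distance error at most $2\epsilon$, i.e.\ $\psi_{P,U}$ is an approximate teleportation resource. Passing from this trace-distance bound to a fidelity bound through the Fuchs--van de Graaf inequalities is what introduces the square root and yields a resource of fidelity $1-O(\sqrt{\epsilon})$. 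Next I would invoke Schur--Weyl duality (Lemma \ref{thm:SchurWeyl}): since $U^{\otimes n}\cong\bigoplus_{\lambda}U_{\lambda}\otimes I_{V_{\lambda}}$ is block-diagonal on the representation spaces $W_{\lambda}^{d}$ and acts as the identity on every multiplicity factor $V_{\lambda}$, while $\mathcal{K}_{\mathcal{C}}$ only couples the input to the $W_{\lambda}^{d}$ registers of the resource, the constructed $\mathcal{P}_U$ inherits triviality on the multiplicity subspace for free and the target $(\mathcal{U}\otimes\mathcal{I})^{\otimes n}$ can be matched block by block. Finally, I would teleport the general input through $\psi_{P,U}$ and accumulate the error across the $n$ copies by a telescoping/triangle-inequality argument, using sub-multiplicativity and monotonicity of the diamond norm (Definition \ref{def:distanceNorm}) to reach $\|\mathcal{P}_U-(\mathcal{U}\otimes\mathcal{I})^{\otimes n}\|_{\diamond}\le 4n\sqrt{2\epsilon}$.

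The hard part will be the error accounting, not the construction. Because a \emph{single} program state drives all $n$ teleportations and the copies are coupled through the Schur--Weyl decomposition (the resource is $\bigoplus_{\lambda}$-entangled across copies rather than $n$ independent Choi states), I cannot simply union-bound $n$ independent events; I must show that the joint resource's $O(\sqrt{\epsilon})$ deviation from the ideal still propagates at most linearly in $n$, which is exactly where the factor $n$ (and the constants $4$ and $\sqrt{2}$) originate. A secondary technical point is to verify that $\mathcal{K}_{\mathcal{C}}$ is genuinely CPTP and constructed only from $\mathcal{C}$, with no additional $U$-dependence, so that all dependence on $U$ remains confined to $\psi_{P,U}$ --- a property that is essential for the subsequent program-dimension lower bound.
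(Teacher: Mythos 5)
There is a genuine gap, and it lies in your very first move. The lemma (which this paper cites from \cite[Appendix A]{YuxiangOptimal2020} rather than reproving) must hold for an \emph{arbitrary} $\epsilon$-universal processor $(\ml{C}, \{\psi_{P,U}\}_{U\in\U(d)})$ and for \emph{every} $n$ simultaneously: $n$ is a free parameter, chosen later in Theorem \ref{thm:DimensionLowerbound} as $n \sim 1/\sqrt{\epsilon}$, not a structural feature of the processor. Your ``preliminary symmetrization'' that puts $\psi_{P,U}$ into the canonical learning form $(U\otimes I)^{\otimes n}\ket{\psi_P}$ is therefore both unjustified and circular: an arbitrary family of program states carries no query structure that twirling can expose (covariantizing via $\mathcal{V}^{\dagger}\circ\ml{E}_{VU}$ averaged over $V$ yields covariance, not an $n$-parallel-query structure, and it ties no particular $n$ to the processor), and the entire point of the lemma is to show that a \emph{single, structureless} program state can be leveraged into $n$ approximate uses of $\ml{U}$. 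Your teleportation step also fails on its own terms: a Bell (or Schur-basis) measurement produces byproduct operators conjugated by the unknown $U$, which cannot be corrected without knowing $U$; deterministic gate teleportation works only for Clifford-covariant gates, so ``a successful teleportation leaves $(\ml{U}\otimes\ml{I})^{\otimes n}$ acting on the output'' is at best a postselected, probabilistic event, not a CPTP map of the required form $\ml{K}_{\ml{C}}((\cdot)\otimes\psi_{P,U})$.

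The actual mechanism is \emph{sequential reuse} of the one program register. Define $\ml{K}_{\ml{C}}$ by applying $\ml{C}$ to the first system copy together with the program register, then to the second copy with whatever state remains in the program register, and so on, $n$ times, leaving the $n$ reference copies untouched. Because each use simulates the \emph{unitary} channel $\ml{U}$ to diamond-norm error $2\epsilon$, and unitary channels have pure Choi states, the output of each use approximately decouples from the program register: feeding half of a maximally entangled state, the reduced output is $2\epsilon$-close in trace distance to a pure state, and a Fuchs--van de Graaf/Uhlmann argument bounds the disturbance of the program register by $O(\sqrt{2\epsilon})$ per use. This is exactly where your square root enters---but attached to a gentleness bound on the \emph{leftover program state}, not to a teleportation fidelity---and a telescoping triangle inequality over the $n$ sequential uses (per-step simulation error plus per-step program disturbance) yields the linear accumulation $4n\sqrt{2\epsilon}$. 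Finally, triviality on the multiplicity subspaces is not inherited ``for free'' from Schur--Weyl duality; it is enforced by composing the constructed channel with the Schur-basis operation that acts trivially on (decoheres or discards and reprepares) the $V_{\lambda}$ registers, which cannot increase the diamond distance by monotonicity (Definition \ref{def:distanceNorm}), since $(\ml{U}\otimes\ml{I})^{\otimes n}$ itself acts trivially there by Lemma \ref{thm:SchurWeyl}. Your two closing concerns---that the error must propagate only linearly in $n$ despite correlations, and that $\ml{K}_{\ml{C}}$ must carry no $U$-dependence---identify the right desiderata, but the teleportation construction you propose delivers neither.
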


We will derive the lower bound with an information-theoretic approach, by quantifying the Holevo information of an ensemble generated by a low-depth quantum circuit.

\begin{lemma}[\cite{holevo1973bounds, Nielsen2012}]
\label{lemma:HolevoInformation}
    For any quantum state $\rho \in \Dens(\mathcal{H})$, its \textit{von Neumann entropy} is given by $S(\rho) = -\Tr\left(\rho \log \rho\right)$. For an ensemble of quantum state $\left\{ \rho_x \dd x \right\}_{x \in \mathcal{X}}$, its \textit{Holevo information} is given by
    $$
    \chi\left( \left\{ \rho_x \dd x \right\}_{x \in \mathcal{X}} \right) = S\left( \int_{\mathcal{X}} \rho_x \dd x  \right) - \int_{\mathcal{X}} S(\rho_x) \dd x.
    $$
    Suppose the ensemble is supported on a $d_{\mathcal{X}}$-dimensional subspace; then its Holevo information is bounded above by
    $$
    \chi\left( \left\{ \rho_x \dd x \right\}_{x \in \mathcal{X}} \right) \leq \log d_{\mathcal{X}}.
    $$
    The data processing inequality holds for the Holevo information, \ie, $$
    \chi\left( \left\{ \mathcal{E}(\rho_x) \dd x \right\}_{x \in \mathcal{X}}  \right) \leq \chi\left(\left\{ \rho_x \dd x \right\}_{x \in \mathcal{X}}\right), \quad \forall \, \mathcal{E} \in \cptp(\mathcal{H}).
    $$
\end{lemma}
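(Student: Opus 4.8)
The plan is to dispatch the three substantive claims in turn, treating the definitions of $S(\rho)$ and $\chi$ as given and writing $\bar\rho := \int_{\mathcal{X}} \rho_x \dd x$ for the averaged state, read throughout as a probability ensemble so that $\bar\rho \in \Dens(\mathcal{H})$.

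For the dimension bound I would first observe that every $S(\rho_x) \geq 0$, since the eigenvalues of a density matrix lie in $[0,1]$ and $-p\log p \geq 0$ there; discarding the non-negative subtracted term in the definition of $\chi$ immediately gives $\chi(\{\rho_x \dd x\}) \leq S(\bar\rho)$. Since the ensemble is supported on a $d_{\mathcal{X}}$-dimensional subspace, so is $\bar\rho$, and the von Neumann entropy of any state confined to such a subspace is maximised by the maximally mixed state on it, yielding $S(\bar\rho) \leq \log d_{\mathcal{X}}$. Chaining the two inequalities proves the claim.

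For the data-processing inequality the key step is to rewrite the Holevo information as an averaged quantum relative entropy. Introducing $D(\rho\|\sigma) = \Tr(\rho\log\rho) - \Tr(\rho\log\sigma)$, a direct expansion shows $\chi(\{\rho_x \dd x\}) = \int_{\mathcal{X}} D(\rho_x \| \bar\rho) \dd x$, the cross terms collapsing because linearity of the trace gives $\int_{\mathcal{X}} \Tr(\rho_x \log\bar\rho)\dd x = \Tr(\bar\rho\log\bar\rho) = -S(\bar\rho)$. For any $\mathcal{E} \in \cptp(\mathcal{H})$, linearity also gives $\mathcal{E}(\bar\rho) = \int_{\mathcal{X}} \mathcal{E}(\rho_x)\dd x$, so the averaged state of the processed ensemble is exactly $\mathcal{E}(\bar\rho)$. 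Applying the monotonicity of quantum relative entropy under CPTP maps pointwise, $D(\mathcal{E}(\rho_x)\|\mathcal{E}(\bar\rho)) \leq D(\rho_x\|\bar\rho)$, and integrating over $\mathcal{X}$ yields $\chi(\{\mathcal{E}(\rho_x)\dd x\}) \leq \chi(\{\rho_x \dd x\})$.

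The only genuinely deep ingredient is the monotonicity of relative entropy (Lindblad--Uhlmann), which I would invoke as a cited black box \cite{Nielsen2012} rather than reprove; everything else is bookkeeping. The main subtlety to watch is the continuous-ensemble normalisation: one must treat $\rho_x \dd x$ as a probability measure so that $\int_{\mathcal{X}}\dd x = 1$ and the relative-entropy identity closes with the cross term producing $-S(\bar\rho)$ rather than a spurious constant. This is notational rather than mathematical, but it is the one place where a careless reading would break the algebra.
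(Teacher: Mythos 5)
Your proof is correct, but note that there is nothing in the paper to compare it against: the paper states this lemma as imported background, citing \cite{holevo1973bounds, Nielsen2012} without giving any proof. Your argument is the standard one from those references — for the dimension bound, dropping the nonnegative term $\int_{\mathcal{X}} S(\rho_x)\dd x$ and applying the maximal-entropy bound $S(\bar\rho)\leq \log d_{\mathcal{X}}$ on the supporting subspace; for data processing, the identity $\chi = \int_{\mathcal{X}} D(\rho_x\|\bar\rho)\dd x$ combined with $\mathcal{E}(\bar\rho)=\int_{\mathcal{X}}\mathcal{E}(\rho_x)\dd x$ and Lindblad--Uhlmann monotonicity of relative entropy, invoked as a black box. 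Your normalisation caveat (reading $\rho_x\dd x$ as a probability measure so the cross term collapses to $-S(\bar\rho)$) is indeed the one place where carelessness would break the algebra, and it matters in this paper since the lemma is later applied to ensembles weighted by $\dd\nu_\kappa(V)$ and the Haar measure $\dd\mu(U)$. The only point you might add for completeness is that the relative-entropy identity is not vacuously $\infty=\infty$: since $\bar\rho$ dominates the ensemble, $\supp(\rho_x)\subseteq\supp(\bar\rho)$ for almost every $x$, so each $D(\rho_x\|\bar\rho)$ is finite (in the finite-dimensional setting of the paper).
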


\begin{lemma}[Alicki–Fannes–Winter \cite{Winter2016}, reformulated]
\label{lemma:AFW}
    For any quantum states $\rho, \sigma \in \Dens(\mathcal{H})$ that differ on a $\mathfrak{d}$-dimensional subspace of $\mathcal{H}$, it holds that
    $$
    \left|S(\rho) - S(\sigma)\right| \leq  \log \mathfrak{d} \cdot d_{\Tr}(\rho, \sigma) + \log 2.
    $$
\end{lemma}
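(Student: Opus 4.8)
The plan is to reprove this continuity bound by the ``common part'' construction underlying Winter's tight estimate, which automatically produces the clean additive constant $\log 2$. First I would set $\varepsilon := d_{\Tr}(\rho,\sigma) = \frac12\|\rho-\sigma\|_1$ and take the Jordan decomposition $\rho - \sigma = \Delta_+ - \Delta_-$ into its positive and negative parts, which have orthogonal supports and satisfy $\Tr\Delta_+ = \Tr\Delta_- = \varepsilon$ (using $\Tr(\rho-\sigma)=0$). Because $\rho$ and $\sigma$ agree outside a $\mathfrak{d}$-dimensional subspace, the difference $\rho-\sigma$ — and hence both $\Delta_\pm$ — is supported on that subspace; normalising to states $\omega_\pm := \Delta_\pm/\varepsilon$ therefore gives $S(\omega_\pm) \le \log\mathfrak{d}$. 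This is exactly where the hypothesis ``differ on a $\mathfrak{d}$-dimensional subspace'' enters, and it is the easy ingredient.

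The key step is to form the common (overlap) operator $\tau_0 := \rho - \Delta_+ = \sigma - \Delta_-$, verify that it is positive semidefinite with $\Tr\tau_0 = 1-\varepsilon$, and normalise it to $\hat\tau := \tau_0/(1-\varepsilon)$ (the degenerate cases $\varepsilon\in\{0,1\}$ being handled separately). This yields two convex decompositions with the \emph{same} mixing weight,
\[
\rho = (1-\varepsilon)\hat\tau + \varepsilon\,\omega_+, \qquad \sigma = (1-\varepsilon)\hat\tau + \varepsilon\,\omega_-,
\]
which is the decisive structural gain: the shared term $(1-\varepsilon)\hat\tau$ will cancel. Next I would invoke the two-term entropy mixing bounds: for $\xi = p\alpha + (1-p)\beta$ one has $pS(\alpha)+(1-p)S(\beta)\le S(\xi)\le pS(\alpha)+(1-p)S(\beta)+h(p)$, the lower bound being concavity of $S$ and the upper bound following by appending a classical flag register and using subadditivity, with $h$ the binary entropy. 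Applying the upper bound to $\rho$ and the lower bound to $\sigma$ cancels the $(1-\varepsilon)S(\hat\tau)$ terms and leaves
\[
S(\rho)-S(\sigma) \le \varepsilon\big(S(\omega_+)-S(\omega_-)\big) + h(\varepsilon) \le \varepsilon\log\mathfrak{d} + h(\varepsilon),
\]
where the last step uses $S(\omega_+)\le\log\mathfrak{d}$ and $S(\omega_-)\ge 0$.

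Repeating the argument with the roles of $\rho$ and $\sigma$ swapped gives the matching bound on $S(\sigma)-S(\rho)$, and finally $h(\varepsilon)\le\log 2$ (its global maximum) upgrades the estimate to $|S(\rho)-S(\sigma)| \le \log\mathfrak{d}\cdot d_{\Tr}(\rho,\sigma) + \log 2$, as claimed.

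The main obstacle is choosing the right intermediate state rather than any of the subsequent calculations. The seemingly natural ``upper envelope'' $\tau \propto \rho+\Delta_- = \sigma+\Delta_+$ instead forces an \emph{unequal} mixing weight $\varepsilon/(1+\varepsilon)$ and an additive term $(1+\varepsilon)\,h\!\big(\varepsilon/(1+\varepsilon)\big)$, which can grow to $2\log 2$ and therefore fails to deliver the stated constant; only the overlap construction $\tau_0 = \rho-\Delta_+$, with its common weight $\varepsilon$ on both sides, yields the clean $h(\varepsilon)$. The two technical points to discharge carefully are the positivity $\tau_0 \ge 0$ (seen most cleanly from the orthogonality of the supports of $\Delta_+$ and $\Delta_-$) and the upper mixing bound via the flag register; the support-dimension refinement that replaces the ambient $\log d$ by $\log\mathfrak{d}$ is then immediate.
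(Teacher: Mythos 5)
Your proof has a genuine gap, and it sits exactly at the step you identified as decisive: the ``overlap'' operator $\tau_0 = \rho - \Delta_+ = \sigma - \Delta_-$ is \emph{not} positive semidefinite in general, so the equal-weight decompositions $\rho = (1-\varepsilon)\hat\tau + \varepsilon\,\omega_+$ and $\sigma = (1-\varepsilon)\hat\tau + \varepsilon\,\omega_-$ need not exist. The orthogonality of the supports of $\Delta_+$ and $\Delta_-$ only shows $\bra{v}\tau_0\ket{v}\geq 0$ for $v$ \emph{inside} those supports; it controls none of the cross terms with the rest of the space. Concrete counterexample: $\rho = \ket{0}\bra{0}$, $\sigma = \ket{+}\bra{+}$. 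Then $\rho-\sigma$ has eigenvalues $\pm 1/\sqrt{2}$, the positive part is $\Delta_+ = \tfrac{1}{\sqrt2}\ket{u}\bra{u}$ with $\ket{u} = \cos(\pi/8)\ket{0} - \sin(\pi/8)\ket{1}$, and hence $\bra{1}(\rho - \Delta_+)\ket{1} = -\tfrac{1}{\sqrt 2}\sin^2(\pi/8) < 0$. More structurally: for rank-one $\rho$, $0 \preceq \Delta_+ \preceq \rho$ would force $\Delta_+ \propto \rho$, which fails for any two distinct non-orthogonal pure states. The classical overlap $\min(p,q)$, which $\tau_0$ mimics, simply does not quantize when $\rho$ and $\sigma$ fail to commute --- and this is precisely why the construction you dismissed as the ``wrong'' choice, Winter's $\omega = \tfrac{1}{1+\varepsilon}(\rho + \Delta_-) = \tfrac{1}{1+\varepsilon}(\sigma + \Delta_+)$, is the one actually used in the literature: it is manifestly a state, at the price of the unequal weight $\varepsilon/(1+\varepsilon)$ and the additive term $(1+\varepsilon)\,h\bigl(\varepsilon/(1+\varepsilon)\bigr)$, where $h$ is the binary entropy.

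For calibration: the paper does not prove this lemma at all --- it cites Winter --- and the stated clean constant is not obtained by any such mixing/coupling argument. Winter's coupling, combined with your (correct) observation that $\mathrm{supp}\,\Delta_\pm$ lies in the $\mathfrak{d}$-dimensional subspace so that $S(\omega_\pm) \leq \log\mathfrak{d}$, yields $|S(\rho)-S(\sigma)| \leq \varepsilon\log\mathfrak{d} + (1+\varepsilon)h\bigl(\varepsilon/(1+\varepsilon)\bigr)$, whose additive term can reach $2\log 2$; bounding $h \leq \log 2$ instead gives $\varepsilon\log(2\mathfrak{d}) + \log 2$, still short of the claim. The bound with $h(\varepsilon)$ (hence constant $\log 2$) is the tight Fannes--Audenaert inequality $|S(\rho)-S(\sigma)| \leq \varepsilon\log(\mathfrak{d}-1) + h(\varepsilon)$, whose proof is a genuinely different argument (reduction to eigenvalue spectra via $\sum_i |\lambda_i^{\downarrow}(\rho) - \lambda_i^{\downarrow}(\sigma)| \leq \|\rho-\sigma\|_1$, followed by a classical extremal/majorization analysis), not a common-part decomposition. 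So your scaffolding --- Jordan decomposition, $S(\omega_\pm) \leq \log\mathfrak{d}$, the two mixing bounds for entropy --- is all sound, but the proof cannot be completed along the proposed route; you must either accept Winter's weaker constant or invoke (or reprove) the Audenaert-type tight bound.
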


\begin{lemma}
\label{lemma:BinomialLowerbound}
    For any $m, k \in \mathbb{N}$, it holds that 
    $$
    \binom{m + k}{k} \geq \frac{1}{m + k + 1} \left( 1 + \frac{k}{m+1} \right)^{m+1} \left( 1 + \frac{m}{k+1} \right)^{k+1}.
    $$
\end{lemma}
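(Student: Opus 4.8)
The plan is to first collapse the right-hand side into closed form and then run an induction. Using $\left(1 + \frac{k}{m+1}\right)^{m+1} = \frac{(m+k+1)^{m+1}}{(m+1)^{m+1}}$ and $\left(1 + \frac{m}{k+1}\right)^{k+1} = \frac{(m+k+1)^{k+1}}{(k+1)^{k+1}}$, multiplying by $\frac{1}{m+k+1}$ turns the target into
$$
\binom{m+k}{k} \geq \frac{(m+k+1)^{m+k+1}}{(m+1)^{m+1}(k+1)^{k+1}} =: b_{m,k}.
$$
I would fix $k \geq 0$ and induct on $m \geq 0$. The base case $m=0$ is an equality, since $\binom{k}{k} = 1$ and $b_{0,k} = \frac{(k+1)^{k+1}}{1\cdot(k+1)^{k+1}} = 1$.

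For the inductive step, the identity $\binom{m+k+1}{k} = \frac{m+k+1}{m+1}\binom{m+k}{k}$ combined with the hypothesis $\binom{m+k}{k} \geq b_{m,k}$ reduces the goal to $\frac{m+k+1}{m+1}\,b_{m,k} \geq b_{m+1,k}$. Substituting the closed form, cancelling the common $(k+1)^{k+1}$ and $(m+k+1)$ factors, and cross-multiplying (all terms positive), this becomes the single-variable inequality
$$
\left(\frac{m+2}{m+1}\right)^{m+2} \geq \left(\frac{m+k+2}{m+k+1}\right)^{m+k+2}.
$$
Writing $\psi(x) = \left(1 + \frac{1}{x}\right)^{x+1}$, this is precisely $\psi(m+1) \geq \psi(m+k+1)$, which holds once $\psi$ is non-increasing, because $m+1 \leq m+k+1$.

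The crux is therefore the monotonicity of $\psi$, which I would settle directly: $\ln\psi(x) = (x+1)\ln\!\left(1+\tfrac1x\right)$ has derivative $\ln\!\left(1+\tfrac1x\right) - \tfrac1x$, which is negative for every $x>0$ by the elementary bound $\ln(1+u) < u$. Hence $\psi$ is strictly decreasing, the step closes (with equality only when $k=0$), and the induction is complete. I expect this monotonicity fact to be the only genuine obstacle, the rest being factorial bookkeeping. As a sanity check I would note the Beta-function reformulation $\frac{1}{(m+k+1)\binom{m+k}{k}} = \int_0^1 x^m(1-x)^k \, dx$; bounding this integral crudely by its maximum only yields the weaker estimate with exponents $m^m k^k/(m+k)^{m+k}$, which is why I favour the inductive route that keeps the exact shifted exponents $(m+1)$ and $(k+1)$.
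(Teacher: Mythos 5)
Your proof is correct, and it takes a genuinely different route from the paper's. The paper works additively: it writes $\log\binom{m+k}{k} = \sum_{j=1}^{k}\log\left(1+\frac{m}{j}\right)$, lower-bounds this sum by the integral $\int_{0}^{k}\log\left(1+\frac{m}{x+1}\right)\mathrm{d}x$ (a one-shot sum-versus-integral comparison for the monotone integrand), and then evaluates the antiderivative explicitly, which produces exactly the stated bound after exponentiating. You instead collapse the right-hand side to the closed form $\frac{(m+k+1)^{m+k+1}}{(m+1)^{m+1}(k+1)^{k+1}}$ and run an induction on $m$ via the Pascal-type identity $\binom{m+k+1}{k}=\frac{m+k+1}{m+1}\binom{m+k}{k}$, reducing the inductive step to $\psi(m+1)\geq\psi(m+k+1)$ for $\psi(x)=\left(1+\frac{1}{x}\right)^{x+1}$; I checked the algebra of the reduction and it is exact, and the monotone decrease of $\psi$ follows from $\ln(1+u)<u$ as you say. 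Your route trades the paper's integral evaluation (which requires guessing and verifying an antiderivative) for a single derivative-sign computation on a classical function, and as a bonus it identifies the equality cases ($m=0$ or $k=0$, by the base case and the strictness of $\psi$'s decrease), which the paper's integral comparison obscures. The paper's approach, in exchange, is induction-free and extends verbatim to real $m$, whereas your induction is tied to integer parameters — immaterial here, since the lemma is only invoked with $m=n$ and $k=d^2-1$ both integers. Your closing remark about the Beta-function identity correctly diagnoses why the crude maximum bound only recovers the unshifted exponents $m^m k^k/(m+k)^{m+k}$; the shifted exponents $(m+1)$, $(k+1)$ are precisely what the paper later needs to extract the factor $\left(\frac{ed^2}{n}\right)^n$, so favouring the sharper route was the right call.
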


\begin{proof}
    Taking the logarithm of the left-hand side, we have 
    $$
    \log \binom{m + k}{k} = \log \prod_{j=1}^{k} \left(1 + \frac{m}{j} \right) = \sum_{j=1}^{k} \log \left( 1 + \frac{m}{j} \right).
    $$
    Since the function $\log\left(1 + \frac{m}{x} \right)$ is convex on $\mathbb{R}_{> 0}$, it holds that 
    $$
    \begin{aligned}
        \sum_{j=1}^k \log \left(1 + \frac{m}{j} \right) & \geq \int^{k}_0 \log \left(1 + \frac{m}{x + 1} \right) \dd x \\
        &= m \log \left( \frac{m + x}{e} \right) + x \log \left(1 + \frac{m}{x} \right) \bigg|_{1}^{k+1} \\
        &= (k+1) \log\left(1 + \frac{m}{k+1} \right) + (m+1) \log\left(1 + \frac{k}{m + 1} \right) - \log(m + k + 1).
    \end{aligned}
    $$
    Taking the exponential on both sides of the inequality yields the desired result.
\end{proof}

Now we are ready to present the lower bound for the program cost with bounded programming error.

\begin{theorem}
\label{thm:DimensionLowerbound}
    Programming a quantum circuit on $N$ qubits and depth $D = \mathcal{O}(\polylog N)$ up to error $\epsilon = \Omega\left(1 / \polylog N \right) < \frac{1}{32}$ in diamond norm distance requires a quantum processor with program cost $c_P$ that satisfies
    $$
    c_P \geq { \varpi \left( 1 - \frac{\kappa}{2} \right)^2\left(\frac{1 - \varpi}{4\sqrt{2\epsilon}} - 1\right) } \log_2 \frac{ 4e\sqrt{2\epsilon} d }{(1 - \kappa/2) \varpi } - c_0
    $$
    for any $\varpi \in (0, 1 - 4\sqrt{2\epsilon})$, $\kappa = \Omega(2^{-\polylog(N)}) < 1$, and the constant $c_0 = 5 + \frac{1}{2 \log 2} \approx 5.72$. 
\end{theorem}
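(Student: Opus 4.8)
The plan is to run an information-theoretic (Holevo) converse: exhibit a family of admissible depth-$D$ circuits whose $n$-fold parallel action carries a lot of Holevo information, and then argue that the program register must be large enough to transmit it. Fix the largest integer $n$ with $4n\sqrt{2\epsilon}\le 1-\varpi$, so that $n\ge\frac{1-\varpi}{4\sqrt{2\epsilon}}-1$, and let $\nu$ be a relative $\kappa$-approximate unitary $(n+1)$-design on $\U(d)$ that is realizable as a brickwork circuit of depth $\poly\log N$; such an ensemble exists by the low-depth design constructions tabulated in Table \ref{tab:kDesignCircuitDepth}, and one may take $\kappa$ as small as $2^{-\poly\log N}$ while keeping the depth in $\poly\log N$. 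Since the processor programs every depth-$D$ circuit to error $\epsilon$, it programs in particular every $U\in\supp(\nu)$, so Lemma \ref{lemma:informationCapacity} supplies for each such $U$ a channel $\ml{P}_U(\cdot)=\ml{K}_{\ml{C}}((\cdot)\otimes\psi_{P,U})$ acting trivially on the multiplicity sector with $\|\ml{P}_U-(\ml{U}\otimes\ml{I})^{\otimes n}\|_{\diamond}\le 4n\sqrt{2\epsilon}$.

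Next I fix a pure input $\rho_0$ whose $n$ system copies lie in the symmetric subspace $W_{(n)}^d$ (the trivial-multiplicity sector of $\ml{H}^{\otimes n}$), so that $(\ml{U}\otimes\ml{I})^{\otimes n}$ acts through the irreducible block $U_{(n)}$ and the ideal outputs $\sigma_U:=(\ml{U}\otimes\ml{I})^{\otimes n}(\rho_0)$ span only $W_{(n)}^d$, of dimension $\mathfrak{d}=\dim W_{(n)}^d=\binom{d+n-1}{n}$. Because $\psi_{P,U}\mapsto\ml{P}_U(\rho_0)$ is a single fixed channel of the program state, the support bound and the data-processing inequality of Lemma \ref{lemma:HolevoInformation} give the chain
$$
c_P=\log_2 d_P\ \ge\ \chi\!\left(\{\psi_{P,U}\,\dd\nu\}\right)\ \ge\ \chi\!\left(\{\ml{P}_U(\rho_0)\,\dd\nu\}\right).
$$
Since the two output ensembles differ by at most $d_{\Tr}(\ml{P}_U(\rho_0),\sigma_U)\le 2n\sqrt{2\epsilon}$ on the $\mathfrak{d}$-dimensional space $W_{(n)}^d$, applying Lemma \ref{lemma:AFW} both to the averaged states and to the individual (pure) members bounds the Holevo gap, yielding $\chi(\{\ml{P}_U(\rho_0)\})\ge S(\bar\sigma_\nu)-4n\sqrt{2\epsilon}\,\log_2\mathfrak{d}-2$, where $\bar\sigma_\nu=\ee_{U\sim\nu}[\sigma_U]$ and I used $\chi(\{\sigma_U\})=S(\bar\sigma_\nu)$ for pure states.

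It remains to lower-bound $S(\bar\sigma_\nu)$ and extract the explicit dimension. Because $\nu$ is a relative $\kappa$-approximate $(n+1)$-design, the $n$-th moment twirl of the symmetric pure state stays within relative error $\kappa$ of its Haar value, so $\bar\sigma_\nu$ is $(1\pm\kappa)$-close in Löwner order to the maximally mixed state $\tfrac1{\mathfrak{d}}I_{W_{(n)}^d}$; this is where Schur's lemma and Schur–Weyl duality (Lemma \ref{thm:SchurWeyl}) enter, since on an irreducible block the Haar twirl is maximally mixed. The relative-$\kappa$ closeness degrades $S(\bar\sigma_\nu)$ away from $\log_2\mathfrak{d}$ only through the multiplicative factor $(1-\kappa/2)^2$ and a $(1-\kappa/2)$ shift inside the logarithm. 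Combining with the previous paragraph and using $4n\sqrt{2\epsilon}=1-\varpi$ collapses the surviving fraction $1-4n\sqrt{2\epsilon}=\varpi$ onto the main term $\log_2\mathfrak{d}$; finally I invoke Lemma \ref{lemma:BinomialLowerbound} with $m=d-1$, $k=n$ to turn $\log_2\binom{d+n-1}{n}$ into the product $(\tfrac{1-\varpi}{4\sqrt{2\epsilon}}-1)\log_2\tfrac{4e\sqrt{2\epsilon}d}{(1-\kappa/2)\varpi}$, and collect the additive continuity/design constants into $c_0=5+\tfrac1{2\log2}$.

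The main obstacle is the bookkeeping of two independent approximations against a main term that is itself large. The yield $\log_2\mathfrak{d}\sim n\log_2(d/n)$ is the dominant contribution, so the additive AFW penalty $4n\sqrt{2\epsilon}\,\log_2\mathfrak{d}$ is of the \emph{same} order and must be absorbed \emph{multiplicatively}; this is exactly what forces the cap $n\le\frac{1-\varpi}{4\sqrt{2\epsilon}}$ and leaves only the fraction $\varpi$ of the main term, so the delicate balance is choosing $n$ to trade off the diamond-norm degradation of Lemma \ref{lemma:informationCapacity} against the Holevo yield of Lemma \ref{lemma:BinomialLowerbound}. A secondary obstacle is confirming that the hard family is admissible, i.e.\ that a relative $\kappa$-approximate $(n+1)$-design with $n,\log(1/\kappa)=\poly\log N$ genuinely lives among depth-$\poly\log N$ brickwork circuits; for this I rely on the constructions referenced in Table \ref{tab:kDesignCircuitDepth}, with the benign $(1-\kappa/2)$ factors reflecting that the design error can be pushed to $2^{-\poly\log N}$.
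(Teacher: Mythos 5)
Your skeleton is the paper's own: instantiate the hard family as a low-depth approximate design (Table~\ref{tab:kDesignCircuitDepth}), convert $\epsilon$-universality into the channel $\ml{P}_U$ of Lemma~\ref{lemma:informationCapacity}, chain $c_P \ge \chi(\{\psi_{P,U}\}) \ge \chi(\{\ml{P}_U(\rho_0)\})$ by data processing, pay the $4n\sqrt{2\epsilon}$ penalty through Lemma~\ref{lemma:AFW}, and extract the dimension via Lemma~\ref{lemma:BinomialLowerbound} with $n \sim 1/\sqrt{\epsilon}$. Two of your deviations are fine or even slightly cleaner: using a \emph{relative} design to sandwich $\bar\sigma_\nu$ in L\"owner order costs only $\log(1+\kappa)$ in entropy, versus the paper's diamond-norm payment of $\tfrac{\kappa}{2}\log d_n$; and probing a single irrep rather than the paper's $\ket{\Phi_n}$ spread over all $\lambda$-blocks (with $d_n=\binom{n+d^2-1}{d^2-1}$) is legitimate in principle. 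But there is a genuine gap in your AFW step: you apply Lemma~\ref{lemma:AFW} with dimension $\mathfrak{d}=\dim W_{(n)}^d$, yet nothing confines $\ml{P}_U(\rho_0)$ to the block $W_{(n)}^d$. The clause in Lemma~\ref{lemma:informationCapacity} that $\ml{P}_U$ ``acts trivially on the multiplicity subspace'' does not forbid leakage between distinct $\lambda$-sectors, so the actual and ideal outputs need not differ on a $\mathfrak{d}$-dimensional subspace. This is patchable --- first apply the CPTP map $\rho\mapsto\Pi\rho\Pi+\Tr[(I-\Pi)\rho]\,\tau$ with $\Pi$ the block projector and $\tau$ a fixed state in the block; data processing preserves your lower-bound chain and trace distance does not grow --- but the patch is missing, whereas the paper's choice of working in the full $d_n$-dimensional representation sector is exactly what the multiplicity-trivial clause licenses.

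The second gap is that the stated constants do not follow from your own parametrization. With your cap $4n\sqrt{2\epsilon}\le 1-\varpi$, i.e.\ $n\approx\frac{1-\varpi}{4\sqrt{2\epsilon}}$, the binomial estimate puts $\log_2\frac{4e\sqrt{2\epsilon}\,d}{1-\varpi}$ inside the logarithm, not $\log_2\frac{4e\sqrt{2\epsilon}\,d}{(1-\kappa/2)\varpi}$; in the paper the $(1-\kappa/2)$ factors arise from the \emph{choice} $n=\lceil(1-\kappa/2)\varpi/(4\sqrt{2\epsilon})\rceil$, not from the design error as you assert (``benign $(1-\kappa/2)$ factors reflecting the design error'' is a misattribution). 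Moreover, Lemma~\ref{lemma:BinomialLowerbound} with $m=d-1$, $k=n$ carries the prefactor $\frac{1}{d+n}$, i.e.\ an additive $-\log_2(d+n)\approx -N$ term, which is \emph{not} a constant and cannot be ``collected into $c_0$''; it must be absorbed multiplicatively into the main term, costing roughly one unit of $n$, just as the paper absorbs its $d^{-2}$ prefactor by downgrading $(ed^2/n)^n$ to $(ed/(n-1))^n$. (Minor: an $n$-design suffices here; your $(n+1)$-design is harmless overkill inherited from Theorem~\ref{thm:ApproximateMOScheme}'s setting.) With the paper's choice of $n$, the projection fix, and the prefactor absorption, your route does deliver a bound of the theorem's form and the asymptotic strength $c_P=\Omega(N\,\poly\log N)$, but as written it does not establish the inequality with the stated constant $c_0=5+\frac{1}{2\log 2}$.
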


\begin{proof}
    Suppose we have a diamond norm $\kappa$-approximate unitary $n$-design $\nu_{\kappa}$, where the error parameter $\kappa = \Omega(2^{-\polylog N})$ and $n = \lceil \frac{(1- \kappa/2)\varpi}{4\sqrt{2\epsilon}} \rceil = \mathcal{O}(1 / \sqrt{\epsilon}) = \mathcal{O}(\polylog N)$ for some constant $\varpi$ such that $0 < \varpi < 1 - 4\sqrt{2\epsilon}$. Using Schuster, Haferkamp, and Huang's construction \cite{schuster2025randomunitariesextremelylow} [cf. \autoref{tab:kDesignCircuitDepth}], the unitary design can be generated on a 1D circuit with $\mathcal{O}(\polylog N)$ depth, which is generalizable to any circuit geometry with the same order of depth \cite[Appendix D.1]{schuster2025randomunitariesextremelylow} using local SWAP operations. Therefore, the Holevo information of an ensemble of states generated by $n$ parallel uses of the unitary design on a fixed initial state presents a lower bound for the information required to program our low-depth circuits. Our derivation is based on analyzing the amount of information in the collection of program states
    $
    \chi\left( \left\{ \psi_{P, V} \dd \nu_{\kappa}(V) \right\} \right)
    $.
    Using the property of the Holevo information, when inserting any quantum state $\Phi_n$ into the channel introduced in Lemma~\ref{lemma:informationCapacity}, it holds that
    \begin{equation}
    \label{eqn:HolevoInequality_1}
    \begin{aligned}
        \chi\left( \left\{ \mathcal{P}_V(\Phi_n) \dd \nu_{\kappa}(V) \right\} \right) &= \chi\left( \left\{ \mathcal{K}_{\mathcal{C}}\left( \Phi_n \otimes \psi_{P, V} \right) \dd \nu_{\kappa}(V) \right\} \right) \\
        &\leq \chi\left( \left\{  \Phi_n \otimes \psi_{P, V}  \dd \nu_{\kappa}(V) \right\} \right) \\
        &=  \chi\left( \left\{   \psi_{P, V}  \dd \nu_{\kappa}(V) \right\} \right).
    \end{aligned}
    \end{equation}
    We first note that for any $U \in \U(d)$, both $\mathcal{P}_{U}$ and $\mathcal{U}^{\otimes n}$ act trivially on the symmetric subspaces. Therefore, on the same input state, their outputs only differ on the subspaces that correspond to the irreducible representations of $\U(d)$, which admit dimension
    $$
    d_{n} = \sum_{\lambda \partition{d}{n} } (\dim W_{\lambda}^d)^2.
    $$
    Using the inequalities in Lemma~\ref{lemma:informationCapacity}, \ref{lemma:HolevoInformation} and \ref{lemma:AFW}, for an $\epsilon$-universal quantum processor,
    \begin{equation}
    \label{eqn:HolevoInequality_2}
    \begin{aligned}
   &~\left| \chi\left( \left\{ \ml{P}_V(\Phi_n) \dd \nu_{\kappa}(V) \right\} \right) - \chi\left( \left\{   (\mathcal{V} \otimes \mathcal{I})^{\otimes n}(\Phi_n) \dd \nu_{\kappa}(V) \right\} \right) \right| \\
   \leq &~\left| S \left( \ee_{V \sim \nu_{\kappa} } \left[ \ml{P}_V(\Phi_n) \right]  \right) - S\left( \ee_{V \sim \nu_{\kappa} } \left[ (\mathcal{V} \otimes \mathcal{I})^{\otimes n}(\Phi_n) \right]  \right) \right| \\
   & \quad + \left| \ee_{V \sim \nu_{\kappa} } \left[ S\left( \ml{P}_V(\Phi_n)\right)  - S\left( (\mathcal{V} \otimes \mathcal{I})^{\otimes n}(\Phi_n) \right)\right] \right| \\
   \leq &~\frac{\log d_n}{2} \left\| \ee_{V \sim \nu_{\kappa} } \left[ \left( \ml{P}_V - (\mathcal{V} \otimes \mathcal{I})^{\otimes n} \right)(\Phi_n) \right]  \right\|_{1} + \log 2 \\
   & \quad + \ee_{V \sim \nu_{\kappa} } \left[ \frac{\log d_n}{2} \left\|  \left( \ml{P}_V - (\mathcal{V} \otimes \mathcal{I})^{\otimes n} \right)(\Phi_n)   \right\|_{1} + \log 2 \right] \\
   \leq&~  \log d_n \cdot \ee_{V \sim \nu_{\kappa} } \left[ \left\|\left( \ml{P}_V - (\mathcal{V} \otimes \mathcal{I})^{\otimes n} \right)(\Phi_n)\right\|_1 \right] + 2 \log 2 \\
   \leq&~ 4 n \sqrt{2\epsilon} \log d_n + 2 \log 2.
   \end{aligned}
    \end{equation}
    Furthermore, we consider the ensemble $\chi\left( \left\{   (\mathcal{U} \otimes \mathcal{I})^{\otimes n}(\Phi_n) \dd \mu(U) \right\} \right)$, where $\mu$ is the Haar measure on $\U(d)$. Analogous to the formulation in Equation~\ref{eqn:HolevoInequality_2}, if we take $\Phi_n = \ket{\Phi_n} \bra{\Phi_n}$ as a pure state, the von Neumann entropy in the second term vanishes, and thus
    \begin{equation}
    \label{eqn:HolevoInequality_3}
    \begin{aligned}
    &\left| \chi\left( \left\{   (\mathcal{V} \otimes \mathcal{I})^{\otimes n}(\Phi_n) \dd \nu_{\kappa}(V) \right\} \right) - \chi\left( \left\{   (\mathcal{U} \otimes \mathcal{I})^{\otimes n}(\Phi_n) \dd \mu(U) \right\} \right)  \right| \\
    =~&\left| S \left( \ee_{V \sim \nu_{\kappa} } \left[ (\mathcal{V} \otimes \mathcal{I})^{\otimes n}(\Phi_n) \right]  \right) - S\left( \int_{\U(d)} (\mathcal{U} \otimes \mathcal{I})^{\otimes n}(\Phi_n) \dd \mu(U) \right)  \right| \\
    \leq~&\frac{\log d_n}{2} \left\|\ee_{V \sim \nu_{\kappa} } \left[ (\mathcal{V} \otimes \mathcal{I})^{\otimes n}(\Phi_n) \right] - \int_{\U(d)} (\mathcal{U} \otimes \mathcal{I})^{\otimes n}(\Phi_n) \dd \mu(U)  \right\|_1 + \log 2 \\
    =~&\frac{\log d_n}{2} \left\| \left( \mathcal{Q}_{\nu_{\kappa}}^{(n)} - \mathcal{Q}_{\haar}^{(n)} \right)(\Phi_n)  \right\|_1 + \log 2 \\
    \leq~&\frac{\log d_n}{2} \left\| \mathcal{M}_{\nu_{\kappa}}^{(n)} - \mathcal{M}_{\haar}^{(n)} \right\|_{\diamond} + \log 2 \\
    \leq~& \frac{\kappa}{2} \log d_n + \log 2.
    \end{aligned}
    \end{equation}
    Combining Equation~\ref{eqn:HolevoInequality_1}, \ref{eqn:HolevoInequality_2} and \ref{eqn:HolevoInequality_3}, we have 
    $$
    \begin{aligned}
    \chi\left( \left\{   \psi_{P, V}  \dd \nu_{\kappa}(V) \right\} \right) &\geq \chi\left( \left\{ \mathcal{P}_V(\Phi_n) \dd \nu_{\kappa}(V) \right\} \right) \\
    &\geq \chi\left( \left\{   (\mathcal{V} \otimes \mathcal{I})^{\otimes n}(\Phi_n) \dd \nu_{\kappa}(V) \right\} \right) - 4 n \sqrt{2\epsilon} \log d_n - 2 \log 2 \\
    &\geq \chi\left( \left\{   (\mathcal{U} \otimes \mathcal{I})^{\otimes n}(\Phi_n) \dd \mu(U) \right\} \right) - \left( 4 n \sqrt{2\epsilon} + \frac{\kappa}{2} \right) \log d_n - 3 \log 2 \\
    &= S\left( \int_{\U(d)} (\mathcal{U} \otimes \mathcal{I})^{\otimes n}(\Phi_n) \dd \mu(U) \right) - \left( 4 n \sqrt{2\epsilon} + \frac{\kappa}{2} \right) \log d_n - 3 \log 2.
    \end{aligned}
    $$
    If we take $\ket{\Phi_n}$ as the maximally entangled state
    $$
   \ket{\Phi_n} = \bigoplus_{ \lambda \in \mathsf{S}_{n}^d } \frac{\dim W_{\lambda}^d}{\sqrt{d_n}} \ket{\Phi_{W_{\lambda}^d}^+} \otimes \ket{\eta_{\lambda}},
    $$
    where $\ket{\eta_{\lambda}}$ is again an arbitrary bipartite state on $V_{\lambda}^{\otimes 2}$. Using \cite[Lemma 3.3]{MetgerSimpleConstruction2024}, the Haar random moment operator turns $\Phi_n$ into
    $$
    \mathcal{Q}_{\haar}^{(n)}(\Phi_n) = \bigoplus_{\lambda \in \mathsf{S}_n^d } \frac{I_{{W_{\lambda}^d}^{\otimes 2}}}{d_n} \otimes \ket{\eta_{\lambda}} \bra{\eta_{\lambda}}.
    $$
    Therefore, $S\left(\mathcal{Q}_{\haar}^{(n)}(\Phi_n)\right) = \log d_n$. Putting everything together, and using the upper bound [cf. Lemma~\ref{lemma:HolevoInformation}] $\chi\left( \left\{   \psi_{P, V}  \dd \nu_{\kappa}(V) \right\} \right) \leq \log d_P$, we arrive at the following bound for the cost:
    \begin{equation}
    \label{eqn:DimensionLowerbound}
   \log d_P \geq \left(1 - 4 n \sqrt{2\epsilon} - \frac{\kappa}{2} \right) \log d_n - 3 \log 2. 
    \end{equation}
    The explicit expression for $d_n$ \cite{Regev1981} is given by
    $$
    d_n = \binom{n + d^2 - 1}{d^2 - 1}.
    $$
    Using Lemma~\ref{lemma:BinomialLowerbound}, we have 
    $$
    \begin{aligned}
        d_n \geq \frac{1}{n+ d^2} \left(1 + \frac{n}{d^2}  \right)^{d^2} \left(1 + \frac{d^2 - 1}{n+1} \right)^{n+1} \geq \frac{d^2}{(n + d^2)^2}\left(1 + \frac{n}{d^2}  \right)^{d^2} \left(1 + \frac{d^2}{n} \right)^{n},
    \end{aligned}
    $$
    having used $\left(1 + \frac{d^2 - 1}{n+1} \right)^{n+1} \geq \left(1 + \frac{d^2 - 1}{n} \right)^{n} = \left( 1 + \frac{d^2}{n} \right)^n \left(1 - \frac{1}{n+d^2}\right)^n \geq \frac{d^2}{n + d^2} \left( 1 + \frac{d^2}{n} \right)^n$. Moreover, using the Taylor series, we have 
    $$
    \begin{aligned}
    \left( 1 + \frac{n}{d^2}  \right)^{d^2} &= \exp\left\{ d^2 \log\left(1 + \frac{n}{d^2} \right)  \right\} \geq \exp\left\{ n -  \frac{n^2}{2d^2} \right\}, \\
    \left( 1 + \frac{d^2}{n} \right)^n &\geq \left( \frac{d^2}{n} \right)^n.
    \end{aligned}
    $$
    Since $n = \mathcal{O}(\polylog N)$ and $d = 2^N$, the quotient $n / d$ is vanishing for sufficiently large $N$. When $d \geq n$, it holds that
    $$
    d_n \geq \left( \frac{1}{d +\frac{n}{d} } \right)^2 \exp\left( -\frac{n^2}{2d^2} \right) \left(\frac{ed^2}{n} \right)^n \geq \frac{e^{-\frac{1}{2}}}{4d^2}  \left(\frac{ed^2}{n} \right)^n  \geq \frac{e^{-\frac{1}{2} }}{4}\left( \frac{ed}{n - 1}  \right)^n.
    $$
    Recall that we have set $n = \lceil \frac{(1- \kappa/2)\varpi}{4\sqrt{2\epsilon}} \rceil$, and thus 
    $$
    1 - 4n\sqrt{2\epsilon} - \frac{\kappa}{2} \geq 1 - 4\sqrt{2\epsilon} \left( \frac{(1- \kappa/2)\varpi}{4\sqrt{2\epsilon}} +1 \right) - \frac{\kappa}{2} = \left( 1 - \frac{\kappa}{2} \right)\left(1 - \varpi - 4\sqrt{2\epsilon}\right).
    $$
    Using the above inequality and Equation~\ref{eqn:DimensionLowerbound},
    $$
    \begin{aligned}
    \log d_P &\geq  {\left(1 - 4 n \sqrt{2\epsilon} - \kappa / 2 \right)n} \log \left( \frac{ed}{n - 1} \right) - \left(5 \log 2 + \frac{1}{2} \right) \\
    &\geq  { \varpi \left( 1 - \kappa / 2 \right)^2\left(\frac{1 - \varpi}{4\sqrt{2\epsilon}} - 1\right) } \log \left( \frac{ 4e\sqrt{2\epsilon} d }{(1 - \kappa/2) \varpi } \right) - \left(5 \log 2 + \frac{1}{2} \right),
    \end{aligned}   
    $$
    the desired result follows immediately from the relation $c_P = \log_2 d_P = \log d_P / \log 2$.
\end{proof}

\begin{remark}
    The memory size lower bound presented in \autoref{thm:DimensionLowerbound} has asymptotic $\log_2 d_P = \Omega\left(N \polylog N \right)$ when the parameters $\varpi$ and $\kappa$ are fixed as constants, $D \sim \polylog N$ and the programming error $\epsilon \sim 1/ \polylog N$, which improves the bound $c_P = \Omega\left(N  \log \log N \right)$ stated in \cite[Theorem 7]{Yang_2025} for storing the quantum states prepared by low-depth brickwork quantum circuits.
\end{remark}

Note that our result does not take the conventional form of the No-Programming Theorem, as is announced in \cite{MBQComp2007, Nielsen2012, Kubicki2019, YuxiangOptimal2020}. Instead of fixing the dimension of the unitaries and setting an infinitesimal programming error, we herein assume the error is bounded from below in asymptotics, yet is allowed to vanish as the scale of the circuit grows, or equivalently, $N \to \infty$. This somewhat resembles the notion of ``faithful'' quantum operations defined in \cite{CompressionIdenticalStates,Yang_2025}.

\begin{observation}[High randomness implies large program cost]
\label{observation:randomness_implies_program_cost}
    The main proof idea of \autoref{thm:DimensionLowerbound} is that we can quantify the Holevo information (and thus the program cost) of a low-depth circuit unitary ensemble via its capability to approximate Haar randomness (up to a circuit size-dependent moment $n$). We observe that a family of unitaries is hard to program if it scrambles the information well, that is, generates sufficient randomness. From the perspective of randomness generation, we can informally recover the no-programming theorem in a new way. Take a minimal-size exact unitary $t$-design ensemble $\{ U_x \dd x \}_{x \in \mathcal{X}}$, and the corresponding $s$-moment Choi state ensemble $\{ U_x^{\otimes s} \ket{\Phi_{s}} \bra{\Phi_s} {U_x^\dagger}^{\otimes s} \dd x\}_{x \in \mathcal{X}}$, where $\ket{\Phi_s} = \frac{1}{\sqrt{d^2}} \Ket{I^{\otimes s}}$. When $d^2 \lesssim s \leq t$, its Holevo information reads
    \begin{equation}
    \label{eqn:program_dimension_lowerbound}
    \begin{aligned}
    \log_2 d_P &\geq \chi\left( \left\{ U_x^{\otimes s} \ket{\Phi_s} \bra{\Phi_s} {U_x^\dagger}^{\otimes s} \dd x\right\}_{x \in \mathcal{X}} \right)
       \\
       &= S\left( \int_{\mathcal{X}} U_x^{\otimes s} \ket{\Phi_s} \bra{\Phi_s} {U_x^\dagger}^{\otimes s} \dd x \right) - \int_{\mathcal{X}} S \left( U_x^{\otimes s} \ket{\Phi_s} \bra{\Phi_s} {U_x^\dagger}^{\otimes s} \right) \dd x \\
        &= \sum_{\lambda \vdash_{d} s } \frac{\dim W_{\lambda}^d \dim V_{\lambda}}{d^s} \cdot \log \left( d^s \cdot \frac{\dim W_{\lambda}^d}{\dim V_{\lambda}} \right) \\
        &\geq - \log \left(  \frac{1}{d^{2s}} \sum_{\lambda \vdash_d s} \left(\dim V_{\lambda}\right)^2 \right)=  2s \log d - \log \left( \sum_{\lambda \vdash_d s} \left(\dim V_{\lambda}\right)^2 \right) \\
        &\gtrsim \frac{d^2-1}{2} \log(2s) + \frac{d-1}{2} \log(2\pi) - \frac{d^2}{2} \log d - \sum_{j=1}^d \log (j - 1)! = \Omega(d^2 \log s),
    \end{aligned}
    \end{equation}
    where the first inequality is due to the concavity of the logarithm, and the last asymptotic is due to Regev \cite[Corollary 4.4]{Regev1981} in the large $s$ regime. When $s > t$, the Holevo information yields a simple upper bound: Take any $t$-design $\mathsf{X}_t^d$, by minimality,
    $$
    \chi\left( \left\{  U_x^{\otimes s} \ket{\Phi_s} \bra{\Phi_s} {U_x^\dagger}^{\otimes s} \dd x\right\}_{x \in \mathcal{X}} \right) \leq \log \left| \left\{ U_x^{\otimes s} \ket{\Phi_s} \bra{\Phi_s} {U_x^\dagger}^{\otimes s} \dd x\right\}_{x \in \mathcal{X}} \right| \leq \log |\mathsf{X}_t^d|.
    $$
    By \cite[Theorem 21]{Roy_2009}, there exists a unitary $t$-design on $\U(d)$ with size $|\mathsf{X}_t^d| \leq \binom{d^2  + t - 1}{t}^2$. Therefore, when $s > t$, by minimality of the ensemble,
    $$
    \chi\left( \left\{  U_x^{\otimes s} \ket{\Phi_s} \bra{\Phi_s} {U_x^\dagger}^{\otimes s}\right\}_{x \in \mathcal{X}} \right) \leq \log \binom{d^2  + t - 1}{t}^2 = \mathcal{O}\left( d^2 \log t \right).
    $$
    This formulation reveals that the information capacity of a state ensemble carrying $s$ copies of a unitary $t$-design grows with respect to $s$ when $s \leq t$, and ceases to grow when $s$ exceeds $t$. Specifically, the amortized Holevo information\footnote{
    The amortization is to exclude the effect of the growth of system size.
    } $\frac{1}{\log s} \chi( \{ U_x^{\otimes s} \ket{\Phi_s} \bra{\Phi_s} {U_x^\dagger}^{\otimes s} \dd x\}_{x \in \mathcal{X}}) $ is non-vanishing for all $s \leq t$. Notably, requiring the ensemble of unitaries to form an exact $t$-design for every $t \in \mathbb{N}$ allows taking $s \to \infty$ in Equation~\ref{eqn:program_dimension_lowerbound}, which yields a request for infinite-size memory. Since $\U(d)$ is the only ensemble that satisfies the design requirement, this recovers the No-Programming Theorem.
\end{observation}

\subsection{Upper bound for the program cost}
\label{subsec:CostUpperbound}

A generic upper bound can be derived easily via a metric-space covering-net argument. The idea is intuitive: If the quantum memory can store the discretizations of the space of all unitary transformations with an appropriate resolution, then a carefully designed retrieval algorithm could program any unitary approximately. We formalize the terminology below.

\begin{definition}
\label{def:MetricSpace}
    Let $(\mathscr{X}, \mathfrak{d})$ be a metric space, let $\mathscr{K} \subseteq \mathscr{X}$, and fix $\epsilon > 0$. A subset $\mathscr{N}(\mathscr{K}, \mathfrak{d}, \epsilon) \subseteq \mathscr{K}$ is called an $\epsilon$-net of $\mathscr{K}$ if, for every $x \in \mathscr{K}$, there exists $y \in \mathscr{N}(\mathscr{K}, \mathfrak{d}, \epsilon)$ such that $\mathfrak{d}(x,y) \leq \epsilon$; its cardinality, $\lvert \mathscr{N}(\mathscr{K}, \mathfrak{d}, \epsilon) \rvert$, is called the $\epsilon$-covering number of $\mathscr{K}$.

\end{definition}

\begin{lemma}[{\cite[Lemma 9, reformulated]{Zhao_2024}}]
\label{lemma:CoveringNumberOfUnitaryGroup}
    For any error $0 < \epsilon \leq 1$, the $\epsilon$-covering number of the $d$-dimensional unitary group $\U(d)$ with respect to the diamond norm (distance) $\|\cdot\|_{\diamond}$ of the induced unitary channel satisfies the following inequality:
    $$
    \left|\mathscr{N} \left( \U(d), \|\cdot\|_{\diamond}, \epsilon  \right) \right| \leq \left(\frac{12}{\epsilon}\right)^{2d^2}.   
    $$
\end{lemma}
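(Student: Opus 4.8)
The plan is to reduce the diamond-norm covering problem to an ordinary operator-norm covering problem on the matrix group, and then to control the latter by a standard volumetric packing argument in the ambient space of $d \times d$ complex matrices.

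First I would exploit the equivalence between the diamond norm of unitary channels and the operator norm of the underlying unitaries, namely $\|\mathcal{U} - \mathcal{V}\|_{\diamond} \leq 2\|U - V\|$ (recorded in the footnote above). This implies that whenever $\|U - V\| \leq \varepsilon/2$, the induced channels satisfy $\|\mathcal{U} - \mathcal{V}\|_{\diamond} \leq \varepsilon$. Consequently any $(\varepsilon/2)$-net of $\U(d)$ in operator norm is automatically an $\varepsilon$-net in diamond-norm distance, so it suffices to bound $|\mathscr{N}(\U(d), \|\cdot\|, \varepsilon/2)|$, where $\|\cdot\|$ denotes the operator norm.

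Next I would treat $\U(d)$ as a subset of the real normed vector space $(M_d(\mathbb{C}), \|\cdot\|)$, which as a real vector space has dimension $2d^2$; since every unitary has operator norm $1$, the set $\U(d)$ is contained in the unit ball $B(0,1)$. I would then invoke the standard packing-implies-covering bound: take a maximal $(\varepsilon/2)$-separated subset $\mathscr{N} \subseteq \U(d)$. By maximality it is an $(\varepsilon/2)$-net, while the balls of radius $\varepsilon/4$ centred at its points are pairwise disjoint and all contained in $B(0, 1 + \varepsilon/4)$. Comparing volumes in $\mathbb{R}^{2d^2}$ (where the volume of a ball of radius $r$ scales as $r^{2d^2}$) gives
\[
|\mathscr{N}| \leq \left( \frac{1 + \varepsilon/4}{\varepsilon/4} \right)^{2d^2} = \left( 1 + \frac{4}{\varepsilon} \right)^{2d^2}.
\]
For $\varepsilon \leq 1$ one has $1 + 4/\varepsilon = (\varepsilon + 4)/\varepsilon \leq 5/\varepsilon \leq 12/\varepsilon$, which yields the claimed bound $(12/\varepsilon)^{2d^2}$, in fact with room to spare.

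The argument is essentially routine, so the points requiring attention are bookkeeping rather than genuine obstacles. The mild subtlety is that $\U(d)$ is a measure-zero submanifold of real dimension $d^2$ sitting inside the $2d^2$-dimensional ambient space, so one cannot bound its covering number by any intrinsic volume of the group itself; the volumetric comparison must be performed in the ambient space $M_d(\mathbb{C})$, where the small balls around the net points are genuinely disjoint. This is precisely what forces the exponent $2d^2$ rather than $d^2$ and matches the target bound. The only remaining care is tracking the factor of $2$ incurred by the diamond-to-operator reduction and verifying that the constant collapses to $\leq 12$ under the hypothesis $\varepsilon \leq 1$.
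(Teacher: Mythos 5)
Your proposal is correct, and since the paper simply imports this lemma from \cite[Lemma 9]{Zhao_2024} without reproving it, there is nothing to diverge from: the reduction via $\|\mathcal{U}-\mathcal{V}\|_{\diamond}\leq 2\|U-V\|$ followed by a packing--volume comparison for $\U(d)$ inside the unit ball of $M_d(\mathbb{C})\cong\mathbb{R}^{2d^2}$ is exactly the standard argument behind the cited bound. Your bookkeeping even yields the slightly stronger constant $(1+4/\varepsilon)^{2d^2}\leq(5/\varepsilon)^{2d^2}$ before relaxing to $12$, and you correctly identify why the ambient-space argument forces the exponent $2d^2$.
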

To derive an upper bound for the program cost, we first build an $\epsilon$-net for the unitaries that can be generated by a low-depth quantum circuit. Using a slightly modified version of Fact~\ref{fact:errorPropagation}, we can arrive at the following statement.

\begin{lemma}
\label{lemma:EpsilonNetForBrickworkCircuit}
    Let $\mathscr{U}_{k, \ell, D} \subseteq \U(2^N)$ be the set of $N$-qubit unitaries that can be generated by a brickwork circuit on $\ell$ $k$-qubit unitaries with depth $D$. For any $0 < \epsilon \leq 1$, the $\epsilon$-covering number of unitary channels that corresponds to unitaries in $\mathscr{U}_{k, \ell, D}$ in diamond norm distance satisfies
    $$
    \left|\mathscr{N}(\mathscr{U}_{k, \ell, D}, \|\cdot\|_{\diamond}, \epsilon)\right| \leq \left[   \left( \frac{eN}{k} \right)^k \left( \frac{12\ell}{\epsilon} \right)^{2^{2k+1}} \right]^{\ell}.
    $$
\end{lemma}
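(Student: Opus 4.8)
The lemma bounds the covering number of the set $\mathscr{U}_{k,\ell,D}$ of unitaries realizable by a brickwork circuit with $\ell$ gates, each $k$-local on $N$ qubits. The final bound factorizes as a product over the $\ell$ gates, where each factor splits into two pieces: a combinatorial term $\left(eN/k\right)^k$ counting the possible placements of a single $k$-local gate among $N$ qubits, and an analytic term $\left(12\ell/\epsilon\right)^{2^{2k+1}}$ that is exactly the covering number (from Lemma \ref{lemma:CoveringNumberOfUnitaryGroup}) of the $2^k$-dimensional unitary group $\U(2^k)$ at resolution $\epsilon/\ell$. This immediately tells me the exponent $2d^2$ from Lemma \ref{lemma:CoveringNumberOfUnitaryGroup} becomes $2(2^k)^2 = 2^{2k+1}$ upon setting $d = 2^k$, and the resolution must be chosen as $\epsilon/\ell$ to absorb the error-propagation factor $\ell$ from Lemma \ref{lemma:errorPropagation}.

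**The plan.**

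First I would describe any element of $\mathscr{U}_{k,\ell,D}$ by its \emph{data}: the geometric layout (which subset $q_j \subseteq [N]$ of $k$ qubits each gate $G_j$ acts on, respecting the connectivity constraint of Definition \ref{def:brickwork_circuit}) together with the choice of local unitary $G_j \in \U(2^k)$ for each $j \in [\ell]$. The strategy is to build an $\epsilon$-net for the whole set by building a net \emph{gate-by-gate}: fix the layout, and for each of the $\ell$ gates replace its local unitary $G_j$ by a nearby representative $\widetilde{G}_j$ drawn from an $(\epsilon/\ell)$-net of $\U(2^k)$. Second, I would invoke Lemma \ref{lemma:CoveringNumberOfUnitaryGroup} with $d=2^k$ and resolution $\epsilon/\ell$ to see that each such local net has cardinality at most $\left(12\ell/\epsilon\right)^{2^{2k+1}}$. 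Third, I would bound the number of admissible layouts: each gate's support is a size-$k$ subset of $[N]$, so there are at most $\binom{N}{k} \leq (eN/k)^k$ choices per gate, giving $(eN/k)^{k\ell}$ layouts in total after taking the product over the $\ell$ gates.

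**Controlling the error.**

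The analytic heart is showing that replacing each $G_j$ by its representative keeps the global unitary channel within $\epsilon$ in diamond distance. Since the diamond distance between the $k$-local channel $\mathcal{G}_j$ and $\widetilde{\mathcal{G}}_j$ (tensored with identity on the remaining qubits) equals the diamond distance of the local channels, each local substitution costs at most $\epsilon/\ell$. Then I would apply the error-propagation bound of Lemma \ref{lemma:errorPropagation}: the total diamond error of the product circuit accumulates additively, so $\ell$ substitutions each of size $\epsilon/\ell$ yield total error at most $\ell \cdot (\epsilon/\ell) = \epsilon$. This certifies that the collection of all (layout, local-net-representatives) tuples forms a genuine $\epsilon$-net of $\mathscr{U}_{k,\ell,D}$.

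**Putting it together and the main obstacle.**

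Finally, the covering number is at most the total number of such tuples, which is the number of layouts times the number of local-unitary assignments, namely
$$
\left|\mathscr{N}(\mathscr{U}_{k, \ell, D}, \|\cdot\|_{\diamond}, \epsilon)\right| \leq \left[ \left(\frac{eN}{k}\right)^k \left(\frac{12\ell}{\epsilon}\right)^{2^{2k+1}} \right]^{\ell},
$$
since both the layout count and the local-net count are products of $\ell$ identical per-gate factors. The routine pieces are the binomial estimate $\binom{N}{k}\leq(eN/k)^k$ and the substitution $d=2^k$ into Lemma \ref{lemma:CoveringNumberOfUnitaryGroup}. I expect the main subtlety to be the error-propagation step: I must argue carefully that a brickwork circuit is literally a composition of $\ell$ channels each differing from its approximant only on its $k$-qubit support, so that the ``modified'' version of Lemma \ref{lemma:errorPropagation} applies term-by-term and the $\ell$ errors add rather than compound multiplicatively. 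A secondary point to handle cleanly is that we may fix one admissible layout per target unitary without double-counting across layouts, so that the net cardinality is genuinely the \emph{product} (not a larger sum-of-products) of the two per-gate contributions.
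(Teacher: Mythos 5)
Your proposal is correct and follows essentially the same route as the paper's proof: the paper likewise writes each element of $\mathscr{U}_{k,\ell,D}$ as a product $\prod_{j=1}^{\ell} G_j^{q_j}$, covers each local gate with an $(\epsilon/\ell)$-net of $\U(2^k)$ via Lemma \ref{lemma:CoveringNumberOfUnitaryGroup}, counts gate placements per gate by $\binom{N}{k} \leq (eN/k)^k$, and adds the $\ell$ local errors via Lemma \ref{lemma:errorPropagation} to certify the global $\epsilon$-net. The only cosmetic difference is that the paper phrases the placement count through the connectivity set $\mathscr{Q}(k,C)$ and then bounds it by the all-to-all case $C = K_N$, which is exactly your binomial estimate.
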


\begin{proof}
    Recall Definition~\ref{def:brickwork_circuit}. The set of unitaries in $\mathscr{U}_{k, \ell, D}$ can be written explicitly as
    $$
    \mathscr{U}_{k, \ell, D} = \left\{ \prod_{r=1}^{D} \prod_{j \in \mathcal{L}_r } G_j^{q_j}: \quad G_j \in \U(2^k),~\sum_{r=1}^D |\mathcal{L}_r| = \ell,~|q_j| = k  \right\},
    $$
    where we use the superscript $q_j$ to indicate that $G_j$ acts on the qubits in $q_j$ and suppress the identity on other idle qubits. The proof idea resembles that of \cite[Theorem 8]{Zhao_2024}, as we can rewrite the product operation $\prod_{r=1}^{D} \prod_{j \in \mathcal{L}_r } G_j^{q_j} = \prod_{j=1}^{\ell} G_j^{q_j}$ with a proper indexing of the gates. Using the error propagation relation in Fact~\ref{fact:errorPropagation} together with the union bound. Let $\mathscr{Q}(k, C) = \left\{  
     q \subseteq [N]:\, |q| = k,C[q] \text{ is connected}
     \right\}$ denote the valid $k$-qubit supports on which the $k$-local gates may act. When the circuit is all-to-all (\ie, the connectivity graph $C = K_N$, the $N$-complete graph), there are $\binom{N}{k}$ pairs of qubits that each local unitary gate can act on. Therefore, the cardinality of the covering net $\mathscr{N}(\mathscr{U}_{k, \ell, D}, \|\cdot\|, \epsilon)$ yields an upper bound
    $$
    \begin{aligned}
    \left|\mathscr{N}(\mathscr{U}_{k, \ell, D}, \|\cdot\|_{\diamond}, \epsilon)\right| &\leq \left[ \left|\mathscr{Q}(k, C) \times  \mathscr{N}(\U(2^k), \|\cdot\|_{\diamond}, \epsilon / \ell)\right| \right]^{\ell} \\
    &\leq \left[ |\mathscr{Q}(k, K_N) |\left|\mathscr{N}(\U(2^k), \|\cdot\|_{\diamond}, \epsilon / \ell)\right| \right]^{\ell} \\
    &= \left[ \binom{N}{k} \left|\mathscr{N}(\U(2^k), \|\cdot\|_{\diamond}, \epsilon / \ell)\right| \right]^{\ell}.
    \end{aligned}
    $$
    The result follows immediately via the handy inequality $\binom{N}{k} \leq \left(\frac{eN}{k}\right)^k$ and Lemma~\ref{lemma:CoveringNumberOfUnitaryGroup}.
\end{proof}
We can readily obtain the following upper bound for the program cost.
\begin{theorem}
\label{thm:DimensionUpperbound}
    Programming a brickwork quantum circuit on $N$ qubits with depth $D$ and $\ell$ $k$-local gates up to error $\epsilon \in (0, 1]$ in diamond norm distance requires a quantum processor with program cost $c_P$ that satisfies
    \begin{equation}
    \label{eqn:ProgramCost}
    c_P \leq k \ell \log_2 \left(\frac{eN}{k} \right) + 2^{2k+1} \ell \log_2 \left(\frac{12\ell}{\epsilon}\right) .
    \end{equation}
\end{theorem}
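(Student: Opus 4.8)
The plan is to realize the program cost upper bound through an explicit ``lookup-table'' processor whose program register merely stores a classical index into an $\epsilon$-net of the reachable unitaries, so that the program dimension is bounded directly by the covering number supplied by Lemma \ref{lemma:EpsilonNetForBrickworkCircuit}. This matches the intuition flagged before Definition \ref{def:MetricSpace}: a memory large enough to hold a sufficiently fine discretization of the reachable set already suffices for approximate programmability.

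First I would fix an $\epsilon$-net $\mathscr{N} = \mathscr{N}(\mathscr{U}_{k, \ell, D}, \|\cdot\|_{\diamond}, \epsilon)$ of the reachable set $\mathscr{U}_{k, \ell, D} \subseteq \U(2^N)$ and enumerate its points as $\{U_1, \dots, U_M\}$ with $M = |\mathscr{N}|$. I would then take the program space $\mathcal{H}_P \cong \mathbb{C}^M$ and assign to each target $U \in \mathscr{U}_{k, \ell, D}$ the program state $\psi_{P, U} = \ket{i}\bra{i}$, where $U_i$ is a net point satisfying $\|\mathcal{U} - \mathcal{U}_i\|_{\diamond} \leq \epsilon$, which exists by Definition \ref{def:MetricSpace}. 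Since all these program states are mutually orthogonal computational-basis states, the subspace $\overline{\Span}\{\psi_{P, U} : U \in \mathscr{U}_{k, \ell, D}\}$ has dimension at most $M$, and hence $d_P \leq M$.

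Next I would define the $U$-independent processor channel $\mathcal{C} \in \cptp(\mathcal{H} \otimes \mathcal{H}_P)$ as the controlled unitary $\mathcal{C}(\rho \otimes \ket{i}\bra{i}) = \mathcal{U}_i(\rho) \otimes \ket{i}\bra{i}$, so that the induced output channel is $\mathcal{E}_U(\cdot) = \Tr_{\mathcal{H}_P}[\mathcal{C}(\cdot \otimes \psi_{P, U})] = \mathcal{U}_i$. The net property then gives $\|\mathcal{U} - \mathcal{E}_U\|_{\diamond} = \|\mathcal{U} - \mathcal{U}_i\|_{\diamond} \leq \epsilon$ for every target $U$, so $(\mathcal{C}, \{\psi_{P, U}\})$ is a valid processor programming $\mathscr{U}_{k, \ell, D}$ to error $\epsilon$ in diamond norm distance, consistent with Definition \ref{def:UniversalProcessor}. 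Consequently $c_P = \log_2 d_P \leq \log_2 M$. Substituting the covering-number bound from Lemma \ref{lemma:EpsilonNetForBrickworkCircuit} and distributing the logarithm gives
\begin{align*}
c_P \leq \log_2 \left|\mathscr{N}(\mathscr{U}_{k, \ell, D}, \|\cdot\|_{\diamond}, \epsilon)\right| &\leq \ell \log_2 \left[ \left(\frac{eN}{k}\right)^k \left(\frac{12\ell}{\epsilon}\right)^{2^{2k+1}} \right] \\
&= k\ell \log_2\left(\frac{eN}{k}\right) + 2^{2k+1}\ell \log_2\left(\frac{12\ell}{\epsilon}\right),
\end{align*}
which is precisely the claimed bound in Equation \ref{eqn:ProgramCost}.

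The computation is routine, so I expect no serious obstacle. The only points requiring genuine care are structural rather than analytic: verifying that the processor channel $\mathcal{C}$ is truly independent of the target $U$ (all the $U$-dependence must live in the program state $\psi_{P, U}$, which it does, since $\mathcal{C}$ is the fixed controlled unitary), and confirming that encoding a \emph{classical} index into orthogonal basis states keeps $d_P \leq M$ with no additional overhead for describing the net itself. Once these bookkeeping facts are in place, the bound follows immediately from Lemma \ref{lemma:EpsilonNetForBrickworkCircuit}.
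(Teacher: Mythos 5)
Your proposal is correct and follows essentially the same route as the paper: an explicit lookup-table processor whose program states are orthogonal basis states indexing points of the $\epsilon$-net from Lemma \ref{lemma:EpsilonNetForBrickworkCircuit}, so that $d_P \leq |\mathscr{N}(\mathscr{U}_{k, \ell, D}, \|\cdot\|_{\diamond}, \epsilon)|$ and the bound follows by taking logarithms. The paper writes the same channel in a postselection form, $\mathcal{C}(\rho \otimes \psi_{P,U}) = \sum_j \bra{j}\psi_{P,U}\ket{j}\,\mathcal{U}_j(\rho)$, which coincides with your controlled-unitary construction on the chosen program states.
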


\begin{proof}
     By definition, it suffices to construct an $\epsilon$-universal processor $(\mathcal{C}, \{\psi_{P, U}\}_{U \in \mathscr{U}_{k, \ell, D}})$ that can coherently program the brickwork circuit unitaries from the $\epsilon$-net $\mathscr{N}(\mathscr{U}_{k, \ell, D}, \|\cdot\|_{\diamond}, \epsilon) = \left\{ U_1, \dots, U_{M} \right\}$. The processor can be explicitly constructed via postselection:
     \begin{equation}
     \label{eqn:ProcessorConstruction}
     \begin{aligned}
         \forall \, \rho \in \Dens(\ml{H}),~\mathcal{C}(\rho \otimes \psi_{P, U}) &= \sum_{j=1}^M \bra{j} \psi_{P, U} \ket{j} \cdot \mathcal{U}_j(\rho), \\
         \quad \psi_{P, U} &= \ket{t} \bra{t}:~\left\| \mathcal{U}_t - \mathcal{U} \right\|_{\diamond} \leq \epsilon.
     \end{aligned}
     \end{equation}
     The set of desired programming states $\{\psi_{P, U}\}_{U \in \mathscr{U}_{k, \ell, D}}$ exists due to the definition of the $\epsilon$-net. Finally, in the sense of $\epsilon$-approximate programmability, the program dimension for the above scheme can be bounded from above by
     $$
     \begin{aligned}
         d_P &= \dim \left( \Span\left\{ \psi_{P, U}~|~U \in \mathscr{N}(\mathscr{U}_{k, \ell, D}, \|\cdot\|_{\diamond}, \epsilon) \right\}  \right) \\
         &\leq \left|\mathscr{N}(\mathscr{U}_{k, \ell, D}, \|\cdot\|_{\diamond}, \epsilon)\right|.
     \end{aligned}
     $$
     Taking the logarithm on both sides of the inequality, the proof is completed with the upper bound presented in Lemma~\ref{lemma:EpsilonNetForBrickworkCircuit}.
\end{proof}

\begin{remark}
    With the constraint $\ell \leq \frac{ND}{k}$, if we set the asymptotics of the parameters identical to those in \autoref{thm:DimensionLowerbound}, then $c_P \leq ND \log_2\left(\frac{eN}{k}\right) + \frac{2^{2k+1}}{k} ND \log_2 \left(\frac{12ND}{k\epsilon}\right) = \mathcal{O}\left(N \polylog N\right)$. Combining the lower bound presented in \autoref{thm:DimensionLowerbound}, we obtain a tight characterization of the asymptotic program cost for programming unitaries in $\mathscr{U}_{k, \ell ,D}$ when $D \sim \polylog N$, that is, $c_P = \Theta\left(N \polylog N \right)$. Although learning a low-depth brickwork circuit in $\mathscr{U}_{k, \ell, D}$ can be exponentially hard in the worst case \cite[Theorem 3]{HuangLearnShallowCircuit2024}, programming it can be efficient with prior knowledge of the circuit architecture.
\end{remark}

\subsection{Trade-off between circuit architecture complexity and local gate program cost}
\label{sec:TradeOff}

Although the upper bound presented in Section~\ref{subsec:CostUpperbound} is a coarse estimation, it suggests the following fact: The program state of a unitary generated by a quantum circuit encodes information not only of the parameters of the local unitary gates, but also how the gates are applied among the registers legitimately [cf. Definition~\ref{def:brickwork_circuit}]. 

Assume that the location $(r: j \in \mathcal{L}_r;~q_j)$ about each unitary gate $G_j$ is encoded in a bit-string $\mathsf{L}_j \in \{0, 1\}^{m}$, where $m$ is a constant dependent on the circuit architecture\footnote{
For instance, we can take $m$ such that $2^m$ is greater than or equal to the number of valid pairs of qubits on any layer of the circuit, where the $k$-local gates can be applied. Note that the argument applies to quantum brickwork circuits with any geometry.
}, and the sender (the user who prepares the program state) and receiver (the quantum processor) have reached a consensus on the circuit architecture (the connectivity graph, circuit depth, gate count, \etc). At such, a candidate of the program state is given by the following tensor product state:
$$  
\psi_{P, U} = \bigotimes_{j=1}^{\ell} \left(\ket{\mathsf{L}_j} \bra{\mathsf{L}_j} \otimes \psi_{P, G_j} \right).
$$
While the state does not yield the most compact form, its expression reveals an underlying regularity: When the local gates become larger, the program cost of each gate increases, while there are fewer gates allowed to be placed in the circuit; thus, both the gate number $\ell$ and the bit-string length $|\mathsf{L}_j|$ decrease. This highlights a trade-off in program cost between describing the circuit architecture and storing individual program states of local gates.

\begin{figure}[t!]
  \centering
  \subfloat[Primitive 1D circuit]
  {\includegraphics[width = 0.47\textwidth]{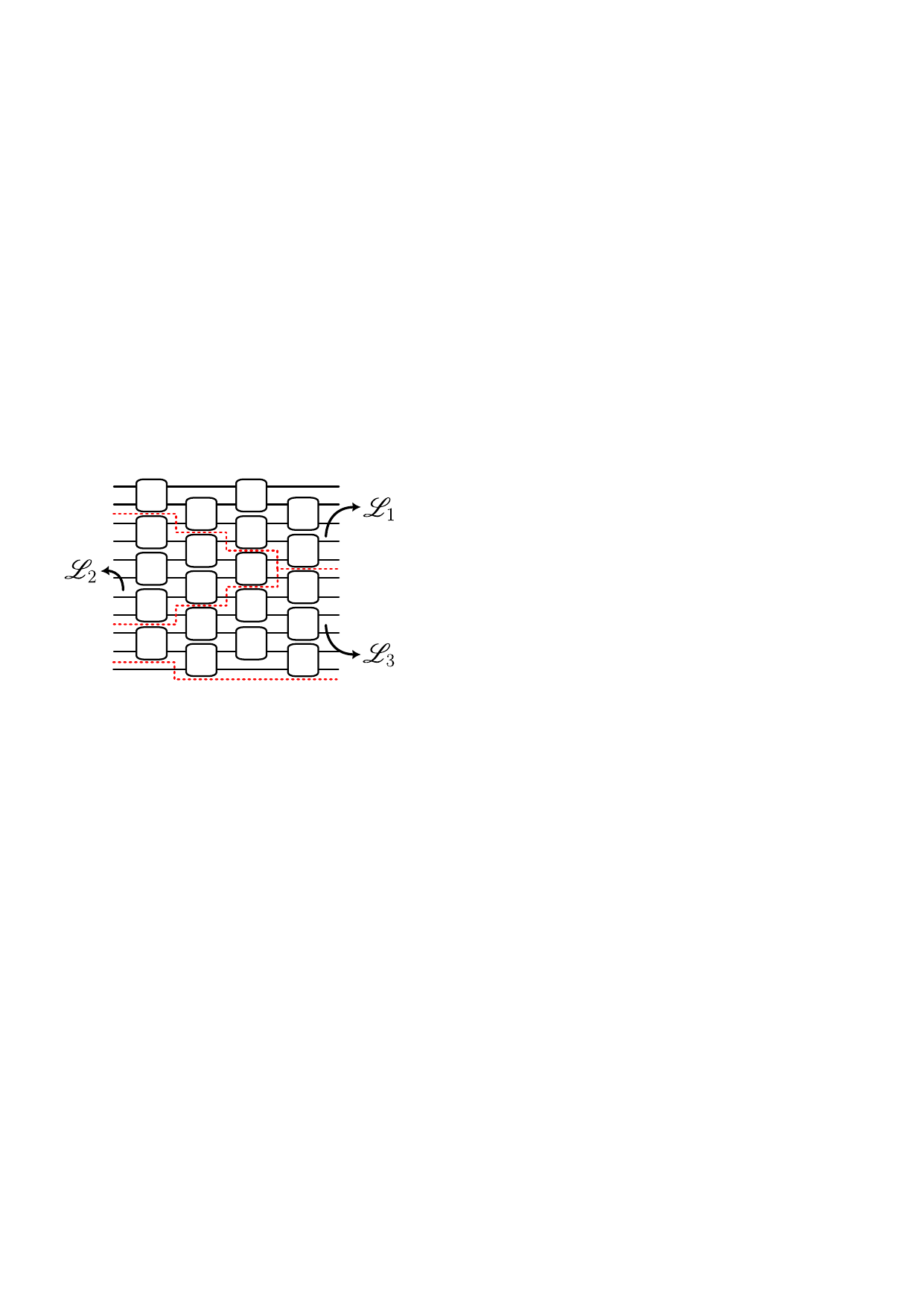}\label{fig:1Dbefore}} \hfill
  \subfloat[Reduced 1D circuit] 
  {\includegraphics[width = 0.47\textwidth]{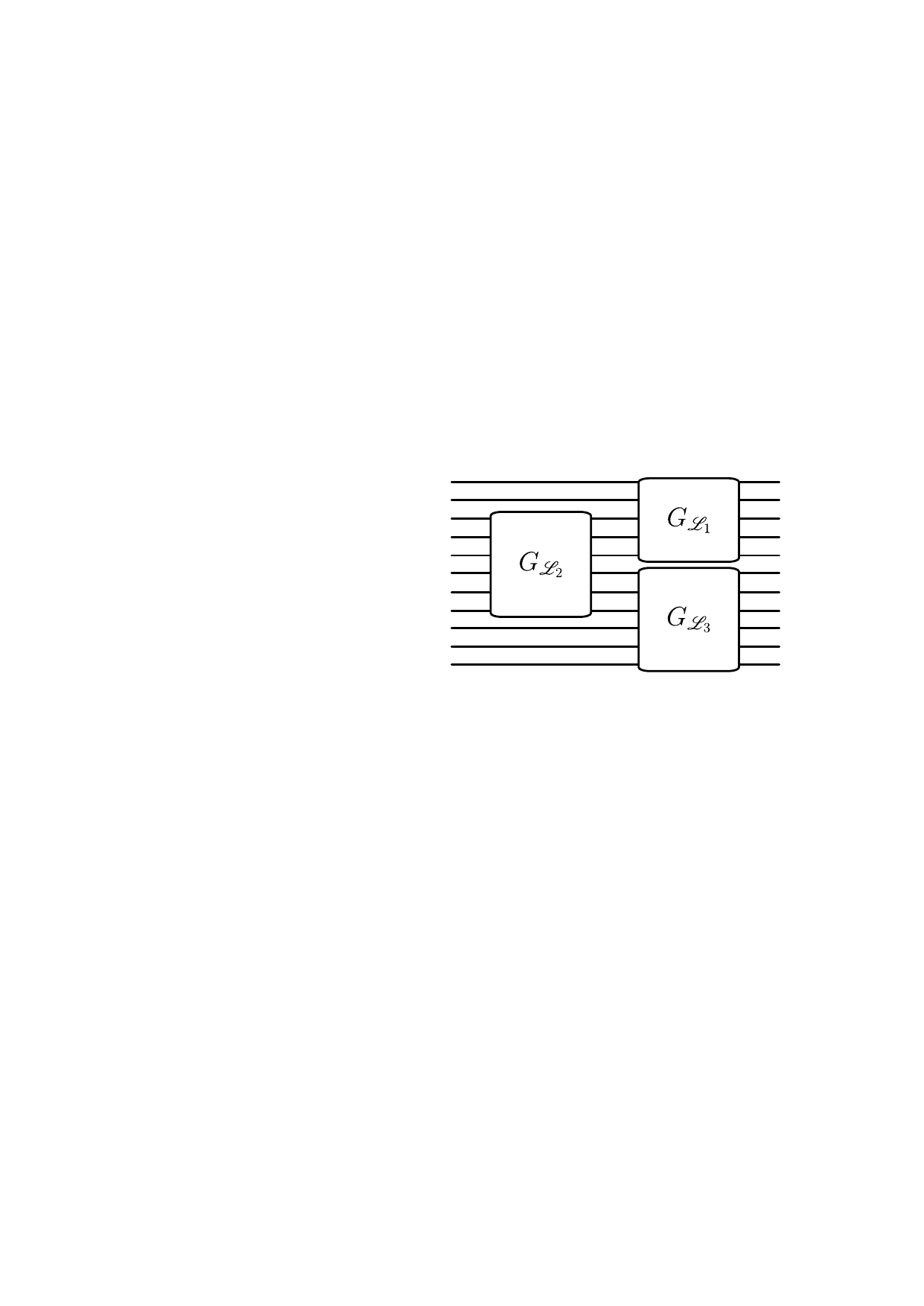}\label{fig:1Dafter}} \hfill 
  \subfloat[Primitive 2D circuit]
  {\includegraphics[width = 0.47\textwidth]{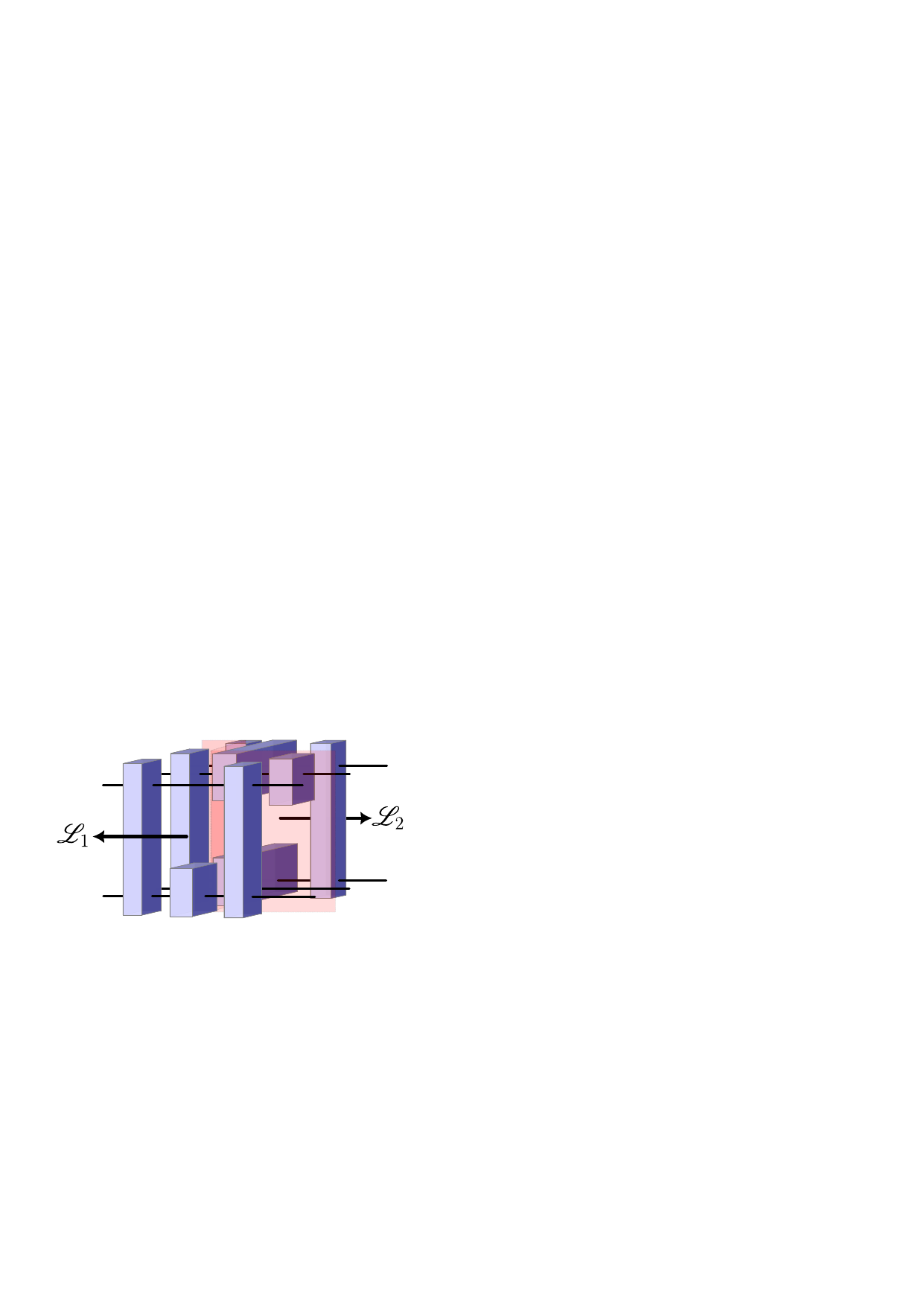}\label{fig:2Dbefore}} \hfill 
  \subfloat[Reduced 2D circuit]
  {\includegraphics[width = 0.47\textwidth]{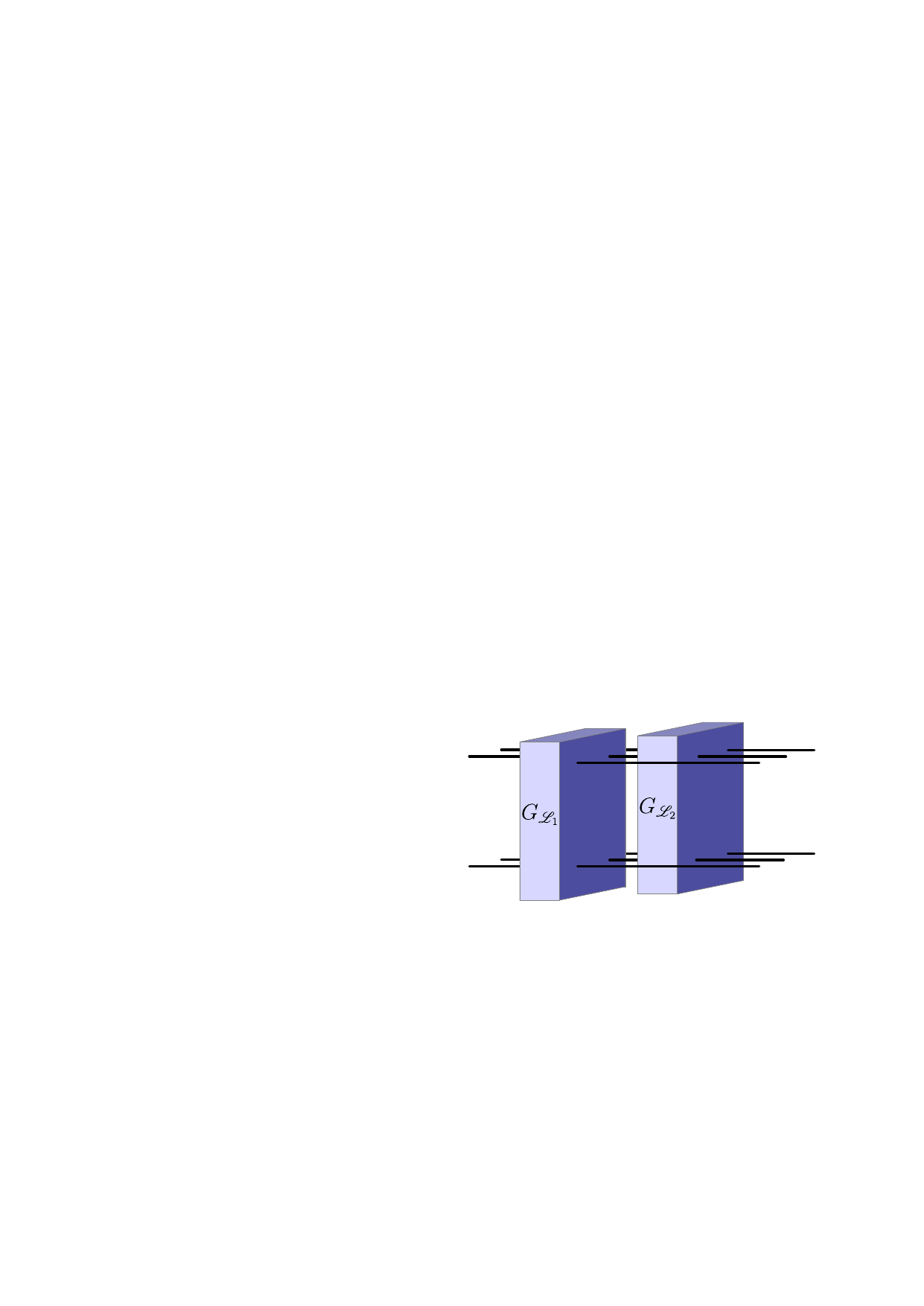}\label{fig:2Dafter}} \hfill
  \caption{Examples of light-cone reduction of 1D and 2D 2-local brickwork circuits.}
  \label{fig:LightconeReduction}
\end{figure}

\par How can this flexibility be utilized, given that the gates have fixed dimensions? We invoke the well-applied light-cone argument for general quantum circuits \cite{Haferkamp2022, HuangLearnShallowCircuit2024, Nadimpalli_2024, Yang_2025}, which indicates that the local gates can be grouped into non-intersecting light-cones without violating their relative order of implementation. Each group of small unitaries is combined into a larger unitary, while the layout of these resulting unitaries can be significantly simpler. To specify, the light-cones are defined by their first layer of gates. Starting from the first layer, each subsequent layer is constructed inductively by including the set of gates that act on the qubits affected by the gates in the previous layer. The remaining separated blocks between the forward light-cones are also grouped to form the backward light-cones.
A schematic illustration of this reduction is provided in \autoref{fig:LightconeReduction}. 

\par One would naturally ask whether the reduction help reduce the program cost of brickwork circuits. To address this, we analyze the cost bounds in two scenarios: when the local unitaries are either generic or have a specific structure.

\subsection{Reduction of circuits with general local unitaries}
\label{subsection:LightConeReductionGeneralUnitary}

Assume that each light-cone yields depth $W$, and we compose the gates in light-cone $\mathscr{L}$ to form the light-cone gate $G_{\mathscr{L}}$. The circuit is thus reduced to $\lceil\frac{D}{W}\rceil$ layers. Suppose that the gates are interlaced so that the width of each light cone (\ie, the number of qubits it occupies) scales as \(\Theta(W)\). After reduction, the circuit contains $\frac{D}{W} \cdot \Theta(\frac{N}{W}) = \Theta(\frac{ND}{W^2})$ gates. In comparison, the primitive circuit contains $\Theta(ND)$ gates, each being $\mathcal{O}(1)$-local. Let $c_P$ and $c_P^r$ denote the $\epsilon$-universal program costs of the primitive and reduced all-to-all circuits, respectively. Applying \autoref{thm:DimensionUpperbound} gives
$$
\begin{aligned}
c_P &\lesssim  ND \log_2 N + ND \log_2 \left( \frac{ND}{\epsilon} \right), \\
c_P^r &\lesssim \frac{ND}{W} \log_2 \left( \frac{N}{W} \right) + 2^{\Theta(W)} \frac{ND}{W^2} \log_2 \left( \frac{ND}{W^2\epsilon} \right).
\end{aligned}
$$
In the large $N$ regime, the second terms in the bounds above dominate asymptotically. To ensure that the reduced quantum circuit is less costly, \ie, $c_P^r = o( c_P)$, one forces
$$
   \frac{2^{\Theta(W)}}{W^2} \log_2 \left( \frac{ND}{W^2 \epsilon} \right) = o \left(\log_2\left( \frac{ND}{\epsilon} \right) \right) \quad \implies \quad \epsilon = \omega\left( \frac{ND}{ W^{ \frac{2}{1 - W^2 / 2^{\Theta(W)} } } } \right).
$$
To validate the bound with at most constant error, we require that $W^{ \frac{2}{1 - W^2 / 2^{\Theta(W)} } } = \omega( ND)$. Since the quotient $W^2 / 2^{\Theta(W)}$ is vanishing, for sufficiently large $W$, $W^{ \frac{2}{1 - W^2 / 2^{\Theta(W)} } } \leq W^{2 + \varsigma}$ for some $\varsigma < 1$. Thus, the previous condition gives $W = \omega\left( (ND)^{\frac{1}{2 + \varsigma}}\right)$. Under the low-depth circuit assumption [cf. Definition~\ref{def:brickwork_circuit}], $N = \omega( D^{z})$ for any $z \in \mathbb{N}$. Substitute this into the lower bound, $W = \omega( D^{\frac{z + 1}{2 + \varsigma}})$, contradicting the fundamental constraint $W \leq D$ when $z \geq 2$ even in the shallow-circuit setting $D = \mathcal{O}(1)$ \cite{HuangLearnShallowCircuit2024}. 
\par We hereby conclude that the reduction does not save the asymptotic cost of storing the program state in the general case\footnote{
The light-cone argument is also invalidated in bounding the computational power of $\polylog N$-depth quantum circuits \cite{Nadimpalli_2024, QAC0SuperLinearAncillae}.
}. This is not a surprising result, as unitaries are continuous objects with an exponentially large number of free parameters. In contrast, the layout of the quantum gates within the circuit is a discrete object that yields simple and compact descriptions. In more general cases, the width of the light-cones can grow exponentially in their depth \cite[Definition 10]{HuangLearnShallowCircuit2024}, making it even more costly to program the light-cone unitaries. Furthermore, the analysis above ensures that the bound presented in \autoref{thm:DimensionUpperbound} is majorly optimal, without any prior assumption on the local unitaries.

\subsection{
Light-cone argument reduces the cost for structured local unitaries
}
Although the discussion provided in Section~\ref{subsection:LightConeReductionGeneralUnitary} shows that for programming brickwork circuits with generic local unitaries, the light-cone argument does not reduce program cost for us, there are certainly examples where it works. By relaxing the requirement of universality and restricting attention to a specific family of unitary operators that admit an efficient parametrization, the light-cone unitary can be described using a set of parameters whose growth rate is lower than that of the parameters defining the local unitaries it contains. Mathematically, the set $\mathsf{P}_{k} \subseteq \mathbb{R}^{2^{2k}}$ depicts the free real parameters of those $k$-local unitaries, and similarly $\mathsf{P}_{\mathscr{L}}$ depicts that of a single light-cone unitary $G_{\mathscr{L}}$\footnote{
Note that this notion is ad hoc, but we can instead treat $\mathsf{P}_{\mathscr{L}}$ as a subset of $\mathsf{P}_k^{|\mathscr{L}|}$.
}. Denote the program cost of the local and light-cone unitaries as $c_P(\mathsf{P}_k)$ and $c_P(\mathsf{P}_{\mathscr{L}})$ respectively. Suppose a brickwork circuit on $\ell$ gates is reduced to light-cones $\left\{ \mathscr{L}_1, \dots, \mathscr{L}_h \right\}$. The asymptotic program cost of the circuit is reduced when the following condition is satisfied:
\begin{equation}
\label{eqn:CostReductionCondition}
\sum_{j=1}^h c_P(\mathsf{P}_{\mathscr{L}_j}) = o \left( \ell \cdot c_P(\mathsf{P}_k)  \right).
\end{equation}
For intuitiveness, we provide a concrete example\footnote{
We believe that these examples are sparse within the universe of parameterized unitaries.
} herein.

\begin{example}
\label{example}
\label{example:program_cost_reduction_example}
    Define the following set of unitaries on $k$ qubits:
    $$
    \mathscr{U}_{k, P} = \left\{ 
    e^{\mathrm{i} \theta \prod_{j=1}^{k} P^{q_j} }~\bigg|~ \theta \in [0, 2\pi),~ q \subset [N],~ |q| = k
    \right\}, \quad P \in \{X, Y, Z\}.
    $$
    We restrict the brickwork circuit of interest to be chosen from the set $\langle\mathscr{U}_{k, P}\rangle$, \ie, $\mathscr{U}_{k, P}$ serves as its generating set, and assume $C = K_N$. Still, the program cost originates from the uncertainty about the local gate type and which set of qubits it applies to. Thus, $\mathsf{P}_k = [0, 2\pi) \times \mathscr{Q}(k, K_N)$. 
    It can be readily verified that $[\prod_{j=1}^k P^{q_j}, \prod_{j=1}^{k} P^{q_j'}] = 0$ for any set of indices $q, q'$. Moreover, we assume that in the light-cone there are $T_{\mathscr{L}} = o(m_{\mathscr{L}})$ distinct values of $q$, expressed as $\{q_1, \dots, q_T\} \subseteq \mathscr{Q}(k, K_N)$. Therefore, if $\mathscr{L} = \{ G_1, \dots, G_{m_{\mathscr{L}}}\}$ where $G_t = e^{\mathrm{i} \theta_t \prod_{j=1}^{k} P^{q_{t, j}} }$, the light-cone gate can be expressed as
    $$
    G_{\mathscr{L}} = e^{\mathrm{i} \left( \sum_{t = 1}^{m_{\mathscr{L}}} \theta_t \prod_{j=1}^{k} P^{q_{t, j}}  \right)
    }.
    $$
    To perform an $\epsilon$-approximate programming of each local unitary gate, we build an $\epsilon$-net $\mathscr{N}([0, 2\pi), |\cdot|, \epsilon)$ for $[0, 2\pi)$. For any $\theta \in [0, 2\pi)$, there exists an angle $\theta_{\epsilon} \in \mathscr{N}([0, 2\pi), |\cdot|, \epsilon)$ such that $|\theta - \theta_{\epsilon}| \leq \epsilon$. Similar to Equation~\ref{eqn:ProcessorConstruction}, the processor is constructed as $$
    \begin{aligned}
   \mathcal{C}(\rho \otimes \psi_{P, G_t})(\cdot) &= \sum_{ \hat \theta \in \mathscr{N}([0, 2\pi), |\cdot|, \epsilon)  } \sum_{ q \in \mathscr{Q}(k, C) } \bra{\hat \theta, q} \psi_{P, G_{t}} \ket{\hat \theta, q} \cdot e^{\mathrm{i} \hat \theta \prod_{j=1}^{k} P^{q_j} } (\cdot) e^{-\mathrm{i} \hat \theta \prod_{j=1}^{k} P^{q_j} }, \\
   \psi_{P, G_t} &= \ket{\widetilde{\theta}_t, q_t} \bra{\widetilde{\theta}_t, q_t}:~|\widetilde{\theta}_t - \theta_t| \leq \epsilon.
   \end{aligned}
    $$ Using {\cite[Example 5]{aleksandrov2016operatorlipschitzfunctionsenglish}}, it follows that
    \begin{equation}
    \label{eqn:PhaseGateError}
    \begin{aligned}
    \frac{1}{2}\left\|
    \mathcal{C}(\rho \otimes \psi_{P, G_t}) - \mathcal{G}_t
    \right\|_{\diamond} &\leq  \left\| e^{\mathrm{i} \widetilde{\theta}_t \prod_{j=1}^{k} P^{q_{t, j}} } - e^{\mathrm{i} \theta_t \prod_{j=1}^{k} P^{q_{t, j}} }  \right\| \\
    &\leq |\widetilde{\theta}_t - \theta_t| \left\| e^{\mathrm{i} \prod_{j=1}^{k} P^{q_{t, j}} } \right\| \\
    &\leq \epsilon.
    \end{aligned}
    \end{equation}
    For the light-cone gate $G_{\mathscr{L}}$, we can rewrite its expression by our assumption, by $$e^{\mathrm{i} \left( \sum_{t = 1}^{m_{\mathscr{L}}} \theta_t \prod_{j=1}^{k} P^{q_{t, j}}  \right)
    } = e^{\mathrm{i} \left( \sum_{r=1}^{T_{\mathscr{L}}} \left(\sum_{t: q_t = q_r} \theta_t \bmod 2\pi \right) \prod_{j=1}^{k} P^{q_{r, j}} \right)},
    $$, thereby reducing the number of free parameters. Assume that $G_{\mathscr{L}}$ acts on $k_{\mathscr{L}}$ qubits and depends on $T_{\mathscr{L}}$ angles. Accordingly, we can write $\mathsf{P}_{\mathscr{L}} = [0, 2\pi)^{T_{\mathcal{L}}} \times \mathscr{Q}(k_{\mathscr{L}}, N)$. Still, we assume that the circuit is sufficiently dense with $\ell \sim ND$, $m_{\mathscr{L}_j} \sim W^2$, $h \sim \frac{ND}{W^2}$, and $k_{\mathscr{L}_j} \sim W$ for any $j \in [h]$. To ensure that the circuit is $\epsilon$-universally programmed, the local unitaries are $\epsilon / \ell$-approximate, while the light-cone unitaries are $\epsilon/h$-approximate. Specifically, the free angles for local unitaries and the light-cone unitaries are to be approximated to error $\epsilon / \ell$ and $\epsilon / h T_{\mathscr{L}_j}$, in light of Equation~\ref{eqn:PhaseGateError}. Therefore, the primitive circuit and the reduced circuit can be bounded from above by
    $$
    \begin{aligned}
    \sum_{j=1}^{h} c_P(\mathsf{P}_{\mathscr{L}_j}) &\lesssim \sum_{j=1}^h T_{\mathscr{L}_j} \log_2  \left( \frac{2\pi hT_{\mathscr{L}_j}}{\epsilon} \right) + T_{\mathscr{L}_j} k_{\mathscr{L}_j} \log_2 \left(\frac{eN}{k_{\mathscr{L}_j}} \right), \\
    \ell \cdot c_P(\mathsf{P}_k) &\lesssim \ell \cdot \left(\log_2\left( \frac{2\pi\ell}{\epsilon} \right) + k \log_2 \left(\frac{eN}{k} \right)  \right).
    \end{aligned}
    $$
    One can readily verify that the condition presented in Equation~\ref{eqn:CostReductionCondition} is satisfied. 
    Therefore, we arrive at a processor that fully encodes not only the gate parameters but also the ``light-cone-reduced'' circuit architecture, operating at a lower overall program cost.
\end{example}

\subsection{Going beyond light-cone: The landscape of reducing the program cost for parameterized quantum circuits}
The failure to reduce program cost via the light-cone argument reveals a fundamental difference between the parameterization of a quantum circuit and that of the distribution it generates. The former grants us full knowledge of the circuit geometry, while the latter is merely a ``shadow'' of the unitary circuit\footnote{
That is, we are only concerned about the parameters revealed by a limited collection of measurements, as per the case in quantifying the capability of quantum circuits in solving decision problems.
} \cite{Kunjummen_2023}. Although the light-cone argument enables us to prove lower bounds on the capability of solving promise problems \cite{Bravyi2018, Nadimpalli_2024} and generating complex distributions \cite{Anshu_2023}, simplifying the circuit layout introduces an increase in complexity inside the light-cone unitaries, as demonstrated in Section~\ref{subsection:LightConeReductionGeneralUnitary}. Consequently, within the programming framework, this simplification does not fundamentally cause the parameter space to degenerate, thus maintaining the program cost.
\par More broadly, one may ask how to optimize more practically relevant NISQ circuit families, such as the Hardware-Efficient Ans\"atze (HEA) \cite{Leone2024practicalusefulness}. While the light-cone argument can be applied to HEA circuits with bounded gate locality (which resembles a $\mathsf{QNC}$ circuit), it does not reduce the parameter count \cite[Section 3.2]{Leone2024practicalusefulness}. For a generic family of parameterized quantum circuits, we present the following proposition, as an extension to Section~\ref{subsec:CostLowerbound} and \ref{subsec:CostUpperbound}. It suggests that finding an exact parameter space of it directly implies a lower and upper bound for the program cost for its $\epsilon$-approximate processor.

\begin{proposition}[Program cost $\approx$ parameter count]
\label{proposition:program_cost_from_parameter_count}
    Let \(\mathscr U_{\mathsf P}=\{U(\bm\theta)\}_{\bm\theta\in\mathsf P}\subseteq\U(d)\) be a unitary family parameterized by a compact subset \(\mathsf P\) of a real vector space \(T\), and suppose that \(\mathsf P\) is contained in a ball of radius \(R\) centered at the origin of \(T\). If the parameter-to-unitary map \(\pi:\mathsf P\to\mathscr U_{\mathsf P}\) is \(L\)-Lipschitz, then for all $0 < \delta < 1 - 4\sqrt{2 \epsilon}$, the program cost of $\epsilon$-approximate programming of the family $\mathscr{U}_{\mathsf{P}}$ satisfies
    $$
    \begin{aligned}
    \left(1 - \delta - 4  \sqrt{2 \epsilon}\right) (\dim T - 1) \log_2 &\left( 1 + \frac{1}{\dim T-1} \left\lceil \frac{\delta}{4 \sqrt{2\epsilon}} \right\rceil \right) - \mathcal{O}(1) \\
    & \leq \log_2 d_P = c_P \leq \dim T \cdot  \log_2 \left( 1 + \frac{\mathcal{O}(RL)}{\epsilon} \right).
    \end{aligned}
    $$
\end{proposition}

\begin{proof}
    The proof idea resembles that of \autoref{thm:DimensionLowerbound} and \autoref{thm:DimensionUpperbound}, with slight modifications. For the lower bound, applying the Schur-Weyl duality on the $n$-fold parameter space yields $T^{\otimes n} \cong \bigoplus_{\lambda \vdash_{\dim T} n } T_{\lambda} \otimes V_{\lambda}$. The dimension $d_n$ in the formulation of the Holevo information is replaced by 
    $$
   d_{n, T} = \sum_{\lambda \vdash_{\dim T} n} \left( \dim T_{\lambda} \right)^2 = \dim \left( \vee^n (T) \right) = \binom{n + \dim T - 1}{\dim T - 1},
    $$
    where $\vee^n T$ is the symmetric subspace of $T^{\otimes n}$. Then, by an analogous processor-reusing argument in \autoref{thm:DimensionLowerbound}, but without the randomness truncation due to circuit depth constraints, we can obtain a lower bound
    $$
    \begin{aligned}
    \log_2 d_P \geq \left( 1 - 4n\sqrt{2 \epsilon} \right) \log_2 d_{n, T} - \mathcal{O}(1) = \left( 1 - 4n\sqrt{2 \epsilon} \right) \log_2 \binom{n + \dim T - 1}{\dim T - 1} - \mathcal{O}(1).
    \end{aligned}
    $$
    Taking $n = \lceil \frac{\delta}{4\sqrt{2\epsilon}} \rceil$, and using the inequality $\binom{a + b}{b} \geq \left( 1 + \frac{a}{b} \right)^{b}$ concludes the proof for the lower bound. As for the upper bound, the Lipschitz condition yields for $\bm{\theta}, \bm{\theta}' \in \mathsf{P}$, 
    $$
   \left\| \mathcal{U}(\bm{\theta}) - \mathcal{U}(\bm{\theta}') \right\|_{\diamond} \leq 2\left\| U(\bm{\theta}) - U(\bm{\theta}') \right\| \leq 2L \left\| \bm{\theta} - \bm{\theta}' \right\|_2.
    $$
    Therefore, by Fact~\ref{fact:errorPropagation}, an $\epsilon'=\epsilon/(2L)$ covering net of $\mathsf{P}$ in the $\ell_2$ norm suffices to achieve $\epsilon$-approximate programmability. By a standard volumetric argument \cite{vershynin2011introductionnonasymptoticanalysisrandom}, if we denote the leading constant $C_x = \frac{\pi^{x/2}}{\Gamma(x/2 + 1)}$, and use $\mathcal{B}(r)$ to denote the radius-$r$ ball centered at the origin, it holds that
    $$
    \begin{aligned}
    &\left|\mathcal{N}(\mathsf{P}, \left\|\cdot\right\|_2, \epsilon')\right| \cdot \textbf{Vol}\left(\mathcal{B}(\epsilon' / 2) \right) \leq \textbf{Vol}\left( \mathcal{B}(R + \epsilon' / 2) \right) \\
    &\quad \implies \left|\mathcal{N}(\mathsf{P}, \left\|\cdot\right\|_2, \epsilon')\right| \leq \frac{C_{\dim T}\left( R + \epsilon' / 2 \right)^{\dim T}}{C_{\dim T}\left( \epsilon' / 2 \right)^{\dim T}} = \left( 1 + \frac{2R}{\epsilon'} \right)^{\dim T} = \left( 1 + \frac{4RL}{\epsilon} \right)^{\dim T}.
    \end{aligned}
    $$
    The upper bound is obtained readily by taking the logarithm on both sides of the inequality, concluding the proof. As a sanity check, this is consistent with Observation~\ref{observation:randomness_implies_program_cost}: a unitary design that matches sufficiently high moments must have a high-dimensional parameter space in which to encode its ``random seed''.
\end{proof}
Proposition~\ref{proposition:program_cost_from_parameter_count} has established the resource equivalence between the program cost and the parameter count of unitary families. Notably, when $\dim T = d^2$ and $L, R = \mathcal{O}(1)$, it recovers the quantitative No-Programming Theorem in \cite{Kubicki2019, YuxiangOptimal2020}. The analysis of optimizing program cost for unitaries generated by quantum circuits of specific structures naturally reduces to optimizing the parameter count for them. However, this does not immediately apply to our analysis of the programmability of $\mathsf{QNC}$ circuits, as the minimal parameterization of a bounded-size quantum circuit is still open\footnote{We conjecture that solving this leads to solving several open problems in circuit lower bound problems, for instance, unconditional quantum advantage of $\mathsf{QNC}^0$ over circuits other than $\mathsf{NC}^0$ \cite{Bravyi2018}, \eg, $\mathsf{NC}^1$.}.
\par We remark that this is a difficult task, as demonstrated by several prior works. In \cite{chia_et_al:LIPIcs.ITCS.2022.47}, the authors define the \textsc{UMCSP} problem: given a complete description of a unitary $U$, decide whether there exists a small (in gate count) circuit that approximately implements $U$. Adapting to the programming task, if we are given access to the description of a generic unitary family $\mathscr{U}_{\mathcal{X}} =\{U_x\}_{x \in \mathcal{X}}$ and a \textsc{UMCSP} oracle, we can decide whether there exists a compact circuit family that approximately generates $\mathscr{U}_{\mathcal{X}}$, yielding a small parameter space, and thus small program cost. In \cite{vandewetering2024optimisingquantumcircuitsgenerally, Wetering_2025}, the authors investigate the hardness of deciding whether the parameters of a given parameterized quantum circuit can be fused to reduce the parameter count. The main result of \cite{chia_et_al:LIPIcs.ITCS.2022.47} shows that \textsc{UMCSP} with a small promise gap is in $\mathsf{QCMA}$ and, assuming the existence of computationally secure cryptography, is not in $\mathsf{BQP}$. Even given a prior guarantee on the circuit being a parameterized circuit, \cite{vandewetering2024optimisingquantumcircuitsgenerally, Wetering_2025} shows that deciding whether the program cost of a parameterized circuit family can be reduced is $\mathsf{NP}$-hard, if no further structural assumptions are made. In both cases, the search-to-decision reduction requires exponential queries to the decision oracle to synthesize an explicit programming scheme. This explains why constructing a general protocol for reducing program cost is intrinsically hard beyond specific examples such as that in Example~\ref{example:program_cost_reduction_example}.

\par Observing these hardness results, and the relations presented in Proposition~\ref{proposition:program_cost_from_parameter_count}, we raise the following question:

\begin{problem}
    What is the hardness of deciding whether the optimal program cost of $\epsilon$-approximate programming of a low-depth circuit unitary family $\{ U_x \}_{x \in \mathcal{X}}$ is beyond or below two gapped thresholds, under different choices of $\epsilon$? Can we synthesize the cost-optimal programming scheme via search-to-decision reduction by solving this decision problem?
\end{problem}

\section{Conclusion}
\subsection{Concluding this work}
In this work, we present a comprehensive analysis of the computational and storage resources required to program the unitaries generated by low-depth brickwork quantum circuits. Notably, the scaling of these requirements for programming generic unitaries is far from optimal in the low-depth regime. We find that programming $N$-qubit low-depth brickwork circuits can be efficient with respect to both circuit depth and program cost. When the circuit depth is polylogarithmic, information-theoretic arguments show that the program-cost scaling $\sim N\polylog N$ is tight for macroscopic $N$, corresponding to a large-scale quantum device. This result paves the way for programmable quantum computers capable of running NISQ algorithms. We further examine whether the conventional light-cone argument can reduce the upper bound for program costs, assuming generic and unstructured local unitaries, and find that gate-wise programming is essentially optimal in the low-depth regime while the light-cone grouping only helps for certain structured circuit families.

The above discovery allows us to rethink the discussion about the performance -- or resource-error trade-off -- equivalence programming $\approx$ metrology $\approx$ learning \cite{YuxiangOptimal2020}, from the setting of universal unitaries to that of low-depth circuit unitaries. By \cite[Theorem 18]{Zhao_2024}, there exists a unitary generated by a $\polylog N$-depth brickwork circuit that can not be learned efficiently unless $\mathsf{RingLWE}$ is polynomial-time solvable, even with non-vanishing error, while programming it is efficient. As suggested by \cite{LearningImpliesLowerbounds2025}, greater difficulty in synthesizing states and unitaries corresponds to greater difficulty in learning them. Informally speaking, the hardness of programming and learning unitaries coincide in the worst and universal case but largely separate when restricted to NISQ circuits. Therefore, the previous conjecture is likely to break down.

\subsection{Future perspectives}

The derived program-cost bounds may be extended to noisy quantum circuits using the Stinespring dilation theorem \cite{Nielsen2012}. However, restricting the resulting dilation to low-depth quantum channels and deriving a corresponding program-cost lower bound are highly nontrivial. For the upper bound, applying the covering/packing argument in our construction, or that in \cite{Kubicki2019}, would suffice. Meanwhile, we establish the lower bound via the randomness-program cost relation [cf. Observation~\ref{observation:randomness_implies_program_cost}], which is not only restricted to low-depth circuits but also applies to quantum circuits of arbitrary geometry. Our proof techniques can also be utilized to recover results in some prior works that have similar settings to unitary programming. For instance, the gate compression protocol \cite{Chiribella_2015} involves an input state holder (Alice) and a gate holder (Bob) who are to apply the gate on the state collaboratively with minimum communication\footnote{
The program cost is essentially a resource metric of quantum communication between the commander and the processor.
}, allowing failure on some randomly chosen state with negligible probability. Our techniques can be utilized to provide the upper and lower bounds for the communication cost, using concentration of the Haar measure under the Schur-Weyl basis (see, \eg, \cite{Chiribella_2015, Chiribella2016}).

Further research on NISQ-circuit programming could include \textbf{(1)} investigating whether succinctly encoding the structural information of the circuit into the program reduces the cost; \textbf{(2)} developing efficient programming schemes for local unitary gates subject to algebraic constraints, such as stabilizer gates \cite{GeometryStablizerStates2014} and locally symmetric unitaries \cite{Marvian2022}. \textbf{(3)} developing programming schemes for noisy NISQ devices, in which the program state and the processor's POVM may be significantly degraded by Markovian or non-Markovian noise \cite{singh2025informationstoragetransmissionmarkovian}. To achieve precision comparable to that of ideal, noiseless programming, explicit error correction and additional unitary queries might be required during the learning phase, thereby increasing both circuit complexity and overall program cost.

\section*{Acknowledgments}
Both the authors contributed substantially to the research presented in this paper and to the preparation of the manuscript.

E.H. thanks Mingnan Zhao and Minglong Qin for insightful discussions on unitary designs. 
This work is supported by the National Natural Science Foundation of China via the Excellent Young Scientists Fund (Hong Kong and Macau) Project 12322516, the National Natural Science Foundation of China (NSFC)/Research Grants Council (RGC) Joint Research Scheme via Project N\_HKU7107/24, the Hong Kong Research Grant Council (RGC) through the General Research Fund (GRF) grant 17303923, and the Guangdong Provincial Quantum Science Strategic Initiative via Project GDZX2403008.

\bibliographystyle{quantum}

{\small \bibliography{Quantum_Accepted/accepted_refs}}

\end{document}